\documentclass[10pt]{article}
\usepackage[utf8]{inputenc}
\usepackage[margin=1in]{geometry}

\usepackage[ruled,vlined]{algorithm2e}

\usepackage[
    backend=biber,
    style=numeric-comp,
    sortcites,
    sorting=none,
    giveninits=true,
    natbib,
    hyperref=false,
    maxbibnames=99,
    doi=false,isbn=false,url=false,eprint=false
]{biblatex}
\addbibresource{ref.bib}

\setlength{\parskip}{0.5em}
\setlength{\lineskiplimit}{-5pt}

\makeatletter
\newcommand{\oset}[2]{%
  {\mathop{#2}\limits^{\vbox to 1.75\ex@{\kern-\tw@\ex@
   \hbox{\scriptsize #1}\vss}}}}
\makeatother

\usepackage{amsmath,amsthm,amssymb,bm}
\usepackage{graphicx}
\usepackage{tikz}
\usepackage{float}
\usetikzlibrary{decorations.pathreplacing}
\usepackage{hyperref}
\hypersetup{colorlinks,citecolor=blue,urlcolor=blue,linkcolor=blue}

\usepackage[font=small]{caption}
\usepackage{subcaption}
\usepackage{enumitem}
\usepackage{my_style}
\usepackage{placeins}

\usepackage{booktabs}
\usepackage{longtable}
\usepackage{array}
\usepackage{multirow}
\usepackage{wrapfig}
\usepackage{float}
\usepackage{colortbl}
\usepackage{pdflscape}
\usepackage{tabu}
\usepackage{threeparttable}
\usepackage{threeparttablex}
\usepackage[normalem]{ulem}
\usepackage{makecell}
\usepackage{xcolor}


\newcommand{\sign}{\operatorname{sign}}
\newcommand{\ici}{\operatorname{cst}}
\newcommand{\pici}{\operatorname{pcst}}
\newcommand{\inv}{\operatorname{cst}}
\newcommand{\pinv}{\operatorname{pcst}}

\newcommand{\indep}{\;\rotatebox[origin=c]{90}{$\models$}\;}

\newcommand{\px}{p}
\newcommand{\pz}{p_{\mathrm{z}}}
\newcommand{\pc}{p_{\mathrm{c}}}
\newcommand{\Pa}[1]{\mathrm{Pa}(#1)}
\newcommand{\Pax}[1]{\mathrm{Pa}_{\mathrm{x}}(#1)}
\newcommand{\Pac}[1]{\mathrm{Pa}_{\mathrm{c}}(#1)}
\renewcommand{\P}[1]{\mathbb{P}\left[#1\right]}
\newcommand{\E}[1]{\mathbb{E}\left[#1\right]}

\usepackage{bbm}
\newcommand{\I}[1]{\mathbbm{1}\left[#1\right]}

\usepackage{calc}
\usepackage{accents}

\newcommand{\vardbtilde}[1]{\tilde{\raisebox{0pt}[0.85\height]{$\tilde{#1}$}}}

\usepackage{attrib}
\usepackage{mathtools}

\newcommand{\defeq}{\vcentcolon=}
\newcommand{\defeqi}{=\vcentcolon}

\newtheoremstyle{mystyle}
  {}
  {}
  {\itshape}
  {}
  {\bfseries}
  {.}
  { }
  {}

\theoremstyle{mystyle}

\newtheorem{lemma}{Lemma}
\newtheorem{definition}{Definition}

\usepackage{thm-restate}

\makeatletter
\def\@fnsymbol#1{\ensuremath{\ifcase#1\or *\or 1 \or 2 \or 3 \or 4 \or 5 \else\@ctrerr\fi}}
\makeatother

\providecommand{\keywords}[1]
{
  \small
  \textbf{\textit{Keywords---}} #1
}


\title{Searching for consistent associations with a\\multi-environment knockoff filter}

\author{Shuangning Li\footnote{Equal contribution.} \thanks{Department of Statistics, Stanford University, USA.}, Matteo Sesia\footnotemark[1] \thanks{Department of Data Sciences and Operations, University of Southern California, USA.}, Yaniv Romano\thanks{Departments of Electrical Engineering and of Computer Science, Technion, Israel.}, Emmanuel Cand{\`e}s\thanks{Departments of Statistics and of Mathematics, Stanford University, USA.}, and Chiara Sabatti\thanks{Departments of Statistics and of Biomedical Data Sciences, Stanford University, USA.}}
\date{}

\begin{document}

\maketitle

\begin{abstract}
This paper develops a method based on model-X knockoffs to find conditional associations that are consistent across diverse environments, controlling the false discovery rate.
The motivation for this problem is that large data sets may contain numerous associations that are statistically significant and yet misleading, as they are induced by confounders or sampling imperfections. However, associations consistently replicated under different conditions may be more interesting.
In fact, consistency sometimes provably leads to valid causal inferences even if conditional associations do not.
While the proposed method is flexible and can be deployed in a wide range of applications, this paper highlights its relevance to genome-wide association studies, in which consistency across populations with diverse ancestries mitigates confounding due to unmeasured variants.
The effectiveness of this approach is demonstrated by simulations and applications to the UK Biobank data.
\end{abstract}

\keywords{Conditional independence, causality, false discovery rate, genome-wide association studies.}

\section{Introduction}

A critical goal of statistics is to discover which variables, among the many measured in big-data applications, are meaningfully associated with an outcome of interest. The word ``association'' may have different connotations, ranging from {\em marginal} association, a tendency of two variables to vary together, to {\em causal} association, a relation ensuring interventions on one variable affect another.
For example, a genome-wide association study may ascertain that some genetic variants occur more frequently among individuals with diabetes. An explanation for this marginal association may be that the discovered variants are biologically irrelevant but are shared, because of common ancestry, by a sub-population that happens to follow a less healthy diet~\cite{devlin1999genomic}.
Of course, it would be more actionable to identify variants involved in biological processes which, if modified by a drug, could influence the disease.
Marginal associations are the simplest to recognize but also the least informative, while causal associations are more elusive, especially with high-dimensional observational data, although they better lend themselves to scientific interpretations.

Between marginal and causal associations one finds {\em conditional} association: the tendency of two variables to vary together when other quantities are fixed.
Conditional testing has been traditionally tackled through parametric models; however, these require strong assumptions, which are not always justified.
A new ``model-X'' strategy was proposed by~\cite{candes2018panning}, making no assumptions about the conditional distribution of the outcome and approximating instead the joint distribution of the predictors. This has led to two methods, {\em knockoffs}~\cite{barber2015controlling,candes2018panning} and the {\em conditional randomization test}~\cite{candes2018panning}, which can harness the power of any machine learning algorithm while providing type-I error guarantees in finite-samples.
The model-X assumptions are particularly well-suited to genome-wide associations studies because reliable prior knowledge is available about the joint distribution of the explanatory variables~\cite{sesia2018,sesia2020multi,sesia2020controlling}, but the framework is quite robust to approximations~\cite{candes2018panning,romano2019} and thus broadly applicable.

Despite the above advantages, conditional testing is not fully satisfactory for at least three reasons.
First, it does not account for unmeasured {\em confounders}: missing variables which, if conditioned upon, would explain away the association~\cite{pearl2009causality}.
For instance, one may discover that an irrelevant genetic marker is associated with a disease simply because it is physically close to, and thus inherited alongside, an unobserved causal variant~\cite{pritchard2001linkage}.
Second, some data sets may be collected from a population that does not match exactly the target one, either because of accidental sampling bias~\cite{heckman1979sample}, or for convenience~\cite{harford2014}. Clearly, any conditional associations may be misleading in that case.
Third, conditional testing methods typically assume individual samples are independent of one another, and this can lead to spurious associations in the presence of unexpected dependencies, such as network effects~\cite{lee2020network}.

This paper extends the methodology of knockoffs to mitigate the three aforementioned limitations of conditional testing by analyzing data from many {\em environments}. The word ``environment'' is employed loosely here, referring to specific sub-populations, experimental settings, or data collection strategies depending on the context.
Our work is motivated by the conjecture that the most informative associations are those which can be consistently reproduced under different environments, because these tend to enable more generally reliable predictions and may even reflect scientifically illuminating causal relations. This is an old idea, dating back at least to Hume~\cite{hume1739}.
\begin{quotation}
``There is no phaenomenon in nature, but what is compounded and modified by so many different circumstances, that in order to arrive at the decisive point, we must carefully separate whatever is superfluous, and enquire by new experiments, if every particular circumstance of the first experiment was essential to it. These new experiments are liable to a discussion of the same kind; so that the utmost constancy is required to make us persevere in our enquiry, and the utmost sagacity to choose the right way among so many that present themselves.''
\attrib{\citet[]{hume1739}}
\end{quotation}
We will translate the above logic into a practical method for the analysis of high-dimensional data, provably controlling the false discovery rate~\cite{benjamini1995} for hypotheses of consistent conditional association.
The proposed solution utilizes knockoffs because these are computationally efficient in very high dimensions and allow controlling the false discovery rate even if the predictors have strong dependencies~\cite{candes2018panning}. However, an alternative approach based on the conditional randomization test~\cite{candes2018panning} would be easy to implement; see Appendix~\ref{sec:app-crt} and Figure~\ref{fig:CRT} therein.

The outline is the following. Section~\ref{sec:prob-statement} states the problem.
Section~\ref{sec:ci-to-causal} proves that testing our hypotheses sometimes leads to causal inferences.
Section~\ref{sec:methods} develops our methods to test the aforementioned hypotheses using knockoffs. 
Section~\ref{sec:gwas} dives into genome-wide association studies~\cite{sesia2018}, explaining how consistency across sub-populations mitigates confounding due to unobserved variants.
Section~\ref{section:experiments} validates empirically our method with simulations. Section~\ref{sec:ukb} applies it to analyze the genetic determinants of several phenotypes in the UK Biobank resource~\cite{bycroft2018} using 600k genotypes from 500k individuals.
Section~\ref{sec:discussion} concludes by discussing some opportunities for future research.

\subsection*{Related work}

This paper was inspired by~\cite{peters2016causal,heinze2018invariant}, which advanced invariance across environments as a framework for causal inference. 
Departing from their work, however, we do not assume the existence of an invariant model with homogeneous effects; indeed, our approach is fully non-parametric  and leads to meaningful inference without reference to a specific  causal model.
Further, we seek different guarantees: the method in~\cite{peters2016causal} searches for a conservative confidence set of possible causal predictors, which may be desirable if the number of variables is small but does not scale well to high-dimensions, while ours controls the false discovery rate and can achieve high power even with hundreds of thousands of variables.
Notions of invariance similar to that of~\cite{peters2016causal} have also been utilized to improve predictive accuracy in the face of changes in the distribution of the explanatory variables~\cite{zhang2015multi,rojas2018invariant,arjovsky2020invariant,rothenhausler2021}, a phenomenon known also as {\em covariate shift}~\cite{shimodaira2000improving,sugiyama2007covariate}; although the problems are related, this paper concentrates on testing rather than prediction.

Our problem is related to {\em causal discovery}~\cite{spirtes1999algorithm,chickering2002optimal,koivisto2004exact,zhang2008completeness,pearl2009causality,glymour2019review}, which is the challenge of learning the graph describing the relations between all variables in a system, without pre-specifying an outcome of interest as we do;
this has also been extended to leverage invariance across environments~\cite{mooij2020joint}. Those methods can discover the direction of causal relations, as opposed to ours which solely tests conditional independence, but they require parametric assumptions and provide asymptotic rather than finite-sample guarantees.
There has also been interest in invariance within the feature selection literature~\cite{yu2019multi,ling2019bamb,wang2020towards}, though typically without seeking finite-sample inferences; see~\cite{yu2020causality} for a review.

Knockoffs were introduced by~\cite{barber2015controlling} and later extended by~\cite{candes2018panning} to the high-dimensional model-X framework. Subsequently, algorithms were developed to construct knockoffs for different distributions of predictors~\cite{sesia2018,gimenez2019knockoffs,romano2019,bates2020metropolized}, while others studied robustness to model misspecifications~\cite{barber2020robust} and power~\cite{liu2019power,katsevich2020theoretical,wang2020power,spector2020powerful}. Our work is orthogonal, as we extend the knockoff filter~\cite{barber2015controlling} to analyze data from many environments.
We focus on applications to genome-wide association data~\cite{sesia2018}, for which prior efforts addressed the problems of accounting for dependencies across the genotyped markers~\cite{sesia2020multi}, population structure~\cite{sesia2020controlling}, and even other unmeasured confounders~\cite{Bates2020}, but did not deal with missing variants. Knockoffs have also been deployed in other fields~\cite{shen2019false,chia2020interpretable,fan2020ipad,srinivasan2020compositional} for which our methodology may be helpful.

\section{Conditional associations that hold across environments} \label{sec:prob-statement}

Consider $E$ {\em environments}, or experimental settings, from which one can sample observations $(X,Y)$ consisting of $p$ explanatory variables, $X \in \mathcal{X}^{p}$, and an outcome, $Y \in \mathcal{Y}$.
Here, $\mathcal{X}$ is the set of possible values for each variable and $\mathcal{Y}$ denotes the possible outcomes; both sets may be either discrete or continuous.
Assume the joint distribution of the explanatory variables within any environment $e \in \{1,\ldots,E\}$, $P_X^e$, is known.
For simplicity, we imagine different samples as being independent of one another, although knockoffs can accommodate known patterns of dependency~\cite{sesia2020controlling}.
The model-X framework~\cite{candes2018panning} provides practical methods to test the conditional independence hypothesis
\begin{align} \label{eq:null-ci}
 \mathcal{H}^{\mathrm{ci}, e}_{j} : Y^{e} \indep X^{e}_{j} \mid X^{e}_{-j},
\end{align}
for any $j \in \{1,\ldots,\px \}$. Here, $X_{-j}$ denotes all observable explanatory variables except $X_j$, and the superscript $e$ clarifies we are focusing on the distribution of the data in the $e$-th environment.

Our goal is to powerfully test, for all $j \in \{1,\ldots,\px\}$, the following {\em consistent} conditional independence hypothesis,
\begin{align} \label{eq:null-ici}
 \mathcal{H}^{\ici}_{j} : \exists e \in \{1,\ldots, E\} \text{ such that the null } \mathcal{H}^{\mathrm{ci}, e}_{j} \text{ in~\eqref{eq:null-ci} is true,}
\end{align}
controlling the false discovery rate. Intuitively, we would interpret any findings by noting that, if $\mathcal{H}^{\ici}_{j}$~\eqref{eq:null-ici} is false, the conditional association of $X_j$ with $Y$ {\em must hold across all environments}.

It may be tempting to test $\mathcal{H}^{\mathrm{ci}, e}_{j}$~\eqref{eq:null-ci} separately environment by environment~\cite{candes2018panning} and then report the set of common discoveries. Unfortunately, this {\em intersection} heuristic would not control the false discovery rate for $\mathcal{H}^{\ici}_{j}$~\eqref{eq:null-ici} even if all environment-specific tests control it for $\mathcal{H}^{\mathrm{ci}, e}_{j}$~\eqref{eq:null-ci}~\cite{katsevich2018controlling}.
Alternatively, one may apply the standard knockoffs methodology~\cite{candes2018panning} on the {\em pooled} data from all environments. This would control the false discovery rate for $\mathcal{H}^{\mathrm{ci}}_{j}$~\eqref{eq:null-ci} defined in the broader population obtained by taking the union of all environments, but is not a test of consistency. Indeed, the problem is non-trivial, as illustrated by the simulations in Figure~\ref{fig:inv_sim}. These preview that our proposed method controls the false discovery rate for $\mathcal{H}^{\ici}_{j}$~\eqref{eq:null-ici}, unlike the two aforementioned heuristics, and achieves relatively high power.
These simulations will be explained with more details in Section~\ref{section:simulation_invariant}, after we develop our method.

\begin{figure}[!htb]
 \centering
 \includegraphics[width=\linewidth]{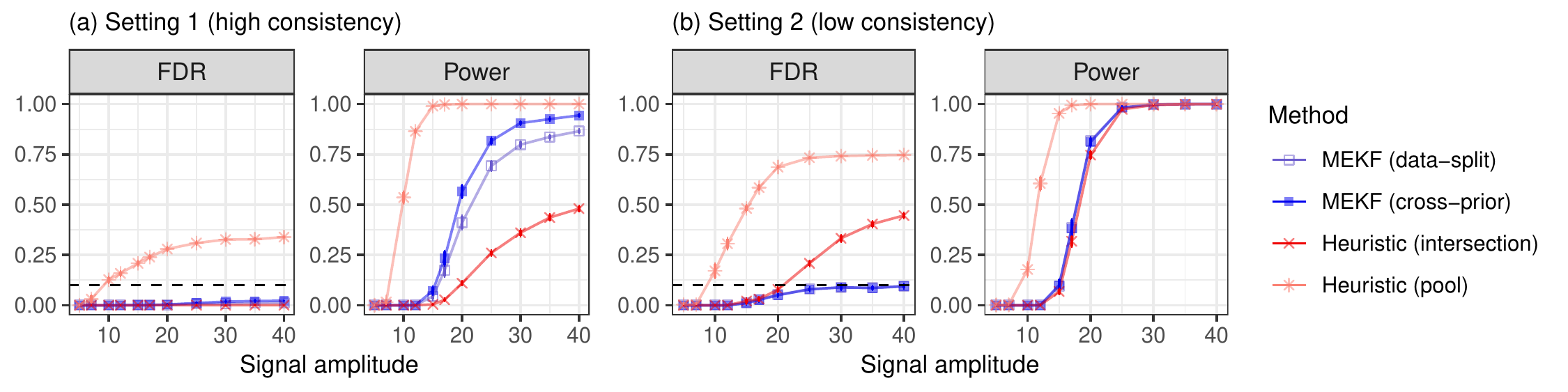}
 \caption{Performance of the multi-environment knockoff filter (MEKF) on simulated data from many environments, implemented with two alternative statistics. The performance is measured in terms of consistent conditional associations and is compared to that of two heuristics. The nominal false discovery rate is 0.1 (dashed line). (a) Data in which most conditional associations are consistent. (b) Data in which most conditional associations are not consistent.}
 \label{fig:inv_sim}
\end{figure}

A possible limitation of $\mathcal{H}^{\ici}_{j}$~\eqref{eq:null-ici} is that it becomes harder to reject if the number of environments grows but the sample size in each remains constant, because every one must provide evidence against the null.
Such intransigence is not always necessary; sometimes, it may be satisfactory to find an association in most environments, especially if some have smaller sample sizes.
This idea motivates {\em partial consistency} testing, also known as {\em partial conjunction}~\cite{friston2005conjunction,benjamini2008screening,heller2014replicability,wang2016detecting}.
For any variable $j \in \{1,\ldots,p\}$ and parameter $1 \leq r \leq E$, we define the partial consistency null hypothesis as:
\begin{align} \label{eq:null-pici}
 \mathcal{H}^{\pici,r}_{j} : \left| \left\{ e \in \{1,\ldots,E\} : \mathcal{H}^{\mathrm{ci}, e}_{j} \text{ is true} \right\} \right| > E-r.
\end{align}
In words, $\mathcal{H}^{\pici,r}_{j}$ states there are strictly less than $r$ environments $e$ in which $\mathcal{H}^{\mathrm{ci}, e}_{j}$~\eqref{eq:null-ci} is false; thus, a rejection suggests $X_j$ is associated with $Y$ in at least $r$ environments.
This generalizes $\mathcal{H}^{\ici}_{j}$~\eqref{eq:null-ici}, which is recovered if $r=E$.
Note that, unlike~\cite{benjamini2008screening}, we will not account for multiplicity over different possible $r$, which we take instead as fixed.

Before  outlining the methodology we propose to  test the consistent conditional independence hypotheses $\mathcal{H}^{\ici}_{j}$~\eqref{eq:null-ici},   we pause to motivate their interest. With this in mind, the next section  will establish a link between $\mathcal{H}^{\ici}_{j}$~\eqref{eq:null-ici} and causal inference, under some specific assumptions. 
We want to underscore, however, that identifying variables which are not independent from an outcome conditionally on all other observed sources of variation can be a powerful tool, even outside of the causal framework we will describe. 

Consider, as an example, the case where $Y$ is a measure of the quantitative competences acquired by students in public schools K-12, and let $X$ collect a number of variables that could possibly be related to the outcome, as student/teacher ratio, class size, availability of tutoring, school meals, type of curriculum, teacher qualifications, family size, attendance, etc. It is entirely possibly that  these variables interact in different ways to influence the final student competences in different environments (e.g.~urban vs.~rural, low vs.~high income neighborhoods, ethnically diverse vs.~homogeneous population, and so on). Without hoping to estimate constant causal effects, it would be of interest to be able to identify those variables which show a conditional association with the outcomes that holds across all environments. It is these ``robust" associations that one looks to in order to design interventions with the goal of improving outcomes, or detecting early signs of academic difficulties. While ``one size fits all'' is certainly wishful thinking, considerations of practicality, transparency and fairness make it preferable to focus on policies that have good chances of being effective across the board. Moreover, constant associations are also more likely to be robust to changes in environments or covariate shifts, such as those that may be expected with the passing of time. 

More generally, an ``environment'' can be understood as a specific population, as that typically observed in a data set. The goal of science is to make conclusions that are not valid only in one data set, but that hold more generally. Yet, it is well known that when looking for patterns among a very large number of variables, collected without much prior selection, it is possible to fit models that relate $X$ to $Y$ with a  precision that is hard to replicate in other data sets. We routinely carry out ``cross-validation'' to mitigate this problem, but this still only guarantees a form of {\em internal} consistency. It might very well be that among some Facebook users taken at a certain point in time, liking curly fries is a good predictor of IQ \cite{Kosinski5802} but this association is probably not robust to different times and settings. Testing for $\mathcal{H}^{\ici}_{j}$~\eqref{eq:null-ici} can be considered as a way to do ``{\em external} cross-validation'' \cite{Waldron2014}, thereby identifying models that are less ephemeral \cite{Efron2020}.

\section{From consistent associations to causal inferences} \label{sec:ci-to-causal}
We now  explicitly investigate the connection between consistent associations and causal inference. We will assume a constant causal model across environments---an assumption that is by no means necessary for the interest or validity of the tests we propose, but that is helpful to illustrate a number of ways in which focusing on conditional associations that are consistent across environments facilitates the discovery of variables with causal effects.

\subsection{A constant causal model} \label{sec:sem}

Assume a structural equation model~\cite{bollen1989} to describe the relation between the outcome ($Y$) and $\pz$ explanatory variables ($Z$), of which $\px$ are observed ($X$) and $\pc$ are unobserved ($C$). Thus, we can write $Z=(X,C) \in \mathcal{X}^{\pz}$, with $\pz=\px+\pc$.
According to this model, which we assume to be constant across environments, the $i$-th individual outcome, $Y^{(i)}$, is determined as a function of $Z$ through:
\begin{align} \label{eq:sem-full}
   Y^{(i)} = \bar{f}\left( Z^{(i)}, V^{(i)} \right),
\end{align}
where $\bar{f}$ is unknown and $V$ is exogenous noise from a standard uniform distribution, for example.
Assume the causal direction in this model is known: $Y$ is determined by some combination of $Z$ and $V$, not the other way around.
Although this simplification does not always make sense, it is appropriate in genetic studies, for example, because the genotypes predate the phenotype.
With this in place, consider the goal of discovering which variables are causal, in the sense that they have a direct effect on the outcome and do not satisfy the following {\em sharp} causal null hypothesis
\begin{align} \label{eq:null-causal}
 \mathcal{H}^{\mathrm{causal}}_{j} : \bar{f}\left( (z_1, \ldots, z_{j-1}, z_{j}, z_{j+1}, \ldots, z_{\pz}), v \right) = \bar{f}\left( (z_1, \ldots, z_{j-1}, z'_{j}, z_{j+1}, \ldots, z_{\pz}), v \right), \qquad \forall z,z_{j}',v.
\end{align}
Intuitively, this tells us  that intervening on $Z_j$ while holding all other variables fixed would have no effect on $Y$ at all; hence the adjective ``sharp''.
Alternatively, one may think of $\mathcal{H}^{\mathrm{causal}}_{j}$ as saying the potential outcomes are identical under all $Z_j$~\cite{rubin2005causal}.
Sharp hypotheses have received some criticism in the causal inference literature~\cite{neyman1935statistical}, specifically for not describing heterogeneous treatment effects~\cite{keele2015statistics}, but these concerns are more relevant from an estimation perspective. From a testing perspective, sharp hypotheses are helpful to discover which variables are more likely to be causal, especially for an exploratory analysis informing the design of follow-up studies~\cite{imbens2015causal}.
Further, the model in~\eqref{eq:sem-full} remains very flexible despite being constant across environments because it is fully non-parametric. In particular, as it allows any causal relations to be complex and involve interactions, this model does not exclude that a variable may appear to have different {\em linear} effects on the outcome across environments with covariate shifts. Indeed, the sole purpose of the model in~\eqref{eq:sem-full} is to allow a manageable definition of causality; in practice, our method simply tests whether a variable has {\em some} influence on the outcome in all environments.

In the following, we will assume we have data samples from $E$ environments, each corresponding to a distribution $P_Z^e$ of explanatory variables, for $e \in \{1,\ldots,E\}$. Conditional on $Z$, the outcome $Y$ is generated by the constant model in~\eqref{eq:sem-full}.
For simplicity, we equivalently rewrite this model as a function only of the causal variables, which we list as $\Pa{Y} \subseteq \{1,\ldots,\pz\}$ (standing for {\em parents of $Y$}):
\begin{align} \label{eq:parents}
  \Pa{Y} = \left\{ j \in \{1,\ldots,\pz\} : \mathcal{H}^{\mathrm{causal}}_{j} \text{ in~\eqref{eq:null-causal} is false} \right\}.
\end{align}
In particular, we write
\begin{align} \label{eq:sem}
   Y^{(i)} = f\left( Z^{(i)}_{\Pa{Y}}, V^{(i)} \right),
\end{align}
where $f$ is the restriction of $\bar{f}$ on $\Pa{Y}$
To lighten the subsequent notation, we partition $\Pa{Y}$ into:
\begin{align*}
  & \Pax{Y} \defeq \left\{j \in \{1,\ldots,\px\} : j \in \Pa{Y} \right\},
  & \Pac{Y} \defeq \left\{j \in \{1,\ldots,\pc\} : (\px + j) \in \Pa{Y} \right\};
\end{align*}
in words, these are the observed and unobserved causal variables, respectively.
Of course, in practice we analyze only the former, seeking to make inferences about $\Pax{Y}$, because we have no data involving the latter.

\subsection{The gap between conditional testing and causal inference} \label{sec:ci-testing}

We begin by focusing on a single environment $e$.
The result below states that testing $\mathcal{H}^{\mathrm{ci}, e}_{j}$ in~\eqref{eq:null-ci} amounts to making causal inferences if there is no confounding, in the sense that either no unobserved variables are causal, or the unobserved causal variables are independent of the observed ones.
\begin{prop} \label{prob:ci-to-causal}
Fix an environment $e$ and an observable variable $j$. Assume either (i) $\Pac{Y} = \emptyset$ or (ii) $X^{e}_j \indep C^{e}_{\Pac{Y}} \mid X^{e}_{-j}$.
Then, under the structural equation model and the data sampling scheme described in Section~\ref{sec:sem}, the causal null hypothesis $\mathcal{H}^{\mathrm{causal}}_{j}$~\eqref{eq:null-causal} implies the conditional independence null hypothesis $\mathcal{H}^{\mathrm{ci}, e}_{j}$~\eqref{eq:null-ci}.
\end{prop}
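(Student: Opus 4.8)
The plan is to work directly with the structural representation in~\eqref{eq:sem} and to translate the causal null into a statement about which variables enter $f$. Under $\mathcal{H}^{\mathrm{causal}}_{j}$~\eqref{eq:null-causal}, the observable variable $j$ is not a parent of $Y$, so $j \notin \Pax{Y}$, and we may write $Y^{e} = f\bigl( X^{e}_{\Pax{Y}}, C^{e}_{\Pac{Y}}, V \bigr)$, where the index set $\Pax{Y}$ is contained in $\{1,\ldots,\px\} \setminus \{j\}$ and where $V \indep (X^{e}, C^{e})$ by exogeneity of the noise. The target $\mathcal{H}^{\mathrm{ci},e}_{j}$~\eqref{eq:null-ci} asks that the conditional law of $Y^{e}$ given $(X^{e}_{j}, X^{e}_{-j})$ not depend on $X^{e}_{j}$, and I would establish this by showing that, conditionally on $X^{e}_{-j}$, every argument fed into $f$ is independent of $X^{e}_{j}$.

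Concretely, I would condition on $X^{e}_{-j} = x_{-j}$ and observe that $X^{e}_{\Pax{Y}}$ is then a fixed function of $x_{-j}$, precisely because $\Pax{Y} \subseteq \{1,\ldots,\px\} \setminus \{j\}$. It therefore suffices to show that the joint conditional distribution of the pair $(C^{e}_{\Pac{Y}}, V)$ given $(X^{e}_{j}, X^{e}_{-j})$ does not depend on $X^{e}_{j}$. I would factor the corresponding conditional law as the law of $V$ given everything, times the law of $C^{e}_{\Pac{Y}}$ given $(X^{e}_{j}, X^{e}_{-j})$, and then dispatch each factor in turn. Exogeneity $V \indep (X^{e}, C^{e})$ collapses the first factor to the fixed marginal of $V$, which is free of $x_{j}$; the second factor is free of $x_{j}$ vacuously in case (i), where $\Pac{Y} = \emptyset$ and no $C$-term is present, and by the assumed conditional independence $X^{e}_{j} \indep C^{e}_{\Pac{Y}} \mid X^{e}_{-j}$ in case (ii). Since neither factor depends on $x_{j}$, pushing the resulting distribution through the deterministic map $f( x_{\Pax{Y}}, \cdot\,, \cdot)$ shows the conditional law of $Y^{e}$ given $(X^{e}_{j}, X^{e}_{-j})$ is free of $X^{e}_{j}$, which is exactly $\mathcal{H}^{\mathrm{ci},e}_{j}$.

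The one step that requires care is this factorization: it is not enough to know separately that $C^{e}_{\Pac{Y}} \indep X^{e}_{j} \mid X^{e}_{-j}$ and that $V \indep X^{e}_{j}$, since I need the \emph{pair} $(C^{e}_{\Pac{Y}}, V)$ to be jointly conditionally independent of $X^{e}_{j}$ given $X^{e}_{-j}$. This is precisely where the full strength of exogeneity---$V$ independent of the entire vector $(X^{e}, C^{e})$, and not merely of $X^{e}$---is used, so that $V$ can be peeled off first and the residual dependence reduces cleanly to assumption (i) or (ii). A secondary technical point is that the density-based phrasing above presupposes that conditional densities exist; in the general (continuous or mixed) setting I would recast the same manipulation purely in terms of conditional independence relations, invoking the decomposition and contraction rules of the graphoid axioms to combine exogeneity with (i)/(ii), thereby avoiding densities altogether.
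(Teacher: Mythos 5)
Your proof is correct and follows essentially the same route as the paper's: both reduce the claim to showing that the non-$X_{\Pax{Y}}$ inputs to $f$, namely $(C^{e}_{\Pac{Y}}, V)$, are jointly conditionally independent of $X^{e}_j$ given $X^{e}_{-j}$, and then push through the deterministic map; the paper phrases this via the contraction property of conditional independence, which is exactly the factorization you spell out. If anything, you are more explicit than the paper at the one delicate point (peeling off $V$ first using the full exogeneity $V \indep (X^{e}, C^{e})$ to get the \emph{joint} conditional independence), and you unify cases (i) and (ii) where the paper treats them separately.
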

\begin{proof}
  If $\mathcal{H}^{\mathrm{causal}}_{j}$~\eqref{eq:null-causal} is true, $Y$ is a function of $Z_{\Pa{Y}}$ and $V$~\eqref{eq:sem}, with $j \notin \Pa{Y}$. (We drop the superscript $e$ for simplicity.)
Suppose assumption (i) holds: $Z_{\Pa{Y}}=X_{\Pa{Y}}$. As $Y$ is a function of $X_{\Pa{Y}}$ and $V$, it is independent of $X_j \mid  X_{-j}$ because $j \notin \Pa{Y}$ and $V \indep X$; thus $\mathcal{H}^{\mathrm{ci}, e}_{j}$~\eqref{eq:null-ci} is true.
Suppose instead (ii) holds. We may still write $Y$ as a function of $C_{\Pac{Y}}$, $V$, and $X_{\Pax{Y}}$. Thus, $Y \indep X_j \mid X_{-j}, C_{\Pac{y}}, V$ because $j \notin \{1,\ldots,\px\} \cap \Pa{Y}$. Then, it follows from the contraction property of conditional independence that $(Y,C_{\Pac{y}},V) \indep X_j \mid X_{-j}$, implying $\mathcal{H}^{\mathrm{ci}, e}_{j}$~\eqref{eq:null-ci}.
\end{proof}

The assumptions of Proposition~\ref{prob:ci-to-causal} are strong.
The first one would require one to have measured every variable with a possible effect on the outcome; this is often unrealistic, especially when studying complex phenomena. The second one holds in randomized experiments, and may be justified in certain observational studies such as those involving genetic parents-child trio data~\cite{Bates2020}. However, without suitable domain knowledge as in~\cite{Bates2020}, it is generally unclear why all potentially relevant missing variables should be conditionally independent of the observed ones.
Therefore, even if the structural equation model in Section~\ref{sec:sem} is acceptable, it remains very challenging to draw causal inferences from conditional associations.
Further, true random samples from the population of interest are not always easy to obtain; in fact, the process of gathering data is often far from ideal, as it may involve unknown sampling biases~\cite{heckman1979sample,harford2014,hargittai2015bigger} or network effects~\cite{shalizi2011homophily,lee2020network}.
Searching for consistent associations will not fully resolve these difficulties, but it can mitigate them if the environments are sufficiently different from one another, as discussed next.

\subsection{Consistency improves robustness to missing variables} \label{sec:robust-missing}

The next simple result demonstrates that the assumptions under which conditional associations yield causal inferences can be relaxed if the associations are consistent across many environments.
The intuition is that covariate shifts (changes in $P_Z^e$) may induce different observed variables to pick up spurious associations in different environments, while causal associations tend to be consistent.
 Figure~\ref{fig:graphical-missing} visualizes this concept with a toy example involving two causal variables, one of which is unmeasured. Here, the two environments differ in $P_Z^e$ to a sufficient extent that their spurious associations have no overlap and can thus be perfectly winnowed down through consistency.
Section~\ref{sec:gwas} will explain the relevance of this idea to genome-wide association studies involving individuals with different ancestries.

\begin{prop} \label{prob:ici-to-causal}
Fix any observable variable $j$ and consider $E$ environments $\{1,\ldots,E\}$. Assume either (i) $\Pac{Y} = \emptyset$ or (ii) $\exists e \in \{1,\ldots,E\} : X^{e}_j \indep C^{e}_{\Pac{Y}} \mid X^{e}_{-j}$.
Then, under the same setting as in Proposition~\ref{prob:ci-to-causal}, the causal null hypothesis $\mathcal{H}^{\mathrm{causal}}_{j}$ in~\eqref{eq:null-causal} implies the consistent association null hypothesis $\mathcal{H}^{\ici}_{j}$ in~\eqref{eq:null-ici}.
\end{prop}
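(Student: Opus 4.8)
The plan is to observe that $\mathcal{H}^{\ici}_{j}$~\eqref{eq:null-ici} is an \emph{existential} statement over environments: it holds as soon as $\mathcal{H}^{\mathrm{ci}, e}_{j}$~\eqref{eq:null-ci} is true for at least one $e$. Hence it suffices to exhibit a single environment in which the per-environment conclusion of Proposition~\ref{prob:ci-to-causal} applies, and then invoke that proposition. The whole argument thus reduces to checking that, in each of the two cases, at least one environment satisfies the hypotheses of Proposition~\ref{prob:ci-to-causal}.

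First I would handle case (i). If $\Pac{Y} = \emptyset$, then assumption (i) of Proposition~\ref{prob:ci-to-causal} is satisfied in \emph{every} environment $e$, since it involves no reference to $e$. Picking any $e \in \{1,\ldots,E\}$ and applying Proposition~\ref{prob:ci-to-causal} under the structural equation model of Section~\ref{sec:sem}, the causal null $\mathcal{H}^{\mathrm{causal}}_{j}$~\eqref{eq:null-causal} implies $\mathcal{H}^{\mathrm{ci}, e}_{j}$~\eqref{eq:null-ci}, and therefore $\mathcal{H}^{\ici}_{j}$~\eqref{eq:null-ici}.

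Next I would handle case (ii). By hypothesis there exists an environment $e^\star$ with $X^{e^\star}_j \indep C^{e^\star}_{\Pac{Y}} \mid X^{e^\star}_{-j}$, which is exactly assumption (ii) of Proposition~\ref{prob:ci-to-causal} specialized to that particular $e^\star$. Applying the proposition in environment $e^\star$ shows that $\mathcal{H}^{\mathrm{causal}}_{j}$ implies $\mathcal{H}^{\mathrm{ci}, e^\star}_{j}$; since $e^\star$ is an admissible witness for the existential quantifier, $\mathcal{H}^{\ici}_{j}$ follows.

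I do not anticipate a genuine obstacle: the result is essentially a corollary of Proposition~\ref{prob:ci-to-causal}, and the only point requiring care is the alignment of the quantifiers. The per-environment independence condition is here weakened to hold in \emph{just one} environment, and this weaker assumption suffices precisely because the consistency null $\mathcal{H}^{\ici}_{j}$ asks only for a single environment of conditional independence. It is worth flagging this contrast in the write-up: in Proposition~\ref{prob:ci-to-causal} the independence assumption was imposed on the fixed environment under consideration, whereas here it is an existential relaxation over environments—which is exactly the feature that makes consistency more robust to unmeasured confounders.
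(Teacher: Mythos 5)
Your proposal is correct and follows essentially the same route as the paper's proof: in either case one exhibits an environment satisfying the hypotheses of Proposition~\ref{prob:ci-to-causal}, obtains $\mathcal{H}^{\mathrm{ci}, e}_{j}$ there, and concludes by the existential form of $\mathcal{H}^{\ici}_{j}$. Your write-up merely spells out the two cases more explicitly than the paper does, which is fine.
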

\begin{proof}
Suppose either assumption (i) or (ii) is satisfied. Then, Proposition~\ref{prob:ci-to-causal} implies $\exists e \in \{1,\ldots,E\}$ such that the conditional null $\mathcal{H}^{\mathrm{ci}, e}_{j}$~\eqref{eq:null-ci} is true. In turn, this implies $\mathcal{H}^{\ici}_{j}$~\eqref{eq:null-ici}, by definition of the latter.
\end{proof}

\begin{figure}[!htb]
\centering

\begin{subfigure}{0.25\textwidth}
\begin{minipage}{0.12\linewidth}
\caption*{(a)}
\end{minipage}
\begin{minipage}{0.8\linewidth}
\includegraphics[width = \textwidth]{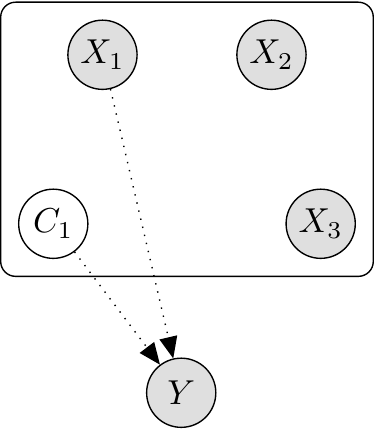}
\end{minipage}
~
\end{subfigure}
\begin{subfigure}{0.25\textwidth}
\begin{minipage}{0.12\linewidth}
\caption*{(b)}
\end{minipage}
\begin{minipage}{0.8\linewidth}
\includegraphics[width = \textwidth]{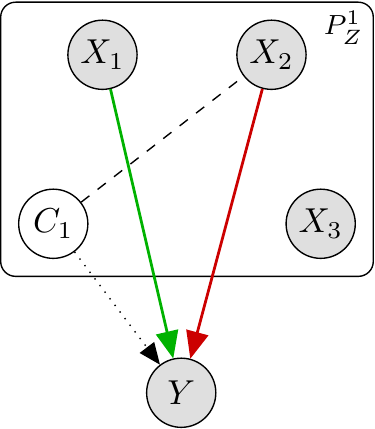}
\end{minipage}
\end{subfigure}
~
\begin{subfigure}{0.25\textwidth}
\begin{minipage}{0.12\linewidth}
\caption*{(c)}
\end{minipage}
\begin{minipage}{0.8\linewidth}
\includegraphics[width = \textwidth]{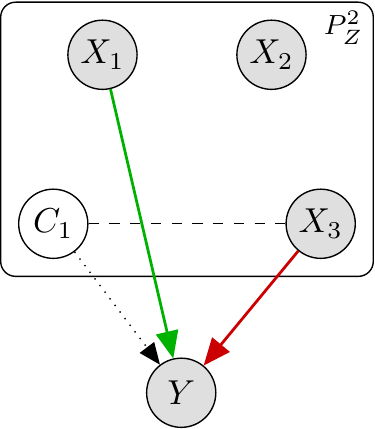}
\end{minipage}
\end{subfigure}

\caption{Graphical representation of consistency across environments improving robustness to missing variables. (a) Structural equation model for $Y \mid X,C$. (b) Conditional associations in environment one. (c) Conditional associations in environment two. The shaded nodes indicate the outcome or the observed variables, while the white node indicates the unobserved variable. The dotted arrows in (a) indicate causal links. The dashed segments represent graphically $P^e_Z$, which differs across environments and is such that $C_1$ is associated with $X_2$ in the first environment, and with $X_3$ in the second one, while all other variables are conditionally independent. The solid arrows indicate conditional associations between the observable variables and the outcome in each environment (green if causal, red if spurious).}
\label{fig:graphical-missing}

\end{figure}

\subsection{Consistency improves robustness to sampling biases} \label{sec:robust-bias}

To understand why consistency can also improve robustness to sampling biases, consider Berkson's paradox~\cite{berkson1946limitations}, which often arises in medicine~\cite{herbert2020spectre} and social science~\cite{dawes1975graduate}, and is also known as the problem of ``conditioning on a collider''~\cite{pearl2009causality}.
We illustrate this idea with a classical example from~\cite{dawes1975graduate}: imagine a school admitting students based on a composite score summarising their performance across different disciplines, and suppose only students whose composite score (standardized test + GPA + sports performance) exceeds a threshold are admitted.
Even if GPA and sports performance were independent in the applicant population, one should expect a negative correlation between these variables among the admitted students: students with low GPA must perform well in sports, or else they would have not been observed.
This bias makes it challenging to discover which variables may have a causal effect on GPA, especially if the admission criteria are unknown.
Fortunately, consistency may help remove spurious associations if data are available from schools relying on different admission criteria.

To make our argument more general without sacrificing concreteness, consider a scenario in which the data are collected through a simple environment-specific rejection-sampling rule. Suppose a random individual from the population of interest is included in the sample for the $e$-th environment with probability proportional to some bias $\phi^e$, which is a function of a subset $S^e \subseteq \{1,\ldots,\pz\}$ of the variables and possibly also of the outcome:
\begin{align} \label{eq:sampling-bias}
  \P{\text{include } (Z,Y) \text{ in the sample for environment } e} \propto \phi^e(Z_{S^e}, Y).
\end{align}
The effective joint distribution of $(Z,Y)$ accessible from the $e$-th environment then becomes
\begin{align*}
  P^e(Z,Y) \propto P_Z^e(Z) \cdot P^*(Y \mid Z) \cdot \phi^e(Z_{S^e}, Y),
\end{align*}
where $P^*(Y \mid Z)$ indicates the ideal conditional distribution of $Y \mid Z$ determined by the causal model.
Conditional testing on these data cannot lead to valid causal inferences even if no variables are missing, because the effective distribution of $Y \mid Z$ is now $P^e(Y \mid Z) \propto P^*(Y \mid Z) \cdot \phi^e(Z_{S^e}, Y)$, which no longer corresponds to the model of interest.
However, consistency can mitigate this issue, as long as the sampling biases (the subsets $S^e$) are not constant.

\begin{prop} \label{prob:ici-to-causal-bias}
Fix any observable variable $j$ and consider $E$ environments.
In the setting of Proposition~\ref{prob:ici-to-causal}, suppose the sampling mechanism in each environment $e$ is biased by a subset $S^e$ of explanatory variables as in~\eqref{eq:sampling-bias}.
Assume that there are no unmeasured confounders ($\pc = 0$), and $\exists e \in \{1,\ldots,E\} : j \notin S^e$.
Then, $\mathcal{H}^{\mathrm{causal}}_{j}$~\eqref{eq:null-causal} implies $\mathcal{H}^{\ici}_{j}$~\eqref{eq:null-ici}, where each $\mathcal{H}^{\mathrm{ci}, e}_{j}$ in~\eqref{eq:null-ici} refers to the effective population induced by the biased sampling mechanism.
\end{prop}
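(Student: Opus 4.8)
The plan is to isolate the one environment $e$ guaranteed by the hypothesis to satisfy $j \notin S^e$, and to show that the conditional independence null $\mathcal{H}^{\mathrm{ci},e}_j$~\eqref{eq:null-ci} holds in the \emph{effective} population of that environment. Since $\mathcal{H}^{\ici}_j$~\eqref{eq:null-ici} requires the conditional null to hold in only \emph{some} environment, this single environment suffices.

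First I would record the two facts that together make $X_j$ irrelevant to the effective conditional law of $Y$. Because $\pc = 0$ we have $Z = X$, so the causal null $\mathcal{H}^{\mathrm{causal}}_j$~\eqref{eq:null-causal} forces $j \notin \Pax{Y}$; hence the ideal conditional distribution factors as $P^*(Y \mid X) = P^*(Y \mid X_{-j})$ and does not depend on $X_j$. Separately, the defining assumption $j \notin S^e$ means $X_{S^e}$ is a function of $X_{-j}$ alone, so the bias $\phi^e(X_{S^e}, Y)$ likewise does not depend on $X_j$.

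Next I would write out the effective conditional distribution induced by the rejection-sampling rule~\eqref{eq:sampling-bias}. Dividing the effective joint law by its $X$-marginal, the factor $P_Z^e(X)$ and the overall normalizing constant cancel, leaving
\begin{align*}
 P^e(Y \mid X) = \frac{P^*(Y \mid X)\,\phi^e(X_{S^e}, Y)}{\int P^*(y \mid X)\,\phi^e(X_{S^e}, y)\,\mathrm{d}y}.
\end{align*}
By the previous paragraph both factors in the numerator depend on $X$ only through $X_{-j}$, and the denominator integrates $y$ out of the same numerator, so it too is a function of $X_{-j}$ alone. Consequently $P^e(Y \mid X)$ depends on $X$ only through $X_{-j}$, which is exactly the statement $Y^e \indep X_j^e \mid X_{-j}^e$ under $P^e$; that is, $\mathcal{H}^{\mathrm{ci},e}_j$~\eqref{eq:null-ci} holds in the effective population. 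By definition of $\mathcal{H}^{\ici}_j$~\eqref{eq:null-ici}, this completes the argument.

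I expect the only real subtlety to be the bookkeeping around the normalizing constant, which is precisely what makes selection bias distort $P^*(Y \mid X)$ into $P^e(Y \mid X)$: one must verify that it does not secretly reintroduce a dependence on $X_j$. The point is that integrating over $y$ cannot create dependence on a coordinate absent from the integrand, so once the integrand $P^*(y \mid X_{-j})\,\phi^e(X_{S^e}, y)$ is seen to be a function of $(y, X_{-j})$ only, the cancellation is automatic. Note that, unlike Proposition~\ref{prob:ci-to-causal}, no appeal to the contraction property of conditional independence is needed here, since with $\pc = 0$ there are no latent variables to marginalize out.
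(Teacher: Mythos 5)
Your proposal is correct and follows essentially the same route as the paper's proof: fix the environment with $j \notin S^e$, observe that both $P^*(Y \mid Z)$ (via $j \notin \Pa{Y}$) and the bias factor $\phi^e$ (via $j \notin S^e$) are free of $X_j$, and conclude that the effective conditional law $P^e(Y\mid X) \propto P^*(Y\mid X)\,\phi^e(X_{S^e},Y)$ does not depend on $X_j$, so $\mathcal{H}^{\mathrm{ci},e}_j$ holds and hence $\mathcal{H}^{\ici}_j$ follows by definition. The only difference is that you spell out the normalizing constant explicitly, a detail the paper absorbs into the proportionality sign.
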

\begin{proof}
Focus on an environment $e$ such that $j \notin S^e$, which is assumed to exist (we will drop the superscript $e$ for simplicity).
The effective conditional distribution of $Y \mid Z$ is $P(Y \mid Z) \propto P^*(Y \mid Z) \cdot \phi(Z_{S}, Y)$, which does not depend on $Z_j$ because $j \notin S$ and $P^*(Y \mid Z)$ only depends on $\Pa{Y} \not\ni j$. Hence $\mathcal{H}^{\mathrm{ci},e}_{j}$~\eqref{eq:null-ci} is true, implying $\mathcal{H}^{\ici}_{j}$~\eqref{eq:null-ici}.
\end{proof}

Figure~\ref{fig:graphical-bias} visualizes this concept with an example in which two environments have biased sampling depending on disjoint sets of variables, ensuring no spurious associations are consistent.
The robustness of our proposed methods to sampling biases will be demonstrated with simulations in Section~\ref{sec:sim-sampling}.
Note that Proposition~\ref{prob:ici-to-causal-bias} presumes for simplicity that no variables are missing; a more general result relaxing this assumption could be obtained easily, although we do not explore that here because it would complicate the notation unnecessarily.

\begin{figure}[!htb]
\centering

\begin{subfigure}{0.25\textwidth}
\begin{minipage}{0.12\linewidth}
\caption*{(a)}
\end{minipage}
\begin{minipage}{0.8\linewidth}
\includegraphics[width = \textwidth]{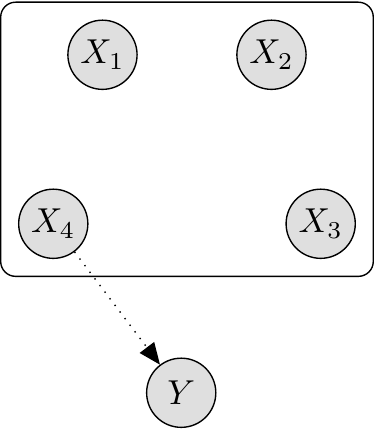}
\end{minipage}
~
\end{subfigure}
\begin{subfigure}{0.25\textwidth}
\begin{minipage}{0.12\linewidth}
\caption*{(b)}
\end{minipage}
\begin{minipage}{0.8\linewidth}
\includegraphics[width = \textwidth]{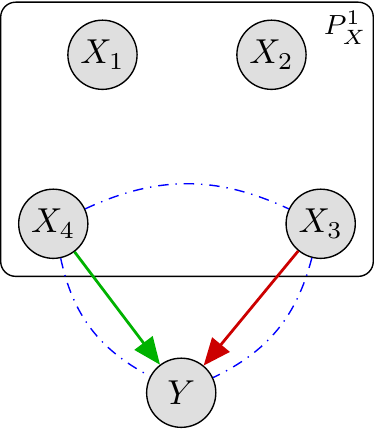}
\end{minipage}
\end{subfigure}
~
\begin{subfigure}{0.25\textwidth}
\begin{minipage}{0.12\linewidth}
\caption*{(c)}
\end{minipage}
\begin{minipage}{0.8\linewidth}
\includegraphics[width = \textwidth]{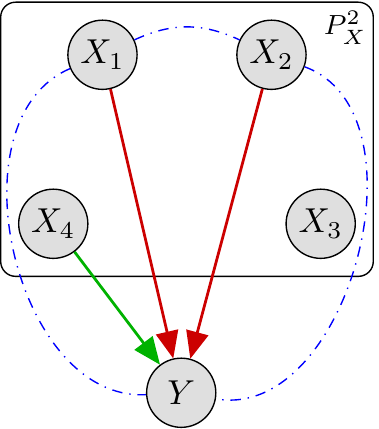}
\end{minipage}
\end{subfigure}

\caption{Graphical representation of consistency across environments improving robustness to sampling biases.
The dash-dotted curves represent conditional dependence relations induced by biased sampling in each environment; the first one depends on $X_3, X_4, Y$, while the second one depends on $X_1,X_2,Y$.
Other details are as in Figure~\ref{fig:graphical-missing}.
}
\label{fig:graphical-bias}

\end{figure}

\subsection{Consistency improves robustness to homophily and contagion}
\label{sec:robstness-homophily}

Consistency can also improve robustness to homophily and contagion~\cite{shalizi2011homophily} or, more generally, to unaccounted dependencies among the observations within the same environment, which can lead to spurious associations~\cite{lee2020network}.
Broadly speaking, we define homophily as the tendency of individuals sharing certain features to co-occur within one environment, and contagion as the mutual dependence of their outcomes.
This is a well-known phenomenon, especially in the social sciences and in network studies~\cite{shalizi2011homophily, mcpherson2001birds, fowler2008dynamic, aral2009distinguishing, lee2020network}, and it can be explained as follows. Consider a school in which students who enjoy reading are likely to join the book club (homophily). One student in the book club (patient zero) happens to be also interested in chess and convinces other members to learn how to play it (contagion).
Later, a school-wide survey finds an association between reading and playing chess, even though there is no relation between the two at the population level.
In particular, we do not expect to replicate this association in different schools (environments), as those may have different clubs and patient zeroes with other interests.
Therefore, the risk of incorrectly reporting spurious findings can be mitigated by searching for consistent associations. This argument could be formalized as in Section~\ref{sec:robust-bias}, but we prefer to avoid introducing additional notation here, partially for lack of space, and partially because the main idea should at this point be already clear.
The robustness of our proposed methods to homophily and contagion will be demonstrated with simulations in Section~\ref{sec:sim-homophily}.

\section{Methods} \label{sec:methods}
We start with a selective review  of the knockoff methodology upon which our solution will be built. We will then develop a method for testing $\mathcal{H}^{\ici}_{j}$~\eqref{eq:null-ici} and finally extend it to $\mathcal{H}^{\pici,r}_{j}$~\eqref{eq:null-pici}.

\subsection{Review: the methodology of knockoffs} \label{sec:knockoffs-review}

For an environment $e$, knockoffs enable testing $\mathcal{H}^{\mathrm{ci}, e}_{j}$~\eqref{eq:null-ci} for all $j \in \{1,\ldots,p\}$.
The idea is to augment the data for each of the $n$ observed individuals with $p$ synthetic features, the {\em knockoffs}, which serve as negative control variables~\cite{barber2015controlling,candes2018panning}.
The knockoffs are created by the statistician as a function of $X$, without looking at $Y$, and therefore they are conditionally independent of $Y$ given $X$. Knockoffs are however pairwise exchangeable with the original variables in their joint distribution.
That is, if $[ \mathbf{X}^e, \tilde{\mathbf{X}}^e ] \in \mathbb{R}^{n \times 2p}$ is the matrix obtained by concatenating the variables $\mathbf{X}^e \in \mathbb{R}^{n \times p}$ with the corresponding knockoffs $\tilde{\mathbf{X}}^e \in \mathbb{R}^{n \times p}$ and, for any $j \in \{1,\ldots,p\}$, the matrix $[ \mathbf{X}^e, \tilde{\mathbf{X}}^e]_{\mathrm{swap}(j)}$ is obtained by swapping the $j$-th column of $\mathbf{X}^e$ with the $j$-th column of $\tilde{\mathbf{X}}^e$, then
\begin{align} \label{eq:knock_cond_1}
  \left[ \mathbf{X}^e, \tilde{\mathbf{X}}^e \right]_{\mathrm{swap}(j)}
  \overset{d}{=} \left[ \mathbf{X}^e, \tilde{\mathbf{X}}^e \right].
\end{align}
The equation above says that swaps of a variable with its knockoff cannot be detected without looking at $Y$; in fact, 
the only possible significant difference between $X_j$ and $\tilde{X}_j$ is the lack of conditional association of the latter with~$Y$.
The construction of knockoffs depends on the joint distribution $P_X^e$ of all explanatory variables, and we refer to prior works for specific algorithms; see~\cite{candes2018panning} for the multivariate Gaussian case,~\cite{sesia2018} for hidden Markov models,~\cite{bates2020metropolized} for general graphical models, or~\cite{romano2019} for fully nonparametric approximate methods.
As our contribution does not concern this aspect of the analysis, we assume $P_X^e$ is known and a knockoff generation algorithm is available.

The second step is to fit a model predicting $Y$ given $X$ and $\tilde{X}$, computing importance measures $T_j^e$ and $\tilde{T}_j^e$ for each $X_j$ and $\tilde{X}_j$, respectively.
Any model can be employed, as long as it does not unfairly discriminate between variables and knockoffs---swapping any $X_j$ with $\tilde{X}_j$ should result in $T_j$ being swapped with $\tilde{T}_j$~\cite{candes2018panning}.
A typical choice is to fit a sparse generalized linear model (e.g., the lasso~\cite{tibshirani2011regression}), tuning its regularization parameter via cross-validation; then, the absolute values of the (scaled) regression coefficients are powerful importance measures~\cite{candes2018panning}.
For each $j \in \{1,\ldots,p\}$, $T_j^e$ and $\tilde{T}_j^e$ are combined into an anti-symmetric statistic, e.g., $W_j^e = T_j^e - \tilde{T}_j^e$.
This yields statistics $W_j^e$ that are equally likely to be positive or negative if $\mathcal{H}^{\mathrm{ci}, e}_{j}$~\eqref{eq:null-ci} is true~\cite{candes2018panning}. By contrast, a large positive $W_j^e$ is evidence against the null. Further, the signs of these statistics are mutually independent for all null indices $j$ conditional on the absolute values, $|W^e| = (|W_1^e|, \ldots, |W_p^e|)$.
Formally, if $\epsilon^e \in \{-1,+1\}^p$ is a random vector such that $\epsilon_j^e = +1$ if $\mathcal{H}^{\mathrm{ci}, e}_{j}$ is false and $\P{\epsilon_j^e = +1 } = 1/2$, independently of everything else, otherwise, then $W^e$ satisfies the following {\em coin-flip} property~\cite{candes2018panning}:
\begin{align} \label{eq:flip-sign}
  W^e \overset{d}{=} W^e \odot \epsilon^e,
\end{align}
where $\odot$ indicates element-wise multiplication. Therefore, the signs of $W^e$ can be seen as one-bit conservative p-values~\cite{barber2015controlling} for $\mathcal{H}^{\mathrm{ci}, e}_{j}$~\eqref{eq:null-ci}, if we transform them as $p_j^e = 1/2$ if $W_j^e>0$ and $p_j^e = 1$ otherwise. The ordering provided by the absolute values of $W^e$ allows one to powerfully test the above hypotheses with a sequential procedure~\cite{barber2015controlling}.

Concretely, the knockoff filter~\cite{barber2015controlling} computes a significance threshold for the test statistics such that the rejection of $\mathcal{H}^{\mathrm{ci}, e}_{j}$~\eqref{eq:null-ci} for all $j$ with larger $W_j^e$ controls the false discovery rate below the desired level $\alpha$. That is, one can prove
\begin{align*}
  \text{FDR} \defeq \E{\frac{| \{j :\mathcal{H}^{\mathrm{ci}, e}_{j} \text{ is rejected}\} \cap \{j : \mathcal{H}^{\mathrm{ci}, e}_{j} \text{ is true}\} |}{|\{j :\mathcal{H}^{\mathrm{ci}, e}_{j} \text{ is rejected}\} | \lor 1 }} \leq \alpha,
\end{align*}
where $a \lor b \defeq \max\{a,b\}$.
Equivalently, the knockoff filter can be seen as an instance of the selective SeqStep+ test~\cite{barber2015controlling} applied to the above ordered one-bit p-values.
In the next section, we will extend this methodology to simultaneously analyze data from many environments, controlling the false discovery rate for the hypotheses $\mathcal{H}^{\ici}_{j}$ defined in~\eqref{eq:null-ici}.

\subsection{Multi-environment knockoff statistics} \label{sec:ici-knockoffs}

Consider $E$ environments $e \in \{1,\ldots,E\}$, each corresponding to observations $\mathbf{Y}^e \in \mathbb{R}^n, \mathbf{X}^e \in \mathbb{R}^{n \times p}$, and knockoffs $\tilde{\mathbf{X}}^e \in \mathbb{R}^{n \times p}$.
Assume the sample size is $n$ in all environments for ease of notation, although this is unnecessary.
The first ingredient for testing $\mathcal{H}^{\ici}_{j}$~\eqref{eq:null-ici} are the multi-environment statistics defined below, which generalize those from~\cite{candes2018panning}.

\begin{definition}[Multi-environment knockoff statistics] \label{def:kenv-stats}
$\mathbf{W} \in \mathbb{R}^{E \times p}$ are valid multi-environment knockoff statistics if they satisfy $\mathbf{W} \,\oset{d}{=}\, \mathbf{W} \odot \bm{\epsilon}$, and $\bm{\epsilon} \in \{\pm 1\}^{E \times p}$ is a random matrix with independent entries and rows $\epsilon^e$ such that $\epsilon^e_j = \pm 1 $ with probability $1/2$ if $ \mathcal{H}_{j}^{\mathrm{ci}, e}$ in~\eqref{eq:null-ci} is true and $\epsilon^e_j = +1$ otherwise, for $j \in \{1,\ldots,p\}$ and $e \in \{1,\ldots,E\}$.
\end{definition}

A simple solution to obtain such statistics is to analyze different environments separately with the existing method from the previous section, and then stack their  output $W^e$.
This approach, which we call {\em data-splitting}, is computationally efficient and private, as it allows researchers in different environments to collaborate without disclosing their data, but it is not the most powerful. For example, if all environments are identical, data splitting effectively divides the total sample size by $E$ compared to a regular analysis of the pooled data, although the latter would test the same hypotheses in this scenario; this motivates a more general approach.

Let $\mathbf{Y} \in \mathbb{R}^{En}, \mathbf{X} \in \mathbb{R}^{En \times p}, \tilde{\mathbf{X}} \in \mathbb{R}^{En \times p}$ indicate the data matrices obtained by stacking the observations or knockoffs from all environments.
We define $[\mathbf{T},\tilde{\mathbf{T}}] \in \mathbb{R}^{E \times 2p}$ as a matrix of multi-environment importance measures for all variables and knockoffs, computed by applying a randomized function $\boldsymbol{\tau}$ to the full data set:
\begin{align} \label{eq:def-tau}
  & [\mathbf{T},\tilde{\mathbf{T}}] = \boldsymbol{\tau} \left( \mathbf{Y}, [\mathbf{X}, \tilde{\mathbf{X}}] \right)
    \defeq \left[ \bm{t} \left( \mathbf{Y}, [\mathbf{X}, \tilde{\mathbf{X}}] \right), \tilde{\bm{t}} \left( \mathbf{Y}, [\mathbf{X}, \tilde{\mathbf{X}}] \right) \right].
\end{align}
Above, $\bm{t}$ (resp.~$\tilde{\bm{t}}$) are defined in terms of the matrix $\boldsymbol{\tau}$, as its first (resp.~last) $p$ columns.
For any subset $\mathcal{S} \subseteq \{1,\ldots,E\} \times \{1,\ldots,p\}$, let $[\mathbf{X}, \tilde{\mathbf{X}}]_{\text{swap}(\mathcal{S})}$ be the matrix obtained from $[\mathbf{X}, \tilde{\mathbf{X}}]$ after swapping the column $\mathbf{X}^{e}_j$ for environment $e$ with the corresponding $\tilde{\mathbf{X}}^{e}_j$, for all $(e,j) \in \mathcal{S}$.
Note the slight change of notation compared to~\eqref{eq:knock_cond_1}: there, swapping was defined only for one environment.
The function $\boldsymbol{\tau}$ may be almost anything, possibly involving sophisticated machine learning algorithms, as long as it satisfies
\begin{align} \label{eq:swap-tau}
  \boldsymbol{\tau}\left( \mathbf{Y}, [\mathbf{X}, \tilde{\mathbf{X}}]_{\text{swap}(\mathcal{S})} \right)
  \overset{d}{=}
  \left[
  \bm{t}\left( \mathbf{Y}, [\mathbf{X}, \tilde{\mathbf{X}}] \right),
  \tilde{\bm{t}}\left( \mathbf{Y}, [\mathbf{X}, \tilde{\mathbf{X}}] \right)
  \right]_{\text{swap}(\mathcal{S})} \quad \mid \mathbf{Y}, [\mathbf{X}, \tilde{\mathbf{X}}].
\end{align}
In words, this says that swapping variables and knockoffs in any one environment should have the only effect of swapping the corresponding importance measures in that environment, leaving all other elements of $\boldsymbol{\tau}$ unchanged. Importantly, the equality only needs to hold in distribution conditional on the data and on the knockoffs, as $\bm{\tau}$ may be randomized.
For our method to be powerful, we also need $\bm{\tau}$ to be such that a larger value in row $e$ and column $j$ indicates evidence from environment $e$ that $X_j$ is conditionally associated with $Y$, while a larger value in row $e$ and column $j+p$ points to a corresponding association of $\tilde{X}_j$ with $Y$, which we know must be spurious because all knockoffs are null. Concrete examples of powerful multi-environment importance measures will be presented later.

Given any matrix $[\mathbf{T},\tilde{\mathbf{T}}]$~\eqref{eq:def-tau} of importance measures satisfying~\eqref{eq:swap-tau}, we construct multi-environment statistics $\mathbf{W} \in \mathbb{R}^{E \times p}$ by computing its rows $W^{e} \in \mathbb{R}^{p}$ with the standard approach reviewed in Section~\ref{sec:knockoffs-review}, separately environment by environment. Precisely, we contrast each $T^e_j$ with $\tilde{T}^e_j$, pairwise for all variables and knockoffs; e.g.,
\begin{align} \label{eq:def-w}
  W^e_j = T^e_j - \tilde{T}^e_j.
\end{align}
This ensures the signs of $\mathbf{W}$ corresponding to null hypotheses are independent coin flips conditional on the absolute values of all entries, generalizing the result in~\eqref{eq:flip-sign} from~\cite{candes2018panning}.

\begin{prop} \label{prop:coin-flip}
If the statistics $\mathbf{W} \in \mathbb{R}^{E \times p}$ are computed based on~\eqref{eq:swap-tau}--\eqref{eq:def-w}, then they satisfy Definition~\ref{def:kenv-stats}.
\end{prop}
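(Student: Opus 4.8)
The plan is to reduce the coin-flip property in Definition~\ref{def:kenv-stats} to a family of sign-flip invariances, one for each subset of null pairs, and then to recover the full statement by conditioning on $\bm{\epsilon}$. Write $\mathcal{N} = \{(e,j) : \mathcal{H}^{\mathrm{ci},e}_j \text{ in~\eqref{eq:null-ci} is true}\}$ for the set of null pairs, and for $\mathcal{S} \subseteq \mathcal{N}$ let $s_{\mathcal{S}} \in \{\pm 1\}^{E \times p}$ denote the sign matrix equal to $-1$ on $\mathcal{S}$ and $+1$ elsewhere. Because $\bm{\epsilon}$ pins every non-null coordinate to $+1$ and flips each null coordinate independently with probability $1/2$, its support is exactly $\{s_{\mathcal{S}} : \mathcal{S} \subseteq \mathcal{N}\}$; taking $\bm{\epsilon}$ independent of $\mathbf{W}$, the target $\mathbf{W} \,\oset{d}{=}\, \mathbf{W} \odot \bm{\epsilon}$ then follows by integrating over $\bm{\epsilon}$ as soon as we show
\[
  \mathbf{W} \,\oset{d}{=}\, \mathbf{W} \odot s_{\mathcal{S}} \qquad \text{for every } \mathcal{S} \subseteq \mathcal{N}.
\]

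To prove this invariance I would introduce the auxiliary statistics $\mathbf{W}'$ obtained by running the same pipeline---$\boldsymbol{\tau}$ followed by the contrast~\eqref{eq:def-w}---on the swapped data $[\mathbf{X},\tilde{\mathbf{X}}]_{\mathrm{swap}(\mathcal{S})}$, and relate $\mathbf{W}'$ to $\mathbf{W}$ in two ways. First, since the samples are independent across environments, the block $(\mathbf{Y}, [\mathbf{X},\tilde{\mathbf{X}}])$ factorizes over $e$, and for each $(e,j) \in \mathcal{S}$ the per-environment pairwise exchangeability~\eqref{eq:knock_cond_1}, combined with the null condition $\mathcal{H}^{\mathrm{ci},e}_j$ and the fact that the knockoffs $\tilde{\mathbf{X}}^e$ are generated without looking at $\mathbf{Y}^e$, upgrades to the joint exchangeability $(\mathbf{Y}, [\mathbf{X},\tilde{\mathbf{X}}]_{\mathrm{swap}(\mathcal{S})}) \,\oset{d}{=}\, (\mathbf{Y}, [\mathbf{X},\tilde{\mathbf{X}}])$; applying the common randomized map (with its internal randomness drawn independently of the data) to two equally distributed inputs gives $\mathbf{W}' \,\oset{d}{=}\, \mathbf{W}$. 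Second, the swap-equivariance~\eqref{eq:swap-tau} says that, conditionally on $(\mathbf{Y}, [\mathbf{X},\tilde{\mathbf{X}}])$, the measures $\boldsymbol{\tau}(\mathbf{Y}, [\mathbf{X},\tilde{\mathbf{X}}]_{\mathrm{swap}(\mathcal{S})})$ have the same law as $[\boldsymbol{\tau}(\mathbf{Y}, [\mathbf{X},\tilde{\mathbf{X}}])]_{\mathrm{swap}(\mathcal{S})}$; because exchanging $T^e_j \leftrightarrow \tilde{T}^e_j$ sends the contrast $T^e_j - \tilde{T}^e_j$ to its negative, this yields $\mathbf{W}' \,\oset{d}{=}\, \mathbf{W} \odot s_{\mathcal{S}}$, first conditionally on the data and hence also marginally. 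Chaining the two relations gives $\mathbf{W} \,\oset{d}{=}\, \mathbf{W}' \,\oset{d}{=}\, \mathbf{W} \odot s_{\mathcal{S}}$, as required.

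Both ingredients are standard in a single environment~\cite{barber2015controlling,candes2018panning}, so the real work is organizational: checking that they factorize cleanly over environments and that the randomization carried by $\boldsymbol{\tau}$ is tracked consistently. The step I expect to require the most care---and the main obstacle---is exactly this interplay of conditionings: identity~\eqref{eq:swap-tau} holds only in distribution conditional on the data, whereas the exchangeability derived from~\eqref{eq:knock_cond_1} is a statement about the data marginally. The argument cannot collapse these into one pointwise equality (as one could for a deterministic statistic); instead it must route through the auxiliary object $\mathbf{W}'$, using the conditional identity on one side and the marginal exchangeability on the other, and only then compare marginal laws. A secondary point worth stating explicitly is that the null hypotheses enter solely to promote the marginal exchangeability of $[\mathbf{X}^e,\tilde{\mathbf{X}}^e]$ to a joint exchangeability including $\mathbf{Y}^e$, which is precisely why $\bm{\epsilon}$ leaves the non-null coordinates fixed at $+1$.
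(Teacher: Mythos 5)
Your proposal is correct and follows essentially the same route as the paper's proof: both pass through the auxiliary statistics computed on $[\mathbf{X},\tilde{\mathbf{X}}]_{\mathrm{swap}(\mathcal{S})}$, using the conditional swap-equivariance~\eqref{eq:swap-tau} together with the anti-symmetry of~\eqref{eq:def-w} on one side, and the null exchangeability $[\mathbf{X},\tilde{\mathbf{X}}]_{\mathrm{swap}(\mathcal{S})} \,\oset{d}{=}\, [\mathbf{X},\tilde{\mathbf{X}}] \mid \mathbf{Y}$ (Lemma~3.2 of~\cite{candes2018panning}, which you re-derive rather than cite) on the other. The only cosmetic difference is that you fix $\mathcal{S} \subseteq \mathcal{N}$ and integrate over $\bm{\epsilon}$ afterwards, whereas the paper takes $\mathcal{S}$ to be the random set $\{(e,j):\epsilon^e_j=-1\}$ directly.
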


\begin{proof}
Let $\bm{w}$ denote the matrix-valued (randomized) function computing the knockoff statistics in all environments, consistently with~\eqref{eq:swap-tau}--\eqref{eq:def-w}, i.e., $\mathbf{W} = \bm{w}(\mathbf{Y}, [\mathbf{X}, \tilde{\mathbf{X}}])$.
With $\bm{\epsilon}$ as in Definition~\ref{def:kenv-stats}, define the set $\mathcal{S} = \{(e,j) : \epsilon^e_j = -1\}$.
It follows from~\eqref{eq:swap-tau}--\eqref{eq:def-w} that $\bm{w}(\mathbf{Y}, [\mathbf{X}, \tilde{\mathbf{X}}]_{\text{swap}(\mathcal{S})}) \,\oset{d}{=}\, \bm{\epsilon} \odot \bm{w}(\mathbf{Y}, [\mathbf{X}, \tilde{\mathbf{X}}])$ conditional on $\mathbf{Y}$ and $[\mathbf{X}, \tilde{\mathbf{X}}]$, which implies
$\bm{w}(\mathbf{Y}, [\mathbf{X}, \tilde{\mathbf{X}}]_{\text{swap}(\mathcal{S})}) \,\oset{d}{=}\, \bm{\epsilon} \odot \bm{w}(\mathbf{Y}, [\mathbf{X}, \tilde{\mathbf{X}}])$ marginally.
We also know that $[\mathbf{X}, \tilde{\mathbf{X}}]_{\text{swap}(\mathcal{S})} \,\oset{d}{=}\, [\mathbf{X}, \tilde{\mathbf{X}}] \mid \mathbf{Y}$ (Lemma 3.2 in~\cite{candes2018panning}), which implies $\bm{w}(\mathbf{Y}, [\mathbf{X}, \tilde{\mathbf{X}}]_{\text{swap}(\mathcal{S})}) \,\oset{d}{=}\, \bm{w}(\mathbf{Y}, [\mathbf{X}, \tilde{\mathbf{X}}])$. Therefore, $\bm{w}(\mathbf{Y}, [\mathbf{X}, \tilde{\mathbf{X}}]) = \bm{\epsilon} \odot \bm{w}(\mathbf{Y}, [\mathbf{X}, \tilde{\mathbf{X}}])$.
\end{proof}

An intuitive algorithm to compute powerful multi-environment statistics is the following, which we call the {\em empirical cross-prior} approach.
The idea is that the requirements of Proposition~\ref{prop:coin-flip} are satisfied even if each row $W^e$ leverages the observations from other environments $\{1,\ldots,E\} \setminus \{e\}$, as long as the latter are perturbed through suitable random column swaps. This perturbation only acts on a copy of $[\mathbf{X}, \tilde{\mathbf{X}}]$ and all subsequent steps of the analysis will begin with the original data, so the order in which the environments are processed is irrelevant.
Concretely, we perturb the data by randomly swapping each variable with its own knockoff based on a coin flip, independently observation by observation. Then, we estimate importance weights that are symmetric with respect to variables and knockoffs; these will serve as prior information for the subsequent analysis step (hence the name for this approach). Figure~\ref{fig:schematic} provides a schematic of the full procedure, while the details are explained below.

\begin{figure}[!htb]
\centering
\includegraphics[width = 0.85\textwidth]{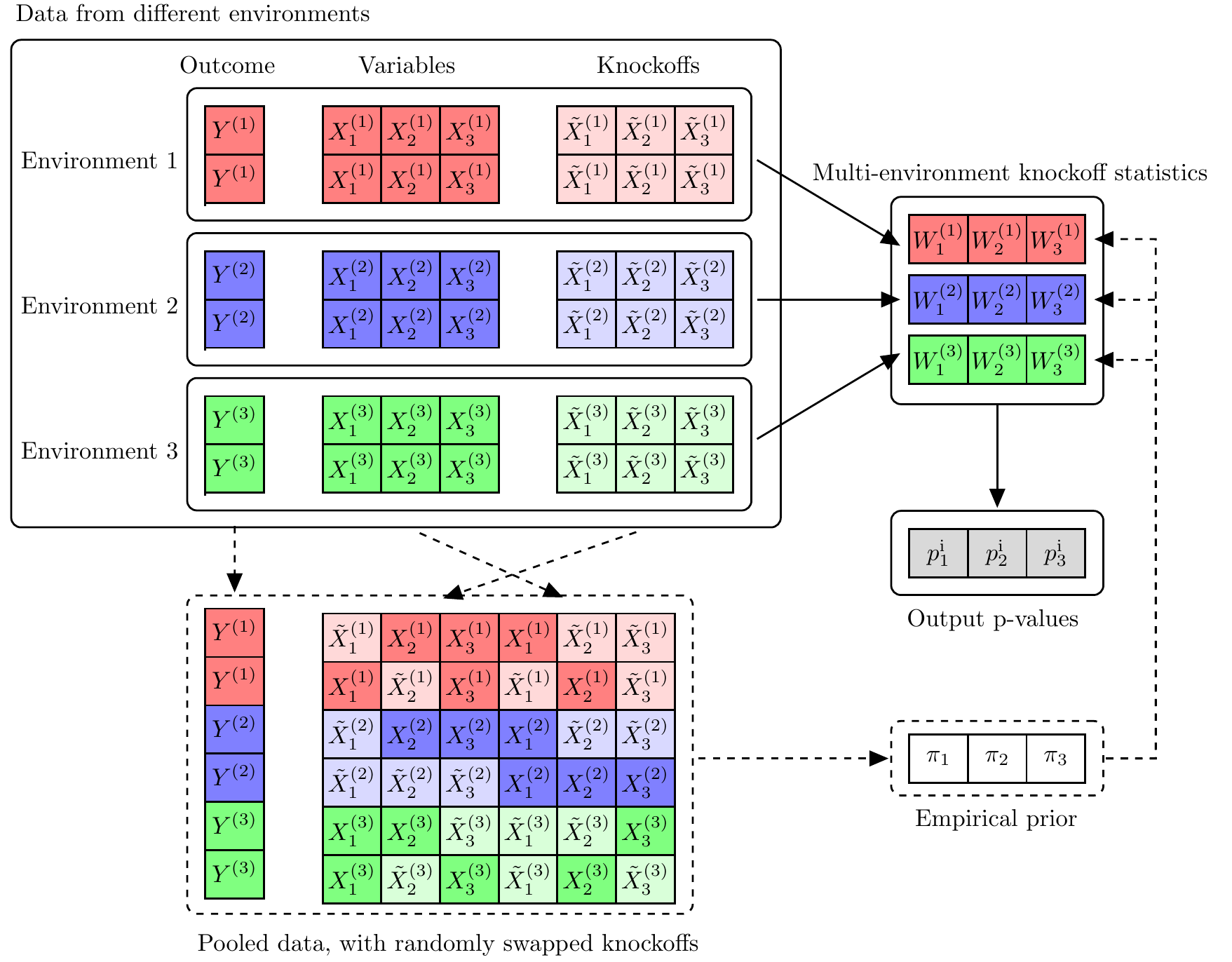}
\caption{Schematics for a multi-environment knockoff analysis. In this example, there are 3 variables, 3 environments, and 2 observations per environment. The solid arrows represent the analysis based on data-splitting statistics, in which the environments are analyzed separately before combining the resulting knockoff statistics. The dashed arrows represent the additional steps corresponding to our empirical cross-prior statistics, which analyze jointly the data from all environments. The darker blocks indicate the real data, while the lighter ones indicate the knockoffs.}
\label{fig:schematic}
\end{figure}

Let $\mathbf{V} \in \{0,1\}^{En \times p}$ be a random matrix of i.i.d.~coin flips, and $[\mathbf{X},\tilde{\mathbf{X}}]_{\text{swap}(\mathbf{V})}$ be the perturbed matrix obtained by swapping the $i$-th observation of $X_j$ with the corresponding knockoffs if and only if $V_{ij} = 1$.
We compute prior importance measures $T^{\text{prior}}$ (resp.~$\tilde{T}^{\text{prior}}$) for all variables (resp.~knockoffs) as the absolute values of the regression coefficients estimated by fitting a sparse generalized linear regression model (e.g., the lasso) to predict $\mathbf{Y}$ given $[\mathbf{X},\tilde{\mathbf{X}}]_{\text{swap}(\mathbf{V})}$, tuning the regularization parameter by cross-validation. The results are combined symmetrically into a prior weight $\pi_j$ for each $j$, e.g., $\pi_j = \zeta(T^{\text{prior}}_j+\tilde{T}^{\text{prior}}_j)$ for some positive and decreasing function~$\zeta$, such as $\zeta(t) = 1/(0.05+t)$.
Finally, we compute the importance measures $T^e$ and $\tilde{T}^e$ based on the unperturbed data in the $e$-th environment by combining the above prior with the same lasso-based approach as in the data-splitting case.
In particular, the regularization penalty is now feature-specific and depends on two parameters, $\lambda^e>0$ and $\gamma^e \in [0,1]$, both tuned by cross-validation, as well as on the prior weights. Specifically, the penalty for the $j$-th variable is $\lambda^e_j = \lambda^e (1-\gamma^e) + \gamma^e \pi_j$.
This reduces to data splitting if we fix $\gamma^e =0$. In general, larger values of $\gamma^e$ may improve power by making it less likely for the environment-specific models to select spurious variables.
However, the prior may not always be very informative and, even if it is, it is unclear how much weight it should be given; thus we also tune $\gamma^e$ by cross-validation. To avoid a two-dimensional grid search, in practice we first tune $\lambda^e$ and then $\gamma^e$.

\begin{prop} \label{prop:cross-prior}
The statistics $\mathbf{W}$ obtained by applying~\eqref{eq:def-w} to the empirical cross-prior importance measures $\mathbf{T}$ and $\tilde{\mathbf{T}}$ described above satisfy Definition~\ref{def:kenv-stats}.
\end{prop}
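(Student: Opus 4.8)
My plan is to reduce everything to the swap-equivariance property \eqref{eq:swap-tau}: once I show that the empirical cross-prior importance measures $[\mathbf{T}, \tilde{\mathbf{T}}]$ satisfy \eqref{eq:swap-tau}, Proposition~\ref{prop:coin-flip} immediately gives that the statistics $\mathbf{W}$ formed via \eqref{eq:def-w} obey Definition~\ref{def:kenv-stats}. So the whole task is to verify \eqref{eq:swap-tau} for the two-stage map $\boldsymbol{\tau}$. I would write it as $\boldsymbol{\tau}(\mathbf{Y}, [\mathbf{X}, \tilde{\mathbf{X}}]; \mathbf{V}, \xi)$, making explicit the dependence on the coin-flip matrix $\mathbf{V}$ used for the prior and on an independent seed $\xi$ carrying the remaining cross-validation randomness. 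Stage~1 computes the prior weights $\pi = (\pi_1,\ldots,\pi_p)$ from the perturbed pooled data $[\mathbf{X}, \tilde{\mathbf{X}}]_{\text{swap}(\mathbf{V})}$, and Stage~2 computes each $T^e_j, \tilde{T}^e_j$ from the \emph{unperturbed} environment-$e$ data through a lasso whose feature-specific penalty $\lambda^e_j = \lambda^e(1-\gamma^e) + \gamma^e \pi_j$ is, by construction, the same for $X^e_j$ and $\tilde{X}^e_j$.

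The central idea is that the random perturbation $\mathbf{V}$ lets any swap $\mathcal{S}$ be absorbed into a relabeling of the coin flips. Fixing the data and any $\mathcal{S} \subseteq \{1,\ldots,E\} \times \{1,\ldots,p\}$, I would define $\mathbf{V}'$ from $\mathbf{V}$ by flipping $V_{ij} \mapsto 1 - V_{ij}$ for every observation $i$ in an environment $e$ with $(e,j) \in \mathcal{S}$, leaving all other entries untouched. A direct row-by-row, column-by-column comparison then establishes the identity
\begin{align*}
  \big[[\mathbf{X}, \tilde{\mathbf{X}}]_{\text{swap}(\mathcal{S})}\big]_{\text{swap}(\mathbf{V})} = [\mathbf{X}, \tilde{\mathbf{X}}]_{\text{swap}(\mathbf{V}')},
\end{align*}
with the variable-slot and knockoff-slot labelings coinciding on both sides. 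Consequently the prior is identical under this coupling: the $\pi$ produced in Stage~1 from the swapped data with $\mathbf{V}$ equals the $\pi$ produced from the original data with $\mathbf{V}'$.

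With $\pi$ pinned down, Stage~2 collapses to the standard single-environment argument. Because the penalty assigned to $X^e_j$ equals that assigned to $\tilde{X}^e_j$ (both depend only on the symmetric scalar $\pi_j$), exchanging these two columns for $(e,j) \in \mathcal{S}$ leaves the penalized lasso objective invariant and merely interchanges the corresponding fitted coefficients; the cross-validated choices of $\lambda^e, \gamma^e$ are likewise unaffected since swapping a variable--knockoff pair with equal penalty does not change the fitted predictions, hence the validation error. Applying this environment by environment yields, for the same $\pi$ and seed $\xi$,
\begin{align*}
  \boldsymbol{\tau}\big(\mathbf{Y}, [\mathbf{X}, \tilde{\mathbf{X}}]_{\text{swap}(\mathcal{S})}; \mathbf{V}, \xi\big) = \Big[\boldsymbol{\tau}\big(\mathbf{Y}, [\mathbf{X}, \tilde{\mathbf{X}}]; \mathbf{V}', \xi\big)\Big]_{\text{swap}(\mathcal{S})},
\end{align*}
an identity holding pointwise under the coupling $\mathbf{V} \leftrightarrow \mathbf{V}'$.

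To conclude I would take distributions over $(\mathbf{V}, \xi)$. Since $\mathbf{V}$ has i.i.d.\ fair coin-flip entries, flipping the deterministic coordinate set prescribed by $\mathcal{S}$ preserves its law, so $\mathbf{V}' \overset{d}{=} \mathbf{V}$ and, with $\xi$ independent, $(\mathbf{V}', \xi) \overset{d}{=} (\mathbf{V}, \xi)$. The displayed pointwise identity then upgrades to \eqref{eq:swap-tau} conditionally on $\mathbf{Y}, [\mathbf{X}, \tilde{\mathbf{X}}]$, and Proposition~\ref{prop:coin-flip} finishes the proof. I expect the main obstacle to be Stage~1: one must argue carefully that a prior \emph{estimated jointly from all environments} does not destroy exchangeability, and the crux is precisely the coin-flip relabeling $\mathbf{V} \mapsto \mathbf{V}'$ together with the symmetry of $\pi_j$ in $X^e_j$ and $\tilde{X}^e_j$, which is what keeps the Stage-2 penalties swap-invariant.
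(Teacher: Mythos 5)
Your proof is correct and follows essentially the same route as the paper's: reduce to verifying~\eqref{eq:swap-tau} via Proposition~\ref{prop:coin-flip}, note that the prior $\pi$ is distributionally unaffected because composing $\text{swap}(\mathcal{S})$ with $\text{swap}(\mathbf{V})$ is indistinguishable from $\text{swap}(\mathbf{V})$ alone, and observe that the environment-wise lasso with symmetric penalties merely swaps $T^e_j$ with $\tilde{T}^e_j$ for $(e,j)\in\mathcal{S}$. Your explicit coupling $\mathbf{V}\mapsto\mathbf{V}'$ is simply a more detailed rendering of the step the paper states informally, so no substantive difference.
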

\begin{proof}
By Proposition~\ref{prop:coin-flip}, it suffices to show $[\mathbf{T}, \tilde{\mathbf{T}}]$ satisfy~\eqref{eq:swap-tau} because~\eqref{eq:def-tau} is trivial.
For any $\mathcal{S} \subseteq \{1,\ldots,E\} \times \{1,\ldots,p\}$ consider how $[\mathbf{T}, \tilde{\mathbf{T}}]$ would change if $[\mathbf{X}, \tilde{\mathbf{X}}]$  were replaced by $[\mathbf{X}, \tilde{\mathbf{X}}]_{\text{swap}(\mathcal{S})}$. The joint distribution of the prior weights is invariant because $\pi$ depends on $[\mathbf{X}, \tilde{\mathbf{X}}]$ through $[\mathbf{X},\tilde{\mathbf{X}}]_{\text{swap}(\mathbf{V})}$, and the composition of $\text{swap}(\mathcal{S})$ with $\text{swap}(\mathbf{V})$ is statistically indistinguishable from $\text{swap}(\mathbf{V})$. Similarly, the cross-validation errors within each environment are invariant to the ordering of variables and knockoffs, and so are the optimal values of $\lambda^{e}$ and $\gamma^{e}$. Therefore, the only change is that the final importance measure $T_j^e$ is swapped with the corresponding $\tilde{T}_j^e$ if and only if $(e,j) \in \mathcal{S}$.
\end{proof}

Perturbing the data by randomly swapping variables and knockoffs is essential (see the proof of Proposition~\ref{prop:cross-prior}) and, although it weakens the signals, it does not destroy them entirely.
In fact, the empirical prior can learn which {\em pairs} of variables and knockoffs are likely to be important, although the effective sample for this task is cut in half.
In any case, the prior cannot do much harm because its influence on the output is controlled by $\gamma^e$, which is tuned by cross-validation. Therefore, we would expect at worst to select $\gamma^e\approx 0$ and thus approximately recover the data-splitting solution.

Statistics computed by naively looking at the full unperturbed data from all environments are generally invalid. For example, one could imagine using all data to compute the magnitudes of $W^e_j$, similarly to how we estimate the above empirical prior but without the initial perturbation, and then leveraging only the observations in environment $e$ to determine the signs.
This would yield statistics that are separately valid in each environment but are not mutually independent for different $e$, compromising our analysis downstream; see Appendix~\ref{sec:app-counter} and Figure~\ref{fig:counter_example} therein.

\subsection{The multi-environment knockoff filter} \label{sec:consistency}

Starting from a matrix $\mathbf{W}$ of multi-environment knockoff statistics, we can combine its rows to obtain one-bit conservative p-values~\cite{barber2015controlling} for testing $\mathcal{H}^{\ici}_{j}$ in~\eqref{eq:null-ici}.
Precisely, for each $j \in \{1,\ldots,p\}$, we compute
\begin{align} \label{eq:ici-pvalues}
    & p_j^{\inv} = \begin{cases}
        1/2, & \text{ if } \min \left\{ \sign \left(W_{j}^{e}\right) \right\}_{e=1}^{E} = +1, \\
        1, & \text{ otherwise.}
    \end{cases}
\end{align}
The order in which these hypotheses will be tested depends on the absolute values of $\mathbf{W}$, which we combine column-wise with some symmetric function $\bar{w}$ to obtain invariant statistics $\left|W_{j}^{\inv}\right|$ for all $j \in \{1,\ldots,p\}$:
\begin{align} \label{eq:invariant-abs-stat}
    |W_{j}^{\inv}| = \bar{w} \p{|W_{j}^{1}|, \dots,|W_{j}^{E}| }.
\end{align}
We will adopt here $\bar{w} \p{|W_{j}^{1}|, \dots,|W_{j}^{E}| } = \prod_{e=1}^{E} |W^e_j|$, although other options are possible.
The next result states that, conditional on the ordering determined by~\eqref{eq:invariant-abs-stat}, the one-bit p-values in~\eqref{eq:ici-pvalues} are valid for $\mathcal{H}^{\ici}_{j}$~\eqref{eq:null-ici}.

\begin{prop}[Multi-environment p-values] \label{prop:pvals-invariant}
If the statistics $\mathbf{W}$ satisfy Definition~\ref{def:kenv-stats}, the p-values $p_j^{\inv}$ in~\eqref{eq:ici-pvalues} corresponding to true $\mathcal{H}^{\ici}_{j}$~\eqref{eq:null-ici} stochastically dominate the uniform distribution conditional on $|W^{\inv}| = (|W_{1}^{\inv}|, \ldots, |W_{p}^{\inv}|)$. Further, these p-values are ``almost independent", i.e., $\P{p_j^{\inv} \leq \alpha \mid |W^{\inv}|, p_{-j}^{\inv}} \leq \alpha$,  $\forall \alpha \in [0,1]$ if $\mathcal{H}^{\ici}_{j}$ is true.
\end{prop}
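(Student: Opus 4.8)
The plan is to reduce both claims to a single structural fact read off from Definition~\ref{def:kenv-stats}: conditional on the matrix of absolute values $|\mathbf{W}|$, the signs of all \emph{null} entries (those $(e,j)$ for which $\mathcal{H}^{\mathrm{ci},e}_j$ is true) are i.i.d.\ uniform on $\{-1,+1\}$ and independent of the signs of the non-null entries. I would establish this first. Write $W^e_j = S^e_j\,|W^e_j|$ with $S^e_j=\sign(W^e_j)$. Since $\bm{\epsilon}$ is independent of $\mathbf{W}$ and equals $+1$ on every non-null coordinate, the invariance $\mathbf{W}\,\oset{d}{=}\,\mathbf{W}\odot\bm{\epsilon}$ leaves $|\mathbf{W}|$ unchanged and states that the conditional law of the sign pattern given $|\mathbf{W}|$ is invariant under independently flipping the null signs by fair coins. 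Conditioning additionally on the non-null signs and using that $\bm{\epsilon}$ is independent of everything, the null sign vector $S^{\mathrm{null}}$ satisfies $S^{\mathrm{null}}\,\oset{d}{=}\,S^{\mathrm{null}}\odot\epsilon^{\mathrm{null}}$; but $S^{\mathrm{null}}\odot\epsilon^{\mathrm{null}}$ is uniform on the hypercube regardless of the law of $S^{\mathrm{null}}$, so the null signs are i.i.d.\ fair coins, independent of $|\mathbf{W}|$ and of the non-null signs. This is the analogue of the coin-flip lemma of~\cite{candes2018panning} and is the only step that needs real care.

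Next I record that $p_j^{\inv}$ takes only the values $1/2$ and $1$, with $p_j^{\inv}=1/2$ exactly when $S^e_j=+1$ for every $e$. If $\mathcal{H}^{\ici}_j$ holds, then by~\eqref{eq:null-ici} there is a fixed (non-random) environment $e^{\star}$ with $\mathcal{H}^{\mathrm{ci},e^{\star}}_j$ true, so $S^{e^{\star}}_j$ is one of the null signs. For the stochastic-dominance claim I condition on the full matrix $|\mathbf{W}|$. Since $\{S^e_j=+1\ \forall e\}\subseteq\{S^{e^{\star}}_j=+1\}$ and $S^{e^{\star}}_j$ is a fair coin given $|\mathbf{W}|$, I obtain $\P{p_j^{\inv}=1/2\mid|\mathbf{W}|}\le 1/2$. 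Hence $\P{p_j^{\inv}\le\alpha\mid|\mathbf{W}|}=0\le\alpha$ for $\alpha<1/2$, equals $\P{p_j^{\inv}=1/2\mid|\mathbf{W}|}\le 1/2\le\alpha$ for $1/2\le\alpha<1$, and is $\le 1=\alpha$ for $\alpha=1$; so the bound $\le\alpha$ holds for every $\alpha$. Because $|W^{\inv}|$ is a deterministic function of $|\mathbf{W}|$ through $\bar{w}$, taking the conditional expectation given $|W^{\inv}|$ (tower property) preserves the inequality and yields $\P{p_j^{\inv}\le\alpha\mid|W^{\inv}|}\le\alpha$.

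For the near-independence claim I would condition on more: on $|\mathbf{W}|$ together with all signs $\{S^e_{j'}:j'\ne j\}$, a $\sigma$-algebra that determines both $|W^{\inv}|$ and $p_{-j}^{\inv}$. By the structural fact from the first paragraph, the null sign $S^{e^{\star}}_j$ is independent of $|\mathbf{W}|$ and of every sign in the other columns, so it is still a fair coin after this larger conditioning; repeating the display above gives $\P{p_j^{\inv}\le\alpha\mid|\mathbf{W}|,\{S^e_{j'}\}_{j'\ne j}}\le\alpha$. Coarsening to the sub-$\sigma$-algebra generated by $(|W^{\inv}|,p_{-j}^{\inv})$ via the tower property then delivers $\P{p_j^{\inv}\le\alpha\mid|W^{\inv}|,p_{-j}^{\inv}}\le\alpha$, as claimed. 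The main obstacle is the first paragraph—correctly deducing independent fair-coin null signs from the swap-invariance and noting that the non-null signs, though arbitrary, are irrelevant because $S^{e^{\star}}_j$ is independent of them; everything afterwards is bookkeeping with the tower property.
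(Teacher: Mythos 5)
Your proof is correct and follows essentially the same route as the paper's: identify a fixed null environment $e^\star$, bound $\P{p_j^{\inv}=1/2\mid\cdot}$ by $\P{W_j^{e^\star}>0\mid\cdot}\leq 1/2$ using the coin-flip property of Definition~\ref{def:kenv-stats}, and conclude from $p_j^{\inv}\in\{1/2,1\}$. The only difference is one of detail: you spell out explicitly, via the conditional fair-coin lemma for the null signs and the tower property, why conditioning on $(|W^{\inv}|,p_{-j}^{\inv})$ is legitimate—a step the paper compresses into the remark that $|W^{\inv}|$ is a function of $|\mathbf{W}|$ and $p_{-j}^{\inv}$ of $\mathbf{W}_{-j}$ (and, like the paper, you should note that an exactly-fair coin requires $W_j^{e^\star}\neq 0$, though the inequality holds trivially otherwise).
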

\begin{proof}
It suffices to prove the second part of this statement because that implies the first one.
Take any $j \in \{1,\ldots,p\}$ corresponding to a true $\mathcal{H}^{\ici}_{j}$~\eqref{eq:null-ici}, and assume for simplicity $|W^{\inv}_j|>0$ (if not, $p_j^{\inv} = 1$ by definition).
By definition of $\mathcal{H}^{\ici}_{j}$, we know $\exists e \in \{1,\ldots,E\}$ such that the environment-specific null $\mathcal{H}^{\mathrm{ci}, e}_{j}$ in~\eqref{eq:null-ci} is true. Therefore,
\begin{align*}
    \P{ p_j^{\inv} = 1/2 \mid |W^{\inv}|, p_{-j}^{\inv}}
    & = \P{W_{j}^{l} \geq 0 \; \forall l \in \{1,\ldots,E\} \mid |W^{\inv}|, p_{-j}^{\inv}} \\
    & \leq \P{W_{j}^{e} > 0 \mid |W^{\inv}|, p_{-j}^{\inv} }
      = 1/2.
\end{align*}
The above inequality follows from $|W^{\inv}_j|>0$ and hence $W^e_j \neq 0$ for all $e$, while the last equality follows from Definition~\ref{def:kenv-stats} because $|W^{\inv}|$ is a function of $|\mathbf{W}|$ and $p_{-j}^{\inv}$ of $\mathbf{W}_{-j}$.
The proof is completed by $p_j^{\inv} \in \{1/2,1\}$.
\end{proof}

Proposition~\ref{prop:pvals-invariant} suggests applying a sequential testing approach, such as selective SeqStep+ (the knockoff filter)~\cite{barber2015controlling}, to the p-values in~\eqref{eq:ici-pvalues}.
However, this is not obviously valid because the null p-values $p_j^{\inv}$ are not independent~\cite{barber2015controlling}. In fact, each of them is also affected by the signs of entries of $\mathbf{W}$ corresponding to non-null environments, which need not be independent; $\mathcal{H}^{\ici}_{j}$ only says there exists a null entry in the $j$-th column of $\mathbf{W}$, but $p_j^{\inv}$ also counts the signs of the others.
Fortunately, the next result, proved in Appendix~\ref{app:proof-selseqstep}, shows the ``almost-independence'' established by Proposition~\eqref{prop:pvals-invariant} is sufficient. The intuition is that this dependence at worse makes our p-values conservative.

\begin{restatable}[Multi-environment knockoff filter]{theorem}{theoremseqstep} \label{th:seqstep-fdr}
The selective SeqStep+ procedure of~\cite{barber2015controlling} applied to p-values $p_j^{\inv}$ ordered by $|W^{\inv}_j|$ and satisfying the ``almost-independence'' property of Proposition~\eqref{prop:pvals-invariant}, i.e., $\P{p_j^{\inv} \leq \alpha \mid |W^{\inv}|, p_{-j}^{\inv}} \leq \alpha$ for any $\alpha \in [0,1]$, controls the false discovery rate below the nominal level.
\end{restatable}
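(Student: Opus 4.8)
The plan is to reduce the theorem to a single inequality about the null part of the realized false discovery proportion, and then to control that quantity with the same martingale machinery that underpins selective SeqStep+~\cite{barber2015controlling}, replacing their independence assumption with the weaker ``almost-independence'' of Proposition~\ref{prop:pvals-invariant}. Throughout I would condition on $|W^{\inv}|$ (which fixes the order in which the hypotheses are examined) and also on the signs of all non-null entries of $\mathbf{W}$, so that the only remaining randomness is in the null signs. Order the indices by decreasing $|W^{\inv}_j|$, and write $\hat k$ for the cutoff the procedure selects. Let $S(\hat k)$ and $F(\hat k)$ count the ``successes'' ($p_j^{\inv}=1/2$) and ``failures'' ($p_j^{\inv}=1$) among the top $\hat k$ indices, and let $V^+(\hat k)$ and $V^-(\hat k)$ be the corresponding counts restricted to the true nulls.

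Since the p-values take values in $\{1/2,1\}$, the SeqStep+ threshold parameter is $c=1/2$ and its multiplier $c/(1-c)$ equals $1$, so the procedure stops only where $(1+F(\hat k))/(S(\hat k)\vee 1)\le\alpha$. The rejection set is the successes among the top $\hat k$, so the false discovery proportion is $V^+(\hat k)/(S(\hat k)\vee 1)$. Because every null failure is a failure, $V^-(\hat k)\le F(\hat k)$, and I can factor
\begin{align*}
  \FDR = \E{\frac{V^+(\hat k)}{1+V^-(\hat k)}\cdot\frac{1+V^-(\hat k)}{S(\hat k)\vee 1}} \le \alpha\,\E{\frac{V^+(\hat k)}{1+V^-(\hat k)}}.
\end{align*}
It therefore suffices to show $\E{V^+(\hat k)/(1+V^-(\hat k))}\le 1$. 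Note that the non-null signs, now fixed, influence where the procedure stops but enter this factorization only through $S(\hat k)$ and $F(\hat k)$, which the second factor has already absorbed; $\hat k$ is thus a function of the null sign sequence alone.

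This last bound is exactly what the Barber--Cand{\`e}s analysis of selective SeqStep+ delivers, by running a supermartingale built from the null sign process against the reverse stopping rule that defines $\hat k$. Their argument can be organized so that the only input concerning the null signs is that, conditional on the past of the relevant filtration, each examined null is a success with probability at most $1/2$. The main obstacle is that here the null signs are \emph{not} independent, so their lemma cannot be invoked verbatim. The resolution is that Proposition~\ref{prop:pvals-invariant} supplies precisely a conditional super-uniformity statement, $\P{p_j^{\inv}=1/2\mid |W^{\inv}|,p_{-j}^{\inv}}\le 1/2$ for null $j$, with the conditioning on \emph{all} other p-values simultaneously. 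At the step that reveals $p_j^{\inv}$, the martingale filtration conditions only on a sub-$\sigma$-field $\mathcal{G}\subseteq\sigma(|W^{\inv}|,p_{-j}^{\inv})$ that crucially excludes $p_j^{\inv}$ itself, so the tower property upgrades the bound to the coarser field, $\P{p_j^{\inv}=1/2\mid\mathcal{G}}=\E{\P{p_j^{\inv}=1/2\mid|W^{\inv}|,p_{-j}^{\inv}}\mid\mathcal{G}}\le 1/2$.

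With this upgrade in place, the per-step supermartingale inequality holds with the conditional success probability bounded above by $1/2$, which can only make the relevant process more conservative, and the stopping-time argument of~\cite{barber2015controlling} then yields $\E{V^+(\hat k)/(1+V^-(\hat k))}\le 1$. The delicate point I would check carefully—and the place where almost-independence does the real work—is that the $\sigma$-field the martingale conditions on at the moment it examines a given null genuinely omits that null's sign, so that the tower-property upgrade is legitimate for every term; once this is verified, the remainder is a direct transcription of the selective SeqStep+ proof.
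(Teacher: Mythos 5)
Your opening reduction---factoring the false discovery proportion as $\frac{V^+(\hat k)}{1+V^-(\hat k)}\cdot\frac{1+V^-(\hat k)}{S(\hat k)\vee 1}$, bounding the second factor by $\alpha$ via the stopping rule and $V^-\le F$, and reducing the theorem to $\E{V^+(\hat k)/(1+V^-(\hat k))}\le 1$---is correct and matches the paper's first chain of inequalities. The gap is in the mechanism you propose for that last bound. The backward filtration $\mathcal{F}_k$ in the selective SeqStep+ supermartingale argument must contain the running counts $V^+(k'),V^-(k')$ for all $k'\ge k$ (this is what makes $\hat k$ a stopping time), and $V^{\pm}(k)$ are functions of $p_{j_k}^{\inv}$ itself. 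Hence the $\sigma$-field conditioned on at the step that ``examines'' position $k$ is \emph{not} a sub-$\sigma$-field of $\sigma(|W^{\inv}|,p_{-j_k}^{\inv})$, and the tower-property upgrade you invoke is illegitimate; this is exactly the delicate point you flagged, and it fails. A second, independent problem is that the supermartingale step does not run on ``each examined null is a success with probability at most $1/2$ given the past'': it needs the conditional law of the indicator at position $k$ given the aggregate counts $V^{\pm}(k)$, which in the i.i.d.\ case equals $V^+(k)/(V^+(k)+V^-(k))$ by exchangeability. A one-sided bound on $\P{p_j^{\inv}=1/2\mid |W^{\inv}|,p_{-j}^{\inv}}$ gives no control of that conditional law, and conditioning on the non-null signs of $\mathbf{W}$ does not restore exchangeability of the null $p_j^{\inv}$'s, since each of them also depends on the (possibly mutually dependent) non-null entries of its own column.

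The paper closes this gap with a leave-one-out argument in the style of Theorem~2 of~\cite{barber2018robust} rather than the backward martingale. One writes
\begin{align*}
\frac{V^+(\hat k)}{1+V^-(\hat k)}\;\le\;\sum_{j\in\mathcal{H}^{\ici}}\frac{\I{|W_j^{\inv}|\ge\hat\omega,\;p_j^{\inv}\le c}}{1+\sum_{l\in\mathcal{H}^{\ici},\,l\ne j}\I{|W_l^{\inv}|\ge\hat\omega,\;p_l^{\inv}>c}},
\end{align*}
replaces $\hat\omega$ by the leave-one-out threshold $\hat\omega_j$ obtained by sending $p_j^{\inv}\to 0$ (legitimate on the event $p_j^{\inv}\le c$). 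Both $\hat\omega_j$ and the $j$-th denominator are then measurable with respect to $(|W^{\inv}|,p_{-j}^{\inv})$---precisely the conditioning under which almost-independence is stated---so each success indicator can be traded for $\frac{c}{1-c}$ times the corresponding failure indicator, and a deterministic counting lemma (Lemma~6 of~\cite{barber2018robust}) shows the resulting sum equals $1$. If you want to retain a martingale flavor, you must redesign the filtration so that it genuinely omits the examined null's own sign; the leave-one-out threshold is the standard device that accomplishes this, and it is the step your proposal is missing.
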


\subsection{Testing partially consistent conditional associations} \label{sec:partial-consistency}

Starting from multi-environment knockoff statistics $\mathbf{W}$, we summarise them column-wise as follows. For each $j \in \{1,\ldots,p\}$, let $n^-_j$ count the negative signs in the $j$-th column, and let $n^0_j$ be the number of zeros. Then, compute
\begin{align}  \label{eq:pici-pvalues}
\begin{split}
  p_j^{\pinv,r}
  & \defeq \Psi\left( n^-_j -1, (E-r+1-n^0_j) \lor 0, \frac{1}{2} \right) + U_j \cdot \psi\left( n^-_j, (E-r+1-n^0_j) \lor 0, \frac{1}{2}\right),
\end{split}
\end{align}
where $\Psi(x, m,\pi)$ is the binomial cumulative distribution function evaluated at $x$,
$\psi(x,m,\pi)$ is the corresponding probability mass, and
$U_j \sim \operatorname{Uniform}[0,1]$ independently of all else.
Below, we show these are conservative p-values for $\mathcal{H}^{\pici,r}_{j}$~\eqref{eq:null-pici}, for any fixed $r$.
As in the previous section, we will filter them sequentially in decreasing order of
\begin{align} \label{eq:pici-order}
  \left|W_{j}^{\pinv,r}\right| = \bar{w}\p{|W_{j}^{1}|, \dots,|W_{j}^{E}| },
\end{align}
for some symmetric function $\bar{w}$. Concretely, we will focus on the $\bar{w}$ that multiplies the top $r$ largest entries in that column by absolute value, thus extending the solution from the previous section:
\begin{align*}
w\p{|W_{j}^{1}|, \dots,|W_{j}^{E}| } = \prod_{e = 1}^r  \abs{W_{j}}^{(E-e+1)}.
\end{align*}
Above, $\abs{W_{j}}^{(e)}$ are the order statistics for the absolute values in the $j$-th column of $\mathbf{W}$.
The next result proves these ordered p-values are conservative for $\mathcal{H}^{\pici,r}_{j}$~\eqref{eq:null-pici}, and ``almost independent'' of each other in the sense of Proposition~\ref{prop:pvals-invariant}.
Combined with Theorem~\ref{th:seqstep-fdr}, this will guarantee false discovery rate control using selective SeqStep+.

\begin{prop}[Partial conjunction multi-environment p-values] \label{prop:PC_FDR}
If the multi-environment statistics $\mathbf{W}$ satisfy Definition~\ref{def:kenv-stats}, for any fixed $r \in \{1,\ldots,E\}$, the p-values $p_j^{\inv}$~\eqref{eq:pici-pvalues} corresponding to true $\mathcal{H}^{\pici,r}_{j}$ in~\eqref{eq:null-pici} stochastically dominate the uniform distribution conditional on $|W^{\pinv,r}| = (|W_{1}^{\pinv,r}|, \ldots, |W_{p}^{\pinv,}|)$~\eqref{eq:pici-order}. Further, these p-values are ``almost independent", in the sense that $\mathbb{P} [ p_j^{\pinv,r} \leq \alpha \mid |W^{\pinv,r}|, p_{-j}^{\pinv,r} ] \leq \alpha$,  $\forall \alpha \in [0,1]$ if $\mathcal{H}^{\pici,r}_{j}$ is true.
\end{prop}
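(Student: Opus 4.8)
The plan is to prove the stronger ``almost-independence'' bound $\P{p_j^{\pinv,r}\le\alpha\mid |W^{\pinv,r}|,p_{-j}^{\pinv,r}}\le\alpha$; the stochastic-dominance statement then follows by averaging over $p_{-j}^{\pinv,r}$, exactly as the first part of Proposition~\ref{prop:pvals-invariant} follows from its second part. Fix an index $j$ with $\mathcal{H}^{\pici,r}_{j}$ true, so that at least $E-r+1$ of the environment-specific nulls $\mathcal{H}^{\mathrm{ci},e}_{j}$ hold. Since $|W^{\pinv,r}|$ in~\eqref{eq:pici-order} is a function of $|\mathbf{W}|$ and each $p_{-j}^{\pinv,r}$ is a function of $(\mathbf{W}_{-j},U_{-j})$, both are measurable with respect to $\mathcal{G}\defeq\sigma(|\mathbf{W}|,\mathbf{W}_{-j},U_{-j})$; it therefore suffices to establish the bound conditional on $\mathcal{G}$ and then conclude by the tower property. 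The heart of the argument is to show that, conditional on $\mathcal{G}$, $n^-_j$ stochastically dominates a $\operatorname{Binomial}(m,\tfrac12)$ variable with $m=(E-r+1-n^0_j)\lor 0$, and that~\eqref{eq:pici-pvalues} is a conservative lower-tail randomized p-value for that reference law.

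For the dominance, I would invoke the coin-flip property of Definition~\ref{def:kenv-stats}, in its joint form already used in Proposition~\ref{prop:pvals-invariant}: because $\bm{\epsilon}$ has independent entries and $\epsilon^e_j$ is a fair coin exactly when $\mathcal{H}^{\mathrm{ci},e}_{j}$ is true, the signs of the \emph{nonzero} true-null entries in column $j$ are i.i.d.\ fair coins, independent of the false-null signs in that column and of $\mathcal{G}$ (which involves only $|\mathbf{W}|$ and the other columns). Writing $m'$ for the number of nonzero true-null entries of column $j$, the count of negatives among them is exactly $\operatorname{Binomial}(m',\tfrac12)$. Here $m'\ge (E-r+1)-n^0_j$, since there are at least $E-r+1$ true nulls and at most $n^0_j$ of them can be zero, whence $m'\ge m$. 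As the remaining negatives counted by $n^-_j$ come from false-null entries and are nonnegative, $n^-_j\ge\operatorname{Binomial}(m',\tfrac12)$, which stochastically dominates $\operatorname{Binomial}(m,\tfrac12)$ because $m'\ge m$.

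It then remains to combine this with the calibration of the randomized p-value. Setting $g(s,u)=\Psi(s-1,m,\tfrac12)+u\,\psi(s,m,\tfrac12)$, a one-line computation gives
\[
  g(s+1,u)-g(s,u)=(1-u)\,\psi(s,m,\tfrac12)+u\,\psi(s+1,m,\tfrac12)\ge 0,
\]
so $g$ is nondecreasing in $s$ and visibly in $u$, while the standard fact about randomized p-values yields $g(N,U)\sim\operatorname{Uniform}[0,1]$ for $N\sim\operatorname{Binomial}(m,\tfrac12)$ and $U\sim\operatorname{Uniform}[0,1]$ independent. Since $U_j$ is independent of $\mathcal{G}$ and of $n^-_j$, conditioning on $U_j=u$, using that $\{s:g(s,u)\le\alpha\}$ is a down-set together with the conditional dominance $n^-_j\succeq N$, and integrating over $u$ gives $\P{p_j^{\pinv,r}\le\alpha\mid\mathcal{G}}=\P{g(n^-_j,U_j)\le\alpha\mid\mathcal{G}}\le\P{g(N,U_j)\le\alpha}=\alpha$. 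The main obstacle is the bookkeeping around zeros---justifying the $n^0_j$ correction in $m$ and verifying that arbitrary false-null signs can only inflate $n^-_j$---together with checking the monotonicity and uniform calibration of the randomized binomial p-value; once these are in place, the dominance step and the final conditional bound are immediate, and Theorem~\ref{th:seqstep-fdr} then delivers false discovery rate control.
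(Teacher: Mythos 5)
Your proposal is correct and follows essentially the same route as the paper's proof: both isolate the at least $(E-r+1-n^0_j)\lor 0$ nonzero true-null entries of column $j$, use Definition~\ref{def:kenv-stats} to conclude that their signs are i.i.d.\ fair coins conditional on the magnitudes and the other columns, and then exploit the fact that $n^-_j$ can only exceed the resulting binomial count together with the monotonicity and exact uniformity of the randomized binomial p-value. The paper packages the last step as a pathwise coupling with an imaginary p-value $p_j^{*\pinv,r}$ built from exactly that many null entries, whereas you phrase it as stochastic dominance plus a down-set argument; these are interchangeable, and your explicit check of the monotonicity of $g$ merely fills in a detail the paper leaves implicit.
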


\begin{proof}
As in the proof of Proposition~\ref{prop:pvals-invariant}, it suffices to establish the second part of this statement.
Take any $j \in \{1,\ldots,p\}$ such that $\mathcal{H}^{\pici,r}_{j}$ in~\eqref{eq:null-pici} holds, so that there are at least $(E-r+1-n^0_j) \lor 0$ environments in which $\mathcal{H}^{\mathrm{ci}, e}_{j}$ is true and $W^{e}_j \neq 0$.
Without loss of generality, assume these environments are those indexed by $C_j \defeq \{1,\ldots,(E-r+1-n^0_j) \lor 0\}$, with $C_j = \emptyset$ if $E-r+1-n^0_j \leq 0$.
Then, define the random variable $n^{*-}_j$ as the number of negative signs in $\mathbf{W}^{C_j}_j$.
As $\mathbf{W}$ satisfies the flip-sign property from Definition~\ref{def:kenv-stats}, $\mathbf{W} \,\oset{d}{=}\, \mathbf{W} \odot \bm{\epsilon}$, it follows the $n^{*-}_j$'s are independent $\text{Binomial}(|C_j|, 1/2)$, with $|C_j|=(E-r+1-n^0_j) \lor 0$, conditional on $|\mathbf{W}|$ and hence also conditional on $|W^{\pinv,r}|$. Now, define $p_j^{*\pinv,r}$ as the imaginary p-value obtained by replacing $n^{-}_j$ with $n^{*-}_j$ in~\eqref{eq:null-pici}.
Because $n^{*-}_j \leq n^{-}_j$, $\forall \alpha \in [0,1]$,
\begin{align*}
  \P{ p_j^{\pinv,r} \leq \alpha \mid |W^{\pinv,r}|, p_{-j}^{\pinv,r}}
  & \leq \P{ p_j^{*\pinv,r} \leq \alpha \mid |W^{\pinv,r}|, p_{-j}^{\pinv,r} }
   = \P{ p_j^{*\pinv,r} \leq \alpha}
   \leq \alpha.
\end{align*}
Above, the last inequality follows from the fact that $p_j^{*\pinv,r}$ is the standard randomized binomial p-value~\cite{neyman1933ix}.
\end{proof}

If $r=E$, the p-value $p_j^{\pinv,E}$ in~\eqref{eq:pici-pvalues} does not become identical to the $p_j^{\inv}$ defined earlier in~\eqref{eq:ici-pvalues}. The difference between $p_j^{\inv}$ and $p_j^{\pinv,E}$ is that the former is always equal to~1 when $n^0_j>0$ and it is not randomized.
This discrepancy is due to expository convenience, as we prefer to keep the notation simple in the previous section, and it is practically irrelevant because it would not make sense to reject $\mathcal{H}^{\ici}_{j}$~\eqref{eq:null-ici} if $n^0_j>0$. While randomization is useful here because it allows us to deal as powerfully as possible with the case in which $n^0_j>0$ (it might make sense to reject $\mathcal{H}^{\pici,r}_{j}$ even if $n^0_j>0$, as long as $r<E$), it would have not helped in the previous section.
In fact, the p-values $p_j^{\inv}$ in~\eqref{eq:ici-pvalues} only have one bit of information, and selective SeqStep+ only cares about whether they are greater than $1/2$.

If $r=1$, the hypothesis $\mathcal{H}^{\pici,1}_{j}$~\eqref{eq:null-pici} states $X_j$ is conditionally associated with $Y$ in at least one environment. In that case, a traditional knockoff analysis of the pooled data would test the same hypothesis, and it might be more powerful because it allows more flexibility in the test statistics. Therefore, we only recommend applying the method proposed here with $r>1$.

A limitation of selective SeqStep+ is that it is unclear how to choose its parameter $c$ (the baseline rejection threshold)~\cite{barber2015controlling} to maximize power because the p-values in~\eqref{eq:pici-pvalues} can take several possible values; by contrast, $c=1/2$ is the only option for one-bit p-values. Within our partial consistency problem, different $c$ may result in higher power depending on the data.
Therefore, we also consider filtering $p_j^{\pinv,r}$~\eqref{eq:pici-pvalues} with the accumulation test~\cite{li2017accumulation}, which requires less tuning.
If the p-values are independent, this test controls a modified version of the false discovery rate,
\begin{align} \label{eq:mfdr}
  \text{mFDR}_{q} \defeq \E{\frac{| \{j :\mathcal{H}^{\mathrm{ci}, e}_{j} \text{ is rejected}\} \cap \{j : \mathcal{H}^{\mathrm{ci}, e}_{j} \text{ is true}\} |}{|\{j :\mathcal{H}^{\mathrm{ci}, e}_{j} \text{ is rejected}\} | + q }},
\end{align}
for some constant $q$ whose value will be specified later.
If $q$ is not too large and the discoveries are sufficiently numerous, the above definition is not very different from the false discovery rate.
Although our p-values $p_j^{\pinv,r}$ in~\eqref{eq:pici-pvalues} are dependent, the next result proves the accumulation test is still valid if we add a little extra randomness.

Starting from a matrix $\mathbf{W}$ of multi-environment knockoff statistics (Definition~\ref{def:kenv-stats}), randomly assign imaginary positive or negative signs to any zero entry by flipping independent fair coins. Then, define $n^-_j$ as the number of negative entries in the $j$-th column of the resulting {\em tie-breaking} $\mathbf{W}$, and compute, with the same notation as in~\eqref{eq:pici-pvalues},
\begin{align}  \label{eq:pici-pvalues-coin}
  p_j^{\pinv,r}
  & \defeq \Psi\left( E-r+1, \frac{1}{2}, n^-_j -1 \right) + U_j \cdot \psi\left( E-r+1, \frac{1}{2}, n^-_j \right).
\end{align}

\begin{restatable}[Multi-environment knockoff filter with accumulation test]{theorem}{theoremaccumulation} \label{th:acc-fdr}
The accumulation test of~\cite{li2017accumulation} with an increasing accumulation function (e.g., HingeExp with parameter $C=2$) applied to the p-values defined in~\eqref{eq:pici-pvalues-coin} controls the modified false discovery rate~\eqref{eq:mfdr} (e.g., with $q=C/\alpha$), as in~\cite{li2017accumulation}. That is, Theorem~1 of~\cite{li2017accumulation} still holds for the p-values $p_j^{\pinv,r}$ in~\eqref{eq:pici-pvalues-coin} even though they are not independent.
\end{restatable}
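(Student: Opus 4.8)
The plan is to import the proof of Theorem~1 in~\cite{li2017accumulation} almost verbatim, replacing the single place where mutual independence of the null p-values is invoked with the ``almost-independence'' property already exploited in Proposition~\ref{prop:PC_FDR}. Two preliminary observations make this substitution legitimate. First, the order in which hypotheses are examined is determined by $|W^{\pinv,r}_j|$~\eqref{eq:pici-order}, a function of $|\mathbf{W}|$ alone; conditioning on $|\mathbf{W}|$ therefore \emph{freezes} the ordering, reducing our setting to the deterministic-ordering scenario assumed in~\cite{li2017accumulation}. Second, given $|\mathbf{W}|$, the only remaining randomness in $p_j^{\pinv,r}$~\eqref{eq:pici-pvalues-coin} comes from the signs of the $j$-th column of $\mathbf{W}$, the tie-breaking coins applied to its zeros, and the auxiliary $U_j$; by the flip-sign property of Definition~\ref{def:kenv-stats}, for a null column these are mutually independent fair coins that are independent of the other columns and of $|\mathbf{W}|$.

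With this in place, I would first establish the conditional super-uniformity of~\eqref{eq:pici-pvalues-coin}, the exact analogue of Proposition~\ref{prop:PC_FDR} for the tie-broken p-values. Fix $j$ with $\mathcal{H}^{\pinv,r}_{j}$~\eqref{eq:null-pici} true, so at least $E-r+1$ of its environments are null. After tie-breaking, the sign in each such null environment is a fair coin, whence the negative-sign count $n^-_j$ stochastically dominates a $\operatorname{Binomial}(E-r+1,1/2)$ variable, conditionally on $|\mathbf{W}|$ and on all other columns. Since~\eqref{eq:pici-pvalues-coin} is precisely the randomized binomial p-value against that reference, it follows that
\begin{align*}
  \P{p_j^{\pinv,r} \leq \alpha \mid |W^{\pinv,r}|, p_{-j}^{\pinv,r}} \leq \alpha, \qquad \forall \alpha \in [0,1],
\end{align*}
exactly as in Proposition~\ref{prop:PC_FDR}; the extra coins do no harm because they are independent of everything entering $|W^{\pinv,r}|$ and $p_{-j}^{\pinv,r}$.

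The key consequence is that, because the accumulation function $h$ is \emph{increasing} and normalized so that $\E{h(U)}=1$ for $U\sim\operatorname{Uniform}[0,1]$ (as HingeExp with $C=2$ is), the conditional stochastic dominance above yields, for every null $j$,
\begin{align*}
  \E{h\!\left(p_j^{\pinv,r}\right) \mid |W^{\pinv,r}|, p_{-j}^{\pinv,r}} \geq \E{h(U)} = 1.
\end{align*}
This one-step drift inequality is the only distributional input needed by the argument of~\cite{li2017accumulation}: their proof tracks the accumulated sum $\sum_{i\le k} h(p_i)$ as $k$ descends from $p$ and bounds the number of nulls above the data-dependent stopping index by this sum, using that each null contributes at least $1$ in conditional expectation (so that $\tfrac1k\sum_{i\le k}h(p_i)$ is an over-estimate of the true false discovery proportion). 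With the ordering frozen by $|\mathbf{W}|$ and the displayed conditional bound replacing their appeal to independence, the same optional-stopping argument delivers control of the modified false discovery rate~\eqref{eq:mfdr} at level $\alpha$ with $q=C/\alpha$.

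I expect the main obstacle to be purely bookkeeping: one must verify that Li-Barber's construction genuinely requires only this \emph{one-at-a-time} conditional inequality and not joint independence of the null p-values. Concretely, at each step of their supermartingale/optional-stopping argument the relevant conditioning $\sigma$-field must be shown to contain exactly $|W^{\pinv,r}|$ together with $\{p_i^{\pinv,r}\}_{i\neq j}$ (equivalently, the masked signs of the other columns), so that the above conditional-expectation bound is applicable there and the tracked process retains its supermartingale property. Once this identification of the filtration is confirmed, no further probabilistic content is needed: the dependence among our p-values is harmless precisely because, conditionally, it can only make each null p-value more conservative.
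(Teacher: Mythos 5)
There is a genuine gap. Your central claim---that the one-step bound $\E{h(p_j^{\pinv,r}) \mid |W^{\pinv,r}|, p_{-j}^{\pinv,r}} \geq 1$ is ``the only distributional input needed by the argument of~\cite{li2017accumulation}''---is not correct. The accumulation-test proof is not a leave-one-out argument of the kind that works for selective SeqStep+ (and which the paper does use for Theorem~\ref{th:seqstep-fdr}); it is a reverse-time supermartingale argument whose key lemma (Lemma~B.3 of~\cite{li2017accumulation}, restated as Lemma~\ref{lemma:li2017} here) requires the null contributions $B_j$ to be \emph{independent and identically distributed}, so that they are \emph{exchangeable} with respect to a filtration that reveals only the unordered set of the first $k$ entries. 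Conditional super-uniformity of each $p_j^{\pinv,r}$ given the others does not deliver this exchangeability: the null p-values in~\eqref{eq:pici-pvalues-coin} are genuinely dependent (each $n^-_j$ also counts signs from non-null environments), and with only a one-at-a-time bound the identity $\E{B_k \mid \mathcal{F}_k} = \sum_{j \leq k, j \in \mathcal{H}_0} B_j / \#\{j \leq k : j \in \mathcal{H}_0\}$ underlying the supermartingale property has no justification. The paper itself signals that symmetry, not mere super-uniformity, is what the martingale needs: it explains that the non-tie-broken p-values~\eqref{eq:pici-pvalues} cannot be handled ``because $n_j^0$ may vary arbitrarily across $j$, breaking the symmetry needed by our martingale proof,'' even though those p-values satisfy exactly the almost-independence property you invoke.

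The missing idea is a coupling. The paper defines, for each null $j$, an imaginary count $\tilde{n}^-_j$ of negative signs restricted to a fixed subset of $E-r+1$ null environments, and the corresponding imaginary p-value $\tilde{p}_j$. These $\tilde{p}_j$ are exactly i.i.d.\ $\operatorname{Uniform}[0,1]$ across null $j$ (conditionally on the absolute values and the non-null signs), and satisfy $\tilde{p}_j \leq p_j^{\pinv,r}$ almost surely. Monotonicity of $h$ then bounds the target ratio by its imaginary counterpart, to which Lemma~\ref{lemma:li2017} applies; the remaining (nontrivial) work is to enlarge the reverse filtration with the leftover sign information $\vardbtilde{n}^{-}_j$ so that the real p-values---and hence the stopping index $\hat{k}$, which is computed from them---are measurable at the right times. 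Your proposal contains neither the coupling nor this filtration construction, and without them the argument does not go through.
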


The proof of Theorem~\ref{th:acc-fdr} is in Appendix~\ref{app:proof-acc}.
It is worth emphasizing Theorem~\ref{th:seqstep-fdr} can accommodate replacing the p-values $p_j^{\pinv,r}$ from~\eqref{eq:pici-pvalues} with those in~\eqref{eq:pici-pvalues-coin}, although the converse does not hold for Theorem~\ref{th:acc-fdr}. The issue is that $n_j^0$ may vary arbitrarily across $j$, breaking the symmetry needed by our martingale proof. This may be a limitation of our proof, as the simulations in Section~\ref{section:experiments} suggest the accumulation test works well with the p-values in~\eqref{eq:pici-pvalues}.

\section{Consistent genome-wide associations across diverse ancestries} \label{sec:gwas}

Before investigating the effectiveness of the proposed methods in practice, we present in some detail a genetic problem that has motivated our work; this will inform the design of our simulations and will be the subject of the subsequent data analysis.
Genome-wide association studies aim to identify genetic variants with biological effects on some phenotype. 
Because the DNA of  individuals is determined prior to any of their phenotypes, it is relatively easy to attribute a causal interpretation to observed links between genetic variation and traits. Furthermore, it is reasonable to assume the same biological pathways are involved in influencing the traits across different ethnicities or human populations: modulo the expected variations in ``environment'', it is therefore meaningful to think about one common causal mechanism linking genetic variation to traits and  to attempt to uncover it by testing $\mathcal{H}^{\ici}_{j}$~\eqref{eq:null-ici}. We now describe the variables observed in genome-wide association studies and the changes expected across environments corresponding to different human populations.

\subsection{Missing variants and knockoffs in genome-wide association studies}

Genome-wide association studies are carried out  in practice by measuring ({\em genotyping}) a subset of a few hundred thousands variants across the genome, as full sequencing is expensive.
Such relatively few markers are sufficient to capture most genetic variation because nearby alleles on the same chromosome have strong dependencies and can thus be quite accurately inferred from one another. This property, known as {\em linkage disequilibrium}~\cite{slatkin1994linkage,pritchard2001linkage}, facilitates the localization of broad regions, or {\em loci}, containing associations with the phenotype, but at the same time it complicates the attribution of distinct signals to specific variants. Indeed, many genotypes can be {\em marginally associated} with the phenotype simply because they are in linkage disequilibrium with the same causal variant; this issue is alleviated by a conditional testing approach, but even conditional associations do not account for possible confounding due to unobserved variants.

The traditional analysis of genome-wide association data imputes the missing variants using models of linkage disequilibrium estimated on smaller, fully-sequenced, reference samples~\cite{marchini2010}. Then, the imputed variants are analyzed alongside the typed ones to localize significant associations, through either genome-wide marginal tests or multivariate linear models within narrow genomic regions~\cite{schaid2018genome}.
However, this is not fully satisfactory because imputation is not as informative as a direct measurement; in fact, imputed variants carry no information in addition to that contained in the typed ones, as they are a function of the latter, conditionally independent of the phenotype.
This may pass unnoticed if one fully trusts a multivariate linear model for the outcome---imputed variants may be significant within such models because their dependence with the measured variants is nonlinear---but it makes it impossible to find explicit evidence that a missing variant is causal within our nonparametric model-X perspective~\cite{sesia2020multi,sesia2020controlling}.
Without repeating the arguments supporting a model-X analysis of genome-wide association data, which were explained in~\cite{sesia2018} and expanded in~\cite{sesia2020multi,sesia2020controlling}, we will focus here on leveraging consistency to gather indirect evidence of causal associations within this framework.
First though, we must briefly recall some relevant technical details of the current methodology.

The existing knockoff-based analysis partitions the genome into contiguous segments and then tests whether any of these include variants conditionally associated with the phenotype~\cite{sesia2020multi,sesia2020controlling}.
Let $G \subset \{1,\ldots,p\}$ denote the indices of the typed variants in a particular segment.
Then, knockoffs can be utilized to test a slightly more general version of the conditional hypothesis $\mathcal{H}^{\mathrm{ci}, e}_{j}$~\eqref{eq:null-ci} for each environment (sub-population) $e \in \{1,\ldots,E\}$; namely,
\begin{align} \label{eq:null-ci-grouped}
 \mathcal{H}^{\mathrm{ci}, e}_{G} : Y^{e} \indep X^{e}_{G} \mid X^{e}_{-G},
\end{align}
where $X_G = \{X_j : j \in G\}$, $X_{-G} = \{X_j : j \not\in G\}$.
That is, $\mathcal{H}^{\mathrm{ci}, e}_{G}$~\eqref{eq:null-ci-grouped} asserts that the variants in group $G$ are conditionally independent of the phenotype given all other measured variants.
This analysis can be performed at different levels of resolution, separately controlling the false discovery rate for increasingly refined genomic partitions to balance between power and the value of each discovery~\cite{sesia2020multi}.
In short, hypotheses $\mathcal{H}^{\mathrm{ci}, e}_{G}$~\eqref{eq:null-ci-grouped} involving smaller groups of genotypes (higher resolution) are harder to reject because the variables have strong local dependencies, making it difficult to distinguish the signal of one variant from those of its neighbors. At the same time, high-resolution hypotheses are more specific and thus their rejections more informative.
Although $\mathcal{H}^{\mathrm{ci}, e}_{G}$~\eqref{eq:null-ci-grouped} is not asking whether a physical genetic segment contains causal variants, it is a reasonable statistical approximation of that scientific question.
Intuitively, we expect these hypotheses to be more robust to missing variants at low resolution; this will be verified empirically later.
By contrast, the robustness of $\mathcal{H}^{\mathrm{ci}, e}_{G}$~\eqref{eq:null-ci-grouped} is less clear at high resolution because there each tested segment contains few measured genotypes; this is where consistency will be most useful.

\subsection{Linkage disequilibrium in populations with different ancestries}

Linkage disequilibrium is explained by the inheritance of long and randomly cut genetic segments from parents to offspring, with occasional mutations. Generation after generation, the genotype distribution thus comes to resemble an imperfect mosaic of motifs inherited from the common ancestors, which can be encoded as a hidden Markov model~\cite{li2003}; this is at the heart of the imputation methods mentioned in the preceding section~\cite{marchini2010}, as well as of the algorithms for generating knockoffs~\cite{sesia2018}.
The block-like patterns of linkage disequilibrium vary across human sub-populations because these share different recent ancestors, and so their mosaics involve different patterns, and possibly also different transition ({\em recombination}) rates~\cite{slatkin1994linkage,laan1997demographic,daly2001high}.
In other words, different sub-populations differ by covariate shift.
This heterogeneity has already been factored into the generation of valid knockoffs to test {\em pooled} conditional associations~\cite{sesia2020controlling}, and it will be leveraged here to help highlight causal variants through consistency.

Figure~\ref{fig:diagram-gwas} visualizes why covariate shift helps localize causal variants within an example with three sub-populations, four observed and five missing variables, one of which is causal. This toy genome is partitioned into segments containing one or two typed variants each; three segments are highlighted. Linkage disequilibrium is described by hidden Markov models yielding blocks of variables separated by high-recombination spots~\cite{wall2003haplotype,slatkin2008linkage}. Treating alleles across these spots as approximately independent, we can see that the only consistent association is that of the segment containing the causal variant.
Of course, reality is more complicated. First, linkage disequilibrium is not perfectly organized into independent blocks, although this is a common simplification~\cite{berisa2016approximately}. Second, we can only study a finite number of human sub-populations, so it may be unrealistic to expect all spurious associations to be removed. Nonetheless, consistency enables a step forward in an otherwise challenging problem, and we will verify empirically that our approach is indeed useful.

\begin{figure}[!htb]
\centering
    \includegraphics[width=0.8\linewidth]{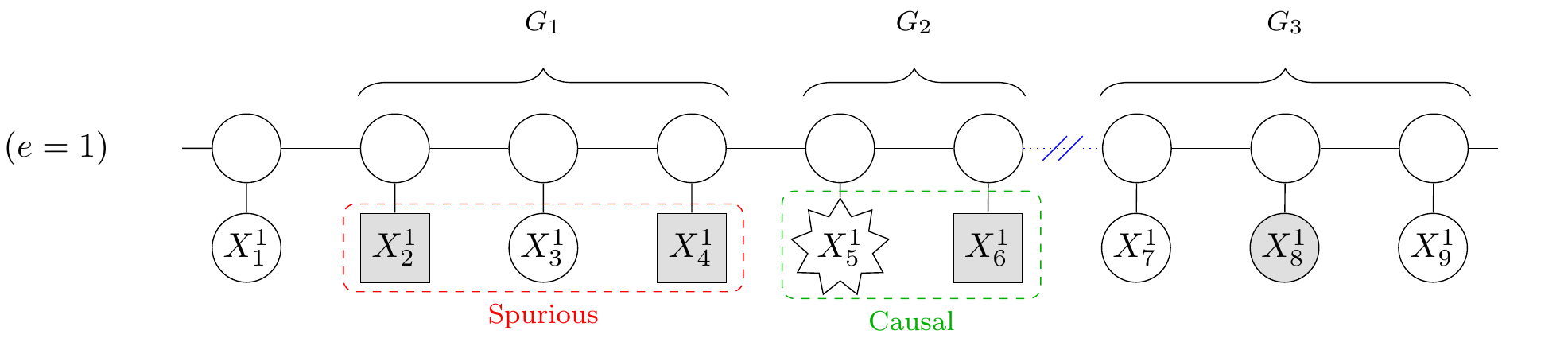}
    \includegraphics[width=0.8\linewidth]{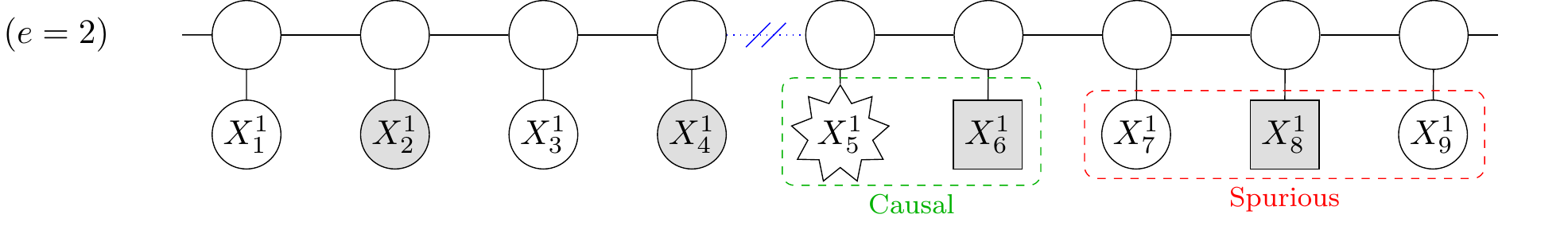}
    \includegraphics[width=0.8\linewidth]{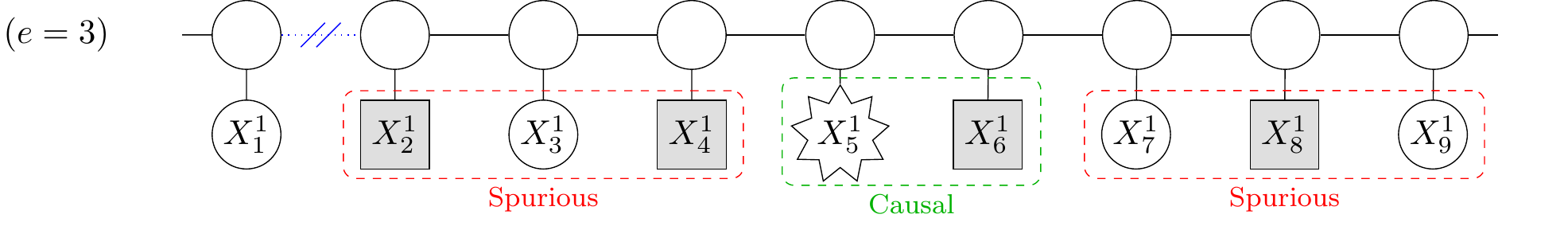}
     \caption{Schematic visualization of consistency in a genome-wide association study involving three sub-populations. The unobserved causal variant (star-shaped node) induces different spurious associations depending on the patterns of linkage disequilibrium, described by population-specific hidden Markov models. Shaded nodes: measured variables. Squares: variables that are not independent of the causal one conditional on the other measured variables. The broken segments in the Markov chain symbolize the boundaries of linkage disequilibrium blocks within a population.}
     \label{fig:diagram-gwas}
\end{figure}

\section{Numerical experiments} \label{section:experiments}

We apply the multi-environment knockoff filter with the data-splitting and empirical cross-prior alternative statistics.
The benchmarks are the two heuristics from Section~\ref{sec:prob-statement}: intersection and pooling.
All test statistics are computed with the R package \texttt{glmnet}~\cite{friedman2010regularization}, or \texttt{bigstatsr}~\cite{prive2019efficient} for genetic data.
Software for our method is available from \url{https://github.com/lsn235711/MEKF_code}, along with code to reproduce the analyses.

\subsection{Testing for full consistency with synthetic data} \label{section:simulation_invariant}

In each environment, $p$ variables $X$ are generated from an autoregressive model of order one with correlation $\rho =0.2$. We will leverage the known $P_X$ to construct exact knockoffs with the Gaussian semi-definite optimization algorithm from~\cite{candes2018panning}. Knockoffs are typically quite robust to model misspecification, and the estimation of $P_X$ is orthogonal to our problem.
The distribution of $Y^e \mid X^e$ in the $e$-th environment is given by a logistic model with log-odds equal to $\text{logit } \P{Y^{e} = 1 \mid X^{e}} = X^{e} \beta^{e}$,
where $\beta^e \in \mathbb{R}^p$ is an environment-specific effect parameter vector.
We consider two settings corresponding to different numbers of environments $E$, explanatory variables $p$, and effect vectors $\beta^e$.

In the first setting, $E=4$, $p=500$, and the number of observations per environment is $n=1000$. First, 100 entries of $\beta^e$ are randomly chosen to be non-zero in all environments and these are the consistent associations we seek; then, for each environment $e$, 10 of the remaining ones are set to be non-zero in all but the $e$-th environment, and these 40 associations are thus not consistent. See Figure~\ref{fig:inv_beta}~(a) in Appendix~\ref{sec:app-synthetic} for a visualization of this setup. The absolute values of the 100 consistent non-null elements of $\beta^e$ are equal to $a / \sqrt{n}$, where $a$ is a signal amplitude parameter which we will vary, while the remaining non-zero values are equal to $0.5 a / \sqrt{n}$. The signs of non-null elements of $\beta^e$ are determined by independent coin flips.
In the second setting, $E=3$, $p=200$, and $n=2000$. The coefficients $\beta^e$ are determined as follows. First, 50 entries of $\beta^e$ are randomly chosen to be non-zero in all environments; then, for each $e$, 50 of the remaining ones are set to be non-zero in all but the $e$-th environment; see Figure~\ref{fig:inv_beta}~(b) in Appendix~\ref{sec:app-synthetic}. Again, the 100 consistent non-null elements of $\beta^e$ are equal to $a / \sqrt{n}$ in absolute value, and the remaining non-zero entries are $0.5 a / \sqrt{n}$. The signs of $\beta^e$ are determined by independent coin flips.

Our goal is to discover the subset of consistent non-nulls, controlling the false discovery rate below $10\%$.
Figure~\ref{fig:inv_sim}, previewed earlier, compares the performance of our method to those of the benchmarks, averaging over 100 experiments.
The results confirm our method controls the false discovery rate, as anticipated by the theory. The cross-prior statistics are more powerful than the data-splitting ones in the first setting, in which most associations are consistent, and equivalent to the latter in the second setting, where most associations are not consistent. Pooling yields too many false discoveries because it reports all associations regardless of whether they are consistent, while the intersection heuristic may lead to either low power (first setting) or high false discovery rate (second setting).


\subsection{Testing for partial consistency with synthetic data} \label{section:simulation_partial_invariant}

We consider partial consistency testing as in Section~\ref{sec:partial-consistency}.
The goal is to discover which variables are non-null in at least $r=2$ out of $E=5$ environments. The {\em intersection} benchmark reports all discoveries found in at least $r$ environments by separate analyses.
The data are generated from the same model as in the previous section, although with $p=200$ variables and utilizing different model parameters~$\beta$. The number of samples per environment is $n=600$.
Again, we consider two settings with alternative $\beta$ patterns; see Figure~\ref{fig:pinv_beta} in Appendix~\ref{sec:app-synthetic}.
In the first setting, 50 entries of $\beta^e$ are randomly chosen to be non-zero in all environments and these are the consistent associations we seek; then, 5 unique additional variables are set to be non-zero in each environment; see Figure~\ref{fig:pinv_beta}~(a).
The absolute values of all non-zero elements of $\beta^e$ are $a / \sqrt{n}$, where $a$ is a signal amplitude parameter which we will vary. The signs of non-null elements of $\beta^e$ are determined by independent coin flips.
In the second setting, 100 entries of $\beta^e$ are randomly chosen to be non-zero in the first four environments; then, the remaining 100 variables are set to be non-zero in the last environment; see Figure~\ref{fig:pinv_beta}~(b).
The values of the non-zero $\beta^e$ coefficients are determined as in the first setting.

The multi-environment knockoff filter is applied with the empirical cross-prior statistics. The p-values are filtered either by the accumulation test with HingeExp accumulation function, in which case any zero-sign ties are broken as in~\eqref{eq:pici-pvalues}, or with the selective SeqStep+ filter with cutoff parameter $c=0.5$. The nominal false discovery rate is 10\%.
Figure~\ref{fig:pc_sim} compares the performance of our method to that of the two heuristic benchmarks, as in the previous section.
The results show our method always controls the false discovery rate when applied with the selective SeqStep+ filter, and typically also does so in combination with the accumulation test.
Even though the accumulation test theoretically requires p-values with random tie breaking~\eqref{eq:pici-pvalues-coin}, Figure~\ref{fig:pc_sim_coin} shows the original ones in~\eqref{eq:pici-pvalues-coin} often lead to higher power, without noticeable loss of type-I error control.
Figure~\ref{fig:pc_sim} highlights that pooling is overly liberal, while the intersection heuristic may be either underpowered (setting 1) or too liberal (setting 2).

\begin{figure}[!htb]
 \centering
 \includegraphics[width=\linewidth]{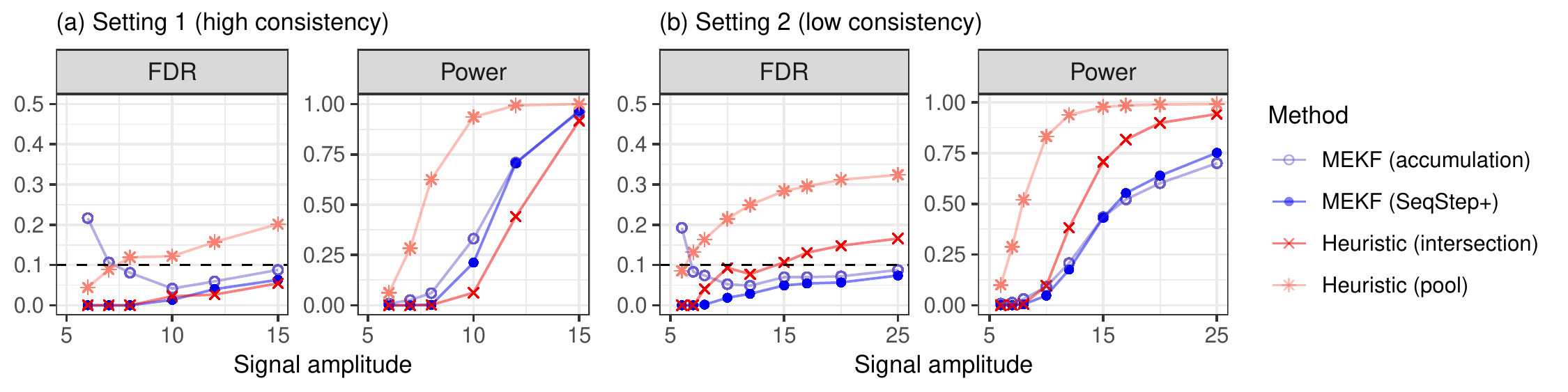}
 \caption{Performance of the multi-environment knockoff filter (MEKF) on simulated data, compared to two heuristics. The goal is to discover which variables are non-null in at least 2 out of 5 environments. Other details are as in Figure~\ref{fig:inv_sim}. Note that the accumulation version of our method is only guaranteed to control a modified version of the false discovery rate, which differs noticeably from our true objective only if the number of discoveries is very small.}
 \label{fig:pc_sim}
\end{figure}

\subsection{Causal inference with biased data}  \label{sec:sim-sampling}

We revisit the school admission example of Section~\ref{sec:robust-bias} by considering a population in which the $p= 200$ features $X \in \mathbb{R}^p$ of each applicant are generated from an autoregressive model of order one, with $\rho^e = 0.6 - 0.1e$ in the $e$-th school (environment), for $e \in \{1,2\}$.
A constant causal model describes the conditional GPA distribution: $Y = X \beta + \epsilon$, where $\epsilon$ is i.i.d.~Gaussian noise, and $\beta \in \mathbb{R}^p$ has 25 non-zero entries and is constant. The non-zero entries of $\beta$ are equal to $2/15$. The sample size $n$ is varied as a control parameter.
Our goal is to make inferences about the causal model by analyzing data collected through environment-specific biased sampling mechanisms.
In particular, a random sample from the population is observed in the $e$-th environment if and only if
$Y_i/\sigma + \sum_{i \in S_e} X_i > 0$,
for two subsets $S_1,S_2 \subset \{1,\ldots,p\}$ with $|S_1| = |S_2| = 25$, and $\smash{\sigma = 2\sqrt{2}}$.
This is a much more ambitious objective compared to Sections~\ref{section:simulation_invariant}--\ref{section:simulation_partial_invariant}. In fact, our method explicitly tests the consistency hypotheses $\mathcal{H}^{\ici}_{j}$~\eqref{eq:null-ici}, while the validity of its inferences about $\mathcal{H}^{\mathrm{causal}}_{j}$~\eqref{eq:null-causal} depends on the coherence across environments of the biased sampling mechanisms. Therefore, we anticipate our causal inferences will be valid only if $S_1$ and $S_2$ are sufficiently different from one another.

We compare our method applied with data-splitting statistics to the \emph{pool} and {\em intersection} benchmarks.
The knockoffs are based in each environment on the feature distribution estimated from a much larger sample with the same bias as the data we analyze; this separates the problem of constructing knockoffs from that of testing for consistent associations.
Figure~\ref{fig:biased_sam}(a) compares power and false discovery rate, defined in terms of the causal hypotheses $\mathcal{H}^{\mathrm{causal}}_{j}$~\eqref{eq:null-causal}, for disjoint $S_1,S_2$ as a function of the sample size $n$. Figure~\ref{fig:biased_sam}(b) plots analogous results with $n=1200$, as a function of the overlap between $S_1$ and $S_2$, which ranges from 0\% (disjoint) to 100\% (identical).
The results show our causal inferences are valid if the overlap between $S_1$ and $S_2$ is small; if it is large, the spurious associations due to sampling bias also become consistent. The intersection heuristic is not as powerful as our method here, while pooling leads to many more non-causal discoveries because it does not account for the sampling biases at all.

\begin{figure}[!htb]
\centering
  \includegraphics[width=\linewidth]{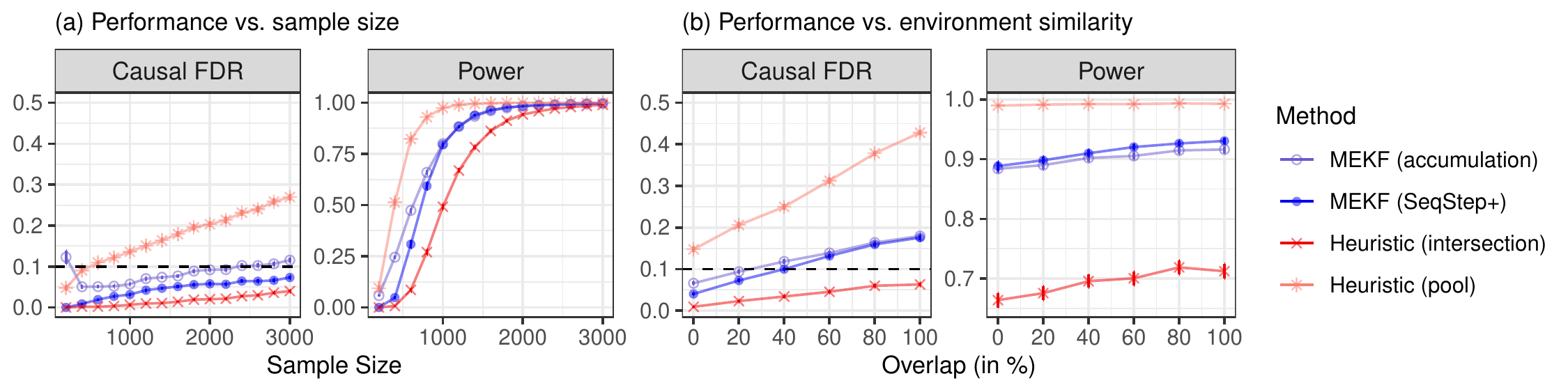}
  \caption{Performance of the multi-environment knockoff filter in a simulated study with environment-specific sample biases. The false discovery rate and power are defined in terms of causal hypotheses. Other details are as in Figure~\ref{fig:pc_sim}.}
  \label{fig:biased_sam}
\end{figure}

\subsection{Causal inference with dependent data} \label{sec:sim-homophily}

Imagine a population in which the $p=200$ features $X \in \mathbb{R}^p$ of each individual are generated from an autoregressive model of order one, with $\rho^e = 0.6 - 0.1e$ in the $e$-th environment, for $e \in \{1,2\}$.
The causal model is constant: $Y = X \beta + \epsilon$, where $\epsilon$ is standard i.i.d.~Gaussian noise and $\beta \in \mathbb{R}^p$ has 40 non-zero entries equal to 0.25.
Our goal is to identify the causal variables; however, we cannot collect independent samples from this population.
As in the example from Section~\ref{sec:robstness-homophily}, we say an individual $i$ belongs to a club if $Y^{(i)}>0$ (homophily).
For any $i$ belonging to a club in the $e$-th environment, we observe modified features $X_{j}^{(i)} \leftarrow X_{j}^{(i)} + I_j^e$  (contagion) for all $j \in S^e \subseteq \{1,\ldots,p\}$,
where each $I_j^e \in \{0,1\}$ is an independent coin flip shared by all individuals in environment $e$. We fix $|S_1| = |S_2| = 25$.

Figure~\ref{fig:corr_group}(a) compares the performance of our method, with data-splitting statistics, to that of the usual two benchmarks, as a function of the sample size $n$ in each environment. The nominal false discovery rate is 10\%. Here, $S_1,S_2$ are randomly chosen and disjoint.
The knockoffs are based on larger sets of independent samples, as in the previous section.
Again, we define the false discovery rate and power in terms of $\mathcal{H}^{\mathrm{causal}}_{j}$~\eqref{eq:null-causal}; therefore, the causal validity of our inferences is only guaranteed if $S_1 \cap S_2 = \emptyset$.
Indeed, here the multi-environment knockoff filter is more powerful than the intersection heuristic and controls the causal false discovery rate, unlike pooling.
Figure~\ref{fig:corr_group}(b) compares the performances of these methods as a function of $|S_1 \cap S_2|$, fixing $n=600$.
This shows the multi-environment knockoff filter yields valid causal inferences if two environments are sufficiently different.

\begin{figure}[!htb]
\centering
  \includegraphics[width=\linewidth]{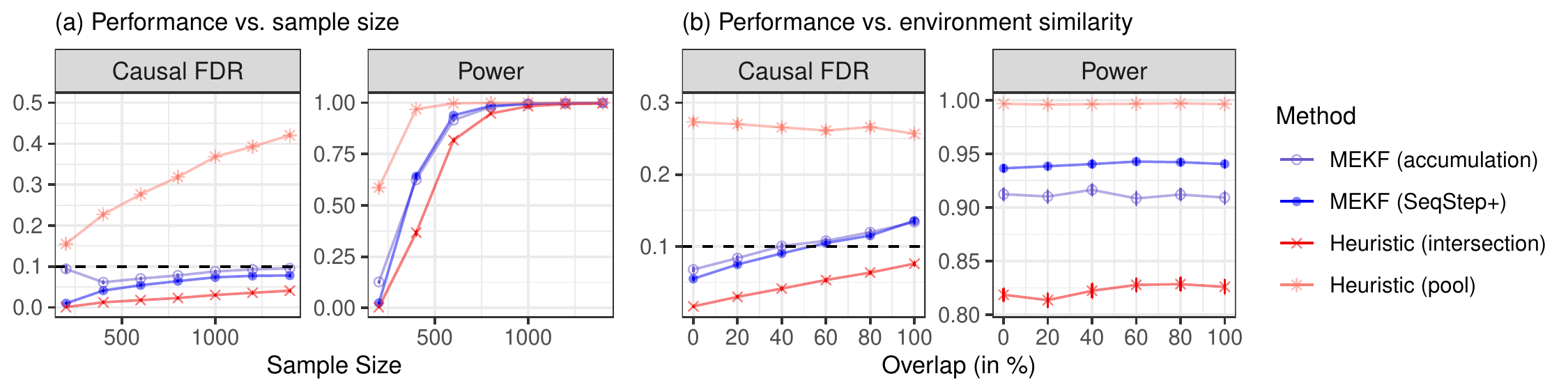}
  \caption{Performance of the proposed method in a simulated study with environment-specific homophily and contagion. Other details are as in Figure~\ref{fig:pc_sim}. }
  \label{fig:corr_group}
\end{figure}

\subsection{Causal inference in a simulated genome-wide association study} \label{sec:sim-gwas}

We analyze simulated yet realistic genetic association data involving different sub-populations, based on the haplotypes in the 1000 Genomes Project, Phase 3~\cite{10002015global}. This resource contains phased haplotypes for individuals belonging to one of five possible sub-populations: African (AFR), Admixed American (AMR), East Asian (EAS), European (EUR), and South Asian (SAS).
We utilize these real haplotypes to simulate genetic data from a hidden Markov model for 50,000 individuals belonging to one of the five sub-populations (10,000 individuals per sub-population); see Appendix~\ref{sec:app-gwas} for details.
This approach ensures the genotype distribution is known exactly, allowing us to focus fully on the problem of accounting for missing variants.
In particular, we can apply the algorithm from~\cite{sesia2020multi} to generate perfectly valid knockoffs separately within each sub-population, without having to estimate a hidden Markov model~\cite{sesia2018} or account for population structure~\cite{sesia2020controlling}.
We construct group-level knockoffs~\cite{sesia2020multi} for testing $\mathcal{H}^{\mathrm{ci}, e}_{G}$~\eqref{eq:null-ci-grouped} at two levels of resolutions, with genetic segments of median lengths 233~kb and 15~kb, respectively.
In the interest of time, we only analyze 359,811 biallelic single-nucleotide polymorphisms on chromosome 22 rather than the full genome.

Conditional on the genotypes, we simulate a continuous trait for all 50,000 individuals from a linear model with independent Gaussian errors and 50 causal variants.
This model is constant across all populations (environments).
The causal variables are randomly chosen among the 359,811 possible genotypes, ensuring each sub-population has at least 10 causal variants with minor allele frequency above 0.1. The signs of the causal effects are independent coin flips, and their sizes are inversely proportional to the standard deviation of the allele count, so that rarer variants have larger effects. The total heritability of the trait is varied as a control parameter.
All causal variants are unmeasured, so that their exact identification from the data is impossible; instead, the goal is to localize as precisely as possible which genetic segments are likely to contain causal variants~\cite{sesia2020multi}, controlling the false discovery rate.
The proportion of typed variants is varied between 1\% and 10\%. In each case, we construct knockoffs only for the measured variants.
This setup is likely to be even more challenging than a real genome-wide association study from the point of view of missing variants, because our genotyping is completely random. By contrast, genotyping arrays in real studies are carefully designed. For example, the UK Biobank~\cite{bycroft2018} data were collected using the UK Biobank Axiom\texttrademark Array, which specifically targets potentially causal coding variations, genomic regions of interest, and markers known to be associated with various phenotypes~\cite{bycroft2018}.
Thus, confounding may be a less severe problem in practice compared to what we shall experience here.

We perform (consistent) conditional testing at the 10\% nominal false discovery rate level but measure performance in stricter terms, based on the causal false discovery rate and power: a discovery is counted as true if and only if it reports a genetic segment containing a causal variant. The power is defined as the average proportion of segments containing causal variants that are discovered.
All results are averaged over 100 experiments with independent traits. In theory, the genotypes should also be resampled to ensure false discovery rate control because the knockoffs treat them as random; however, that would be computationally expensive with such large data.

Figure~\ref{fig:sim-missing-resolution}~(a) summarises the results of separate analyses in each sub-population as a function of the trait heritability, in the case in which only 1\% of all variants are observed. These analyses do not lead to valid causal inferences, although they correctly test conditional association, demonstrating the need for consistency especially at high resolution.
The multi-environment knockoff filter is applied to test whether any associations are significant in at least 3 environments, utilizing the data-splitting statistics  due to the large size of this data set. Statistical significance is determined either with the accumulation test or with the selective SeqStep+ approach.
Our method is compared to the {\em intersection} and {\em pool} benchmarks.
The results indicate the multi-environment knockoff filter controls the causal false discovery rate and, when applied with the accumulation test, is only slightly less powerful than pooling at low resolution. The selective SeqStep+ approach tends to yield lower power, plausibly because it cannot extract as much information from the p-values.
 At higher resolution, our method is not as powerful as pooling, while the causal false discovery rate inflation of the latter becomes more severe.
Unsurprisingly, all methods are less powerful at high resolution.
The causal false discovery rate violation of the intersection heuristic is smaller but noticeable.

\begin{figure}[!htb]
  \centering
  \includegraphics[width=0.9\linewidth]{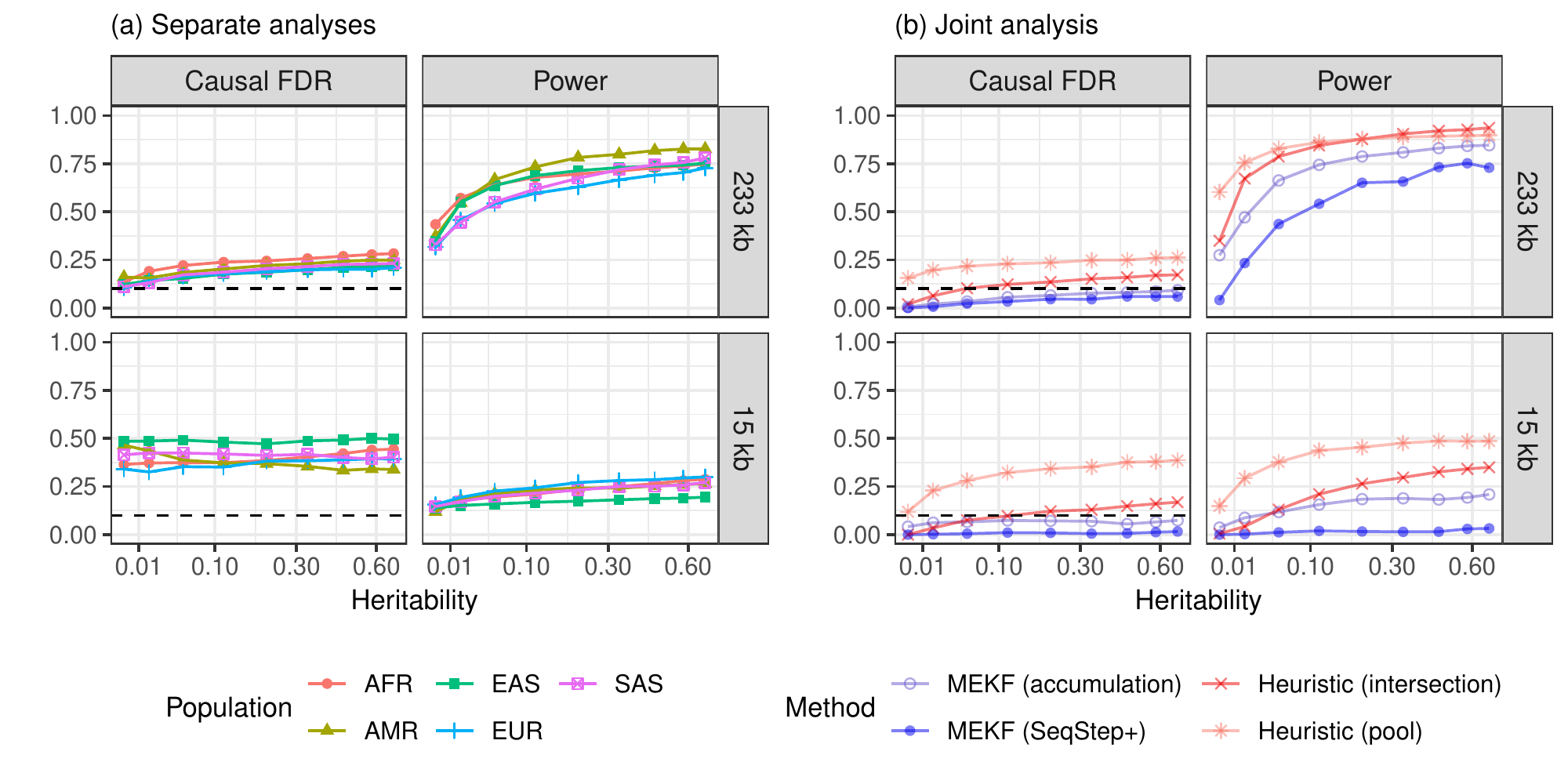}
  \caption{Analysis of a simulated multi-population genome-wide association study in which the causal variants are missing. The genotyping density is 1\%. Top: low-resolution analysis (233 kb); bottom: high-resolution analysis (15 kb). The empirical false discovery rate and power are defined in a strict causal sense. The multi-environment knockoff filter seeks associations supported by the data from at least 3 populations. The nominal false discovery rate is~10\%.}
  \label{fig:sim-missing-resolution}
\end{figure}

Figure~\ref{fig:sim-missing-density} in Appendix~\ref{sec:app-gwas} shows confounding due to unmeasured variants decreases as the genotying density increases, until all methods control the causal false discovery rate. This is intuitive because there is no confounding in the limit of high genotyping density. Meanwhile, as the genotying density increases, the multi-environment knockoff filter becomes unnecessarily conservative compared to pooling~\cite{sesia2020multi,sesia2020controlling}; this may seem unavoidable but the results in Figure~\ref{fig:inv_sim} suggest the relative performance of our method would improve if we applied the cross-prior statistics from Section~\ref{sec:ici-knockoffs} instead of the data-splitting ones adopted here for convenience.
Figure~\ref{fig:sim-missing-resolution-0.2} shows qualitatively similar results corresponding to analyses at the 20\% nominal false discovery rate level, which better highlight the causal type-I error inflation incurred by the heuristics.
Finally, Figure~\ref{fig:sim-missing-resolution-coin} shows our method performs similarly regardless of whether the accumulation test is applied the p-values computed with random tie breaking~\eqref{eq:pici-pvalues-coin} or without it~\eqref{eq:pici-pvalues}.

\section{Analysis of UK Biobank genome-wide association data} \label{sec:ukb}

\subsection{Data pre-processing}

We study four continuous traits (body mass index, height, platelet count, and systolic blood pressure) and four diseases (cardiovascular disease, diabetes, hypothyroidism, and respiratory disease) using the UK Biobank~\cite{bycroft2018} data; see Table~\ref{table:pheno-def} in Appendix~\ref{sec:app-ukb} for more details.
This analysis is based on the same quality control filtering and knockoffs for 486,975 genotyped and phased subjects in the UK Biobank
(application 27837) as in~\cite{sesia2020controlling}. The knockoffs preserve both the population structure and the kinship of the 136,818 individuals with close relatives; this accounts for most possible confounders except missing variants~\cite{sesia2020controlling}. Our goal is to address this remaining limitation with the multi-environment knockoff filter.
As in previous work~\cite{sesia2020controlling}, we only analyze 591,513 biallelic single nucleotide polymorphisms with minor allele frequency above 0.1\% and in Hardy-Weinberg equilibrium ($10^{-6}$) among the subset of 350,119 unrelated British individuals previously analyzed in~\cite{sesia2020multi}.
The genome is then partitioned into contiguous groups at 7 levels of resolution, ranging from that of single polymorphisms to that of 425 kb-wide groups, as in~\cite{sesia2020controlling}. The resolution of each genomic partition we consider is defined as the median width of its genetic segments.

The UK Biobank subjects who passed the above quality control are divided into five sub-populations based on their self-reported ancestry (African: 7,635; Asian: 3,284; British: 429,934; Non-British European: 28,994; and Indian: 7,628). We exclude subjects with missing ancestry information, as well as those falling outside these five broad categories; this leaves us with a total of 477,475 individuals; see Table~\ref{tab:ukb-ancestries-env}, Appendix~\ref{sec:app-ukb}, for additional details.

\subsection{Searching for consistent associations}

We apply the multi-environment knockoff filter to discover genetic segments containing distinct association with the phenotype in at least $r$ environments, with $r$ ranging from 2 to 5.
In all cases, the significance threshold is computed by applying the accumulation test to the p-values in~\eqref{eq:pici-pvalues}. In fact, the accumulation test without the random tie breaking~\eqref{eq:pici-pvalues-coin} tends to be more powerful than selective SeqStep+ (Section~\ref{section:experiments}), and tie breaking seems practically unnecessary; see Figure~\ref{fig:pc_sim_coin} in Appendix~\ref{sec:app-experiments}.
The analysis is performed at the 10\% false discovery rate level, separately for each level of resolution~\cite{sesia2020multi}.
 We adopt the data-splitting statistics because the data set is very large.
The intersection heuristic and the pool analysis on all UK Biobank samples from~\cite{sesia2020controlling} will serve as benchmarks.

We repeat all tests with 100 independent realizations of the $U_j$ variables in~\eqref{eq:pici-pvalues}; this allows some understanding and a possible reduction of the variability of any findings, as our method is randomized.
Alternatively, one may repeat the entire analysis starting from the generation of the knockoffs~\cite{ren2020derandomizing}; however, that would be impractical for a data set of this size. In comparison, the cost of resampling the $U_j$ variables many times is negligible.
Table~\ref{table:ukb_discoveries} reports the numbers of discoveries for {\em height} and {\em platelet count} thus obtained in at least 51 out of 100 randomizations. The results for other phenotypes are in Table~\ref{table:ukb_discoveries_large}, Appendix~\ref{sec:app-ukb}, for lack of space. Unfortunately, there are fewer consistent associations for the other phenotypes, consistently with previous observations that {\em height} and {\em platelet count} display the strongest signals~\cite{sesia2020multi,sesia2020controlling}.
Our ``stability selection''~\cite{meinshausen2010stability} reporting rule is not theoretically guaranteed to control the false discovery rate~\cite{candes2018panning,sesia2018}; however, we can empirically confirm it to be conservative; see Figure~\ref{fig:sim-missing-resolution-stability} in Appendix~\ref{sec:app-experiments}.
Figure~\ref{fig:analysis_numdisc} summarises the variability of the individual findings corresponding to different p-value randomizations.

\begin{table}[!htb]
\center
\small
\caption{Numbers of discoveries at different resolutions for two UK Biobank phenotypes. The second column indicates the numbers of populations (environments) across which the findings are consistent. The third column corresponds to the analysis of the pooled data from all populations~\cite{sesia2020controlling}. The nominal false discovery rate is 10\%.}
\label{table:ukb_discoveries}

\begin{tabular}[t]{cc>{\centering\arraybackslash}m{0.7cm}>{\centering\arraybackslash}m{0.7cm}>{\centering\arraybackslash}m{0.7cm}>{\centering\arraybackslash}m{0.7cm}>{\centering\arraybackslash}m{0.7cm}}
\toprule
\multicolumn{2}{c}{ } & \multicolumn{5}{c}{Number of environments} \\
\cmidrule(l{3pt}r{3pt}){3-7}
Phenotype & Resolution (kb) & 1 & 2 & 3 & 4 & 5\\
\midrule
 & single-SNP & 95 & 13 & 2 & 0 & 0\\

 & 3 & 570 & 9 & 6 & 0 & 0\\

 & 20 & 1503 & 33 & 0 & 0 & 0\\

 & 41 & 2384 & 42 & 7 & 7 & 2\\

 & 81 & 3006 & 48 & 24 & 0 & 0\\

 & 208 & 3339 & 103 & 23 & 7 & 3\\

\multirow[t]{-7}{*}{\centering\arraybackslash height} & 425 & 3073 & 68 & 26 & 3 & 0\\
\cmidrule{1-7}
 & single-SNP & 53 & 9 & 3 & 0 & 0\\

 & 3 & 246 & 10 & 4 & 4 & 0\\

 & 20 & 1002 & 27 & 16 & 2 & 0\\

 & 41 & 1261 & 52 & 12 & 9 & 0\\

 & 81 & 1570 & 104 & 15 & 8 & 0\\

 & 208 & 1743 & 98 & 16 & 14 & 2\\

\multirow[t]{-7}{*}{\centering\arraybackslash platelet} & 425 & 1653 & 119 & 9 & 11 & 0\\
\bottomrule
\end{tabular}

\end{table}

Several consistent associations are discovered, although the power seems lower compared to pooling~\cite{sesia2020controlling}.
This is unsurprising, especially if $r>2$, because the sample sizes are imbalanced: most individuals have either British or other European ancestry.
Figure~\ref{fig:chicago_biobank} visualizes some discoveries for platelet count through a Chicago plot~\cite{sesia2020multi}, highlighting in different colors the numbers of environments across which the findings are consistent.
It is not guaranteed that all discoveries corresponding to a fixed $r \in \{2,\ldots,E\}$ are also found with $r'<r$, although this occurs often; see Figure~\ref{fig:chicago_biobank_stacked} in Appendix~\ref{sec:app-ukb}.
Table~\ref{table:ukb_discoveries_methods} in Appendix~\ref{sec:app-ukb} summarises the findings obtained with selective SeqStep+ instead of the accumulation test, as well those obtained with the intersection heuristic.
Clearly, it cannot be determined from Table~\ref{table:ukb_discoveries_methods} which approach is most effective at causal inference because the ground truth is unknown. Therefore, we will seek more evidence in support of our findings using prior domain knowledge.

\begin{figure}[!htb]
\centering
\includegraphics[width = 0.8\textwidth]{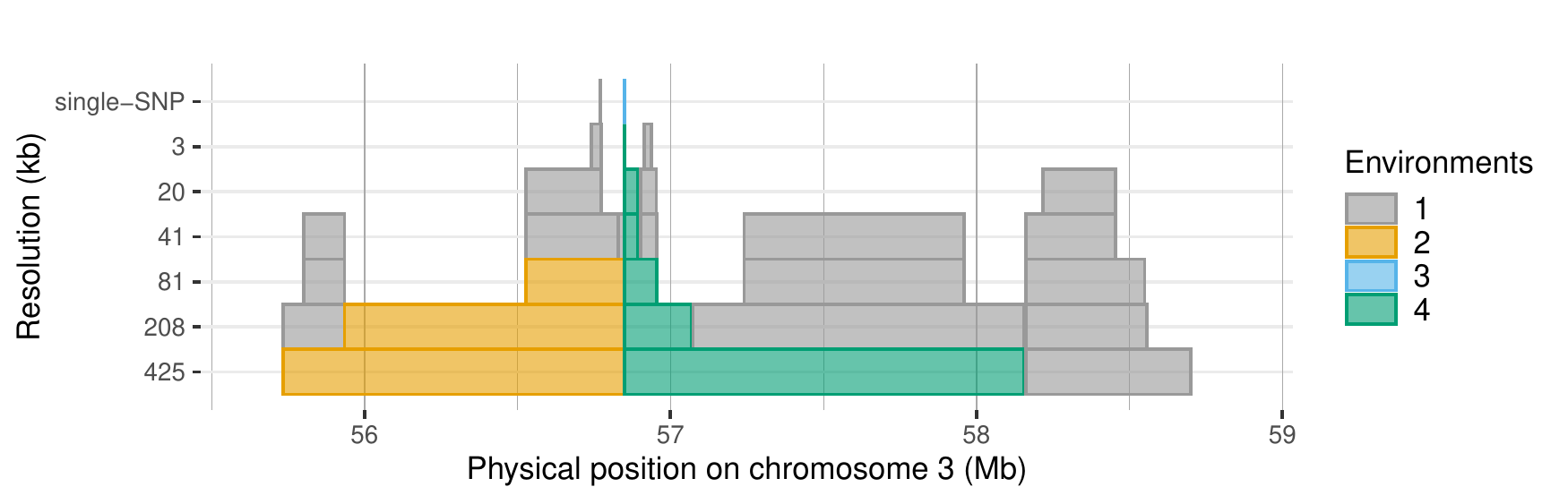}
\caption{Chicago plot of some discoveries on chromosome three for platelet count based on UK Biobank data from individuals in five sub-populations (environments). Each block represents a genetic segment containing distinct associations; the colors indicate the numbers of environments across which they are consistent. The vertical position denotes the resolution of the discovery measured in millions of base pairs (Mb). Other details are as in Table~\ref{table:ukb_discoveries}.}
\label{fig:chicago_biobank}
\end{figure}

\subsection{Validation of genetic findings}

Table~\ref{table:ukb_discoveries_confirm_proportion} demonstrates almost all of our consistent discoveries for height and platelet count are confirmed by the NHGRI-EBI GWAS Catalog~\cite{buniello2019nhgri} (accessed on April 15, 2021). We say that a discovered genetic segment is confirmed if it spans a genomic region containing reported associations for the same phenotype. Relatively fewer discoveries obtained by pooling~\cite{sesia2020controlling} are thus confirmed. Of course, this is not fully conclusive because the GWAS Catalog may include spurious associations and is likely to miss many causal ones, although it is a standard reference.
Table~\ref{table:ukb_discoveries_confirm_proportion} summarises the numbers of findings obtained with the intersection heuristic, as well as the proportions of those which are confirmed by the GWAS Catalog. This shows the intersection heuristic yields either fewer discoveries, or a (slightly) lower validation rate.
This is consistent with our simulations suggesting this heuristic is often either underpowered or excessively liberal. Analogous information for the other phenotypes is in Table~\ref{table:ukb_discoveries_confirm_proportion_others}, Appendix~\ref{sec:app-ukb}.
Table~\ref{table:ukb_discoveries_confirm} reports the names and associated genes of the genetic variants identified by our method at the single-nucleotide resolution. These results indicate all but two of our high-resolution consistent discoveries correspond to variants with known biological consequences, which are located on genes previously reported to be associated with the phenotypes of interest. The full list of discoveries is available online at \url{https://msesia.github.io/knockoffgwas/}.

\begin{table}[!htb]
\center
\small
\caption{Proportions of discoveries confirmed by the GWAS Catalog, for two UK Biobank phenotypes. The multi-environment knockoff filter (MEKF) reports discoveries that are consistent in at least 2 populations, as in Table~\ref{table:ukb_discoveries}. Pooling refers to the analysis of~\cite{sesia2020controlling}. The binomial p-value tests whether the proportion of our confirmed discoveries differs significantly from that corresponding to pooling. The intersection heuristic is the same as in Table~\ref{table:ukb_discoveries_methods}.}
\label{table:ukb_discoveries_confirm_proportion}

\begin{tabular}[t]{cccccc}
\toprule
Phenotype & Resolution (kb) & MEKF & Pooling~\cite{sesia2020controlling} & Binomial p-value & Intersection\\
\midrule
 & single-SNP & 11 / 13 (85\%) & 63 / 95 (66\%) & $3.11 \cdot 10^{-01}$ & 0 / 0\\

 & 3 & 9 / 9 (100\%) & 360 / 570 (63\%) & $5.34 \cdot 10^{-02}$ & 0 / 0\\

 & 20 & 33 / 33 (100\%) & 1043 / 1503 (69\%) & $3.12 \cdot 10^{-04}$ & 25 / 25 (100\%)\\

 & 41 & 42 / 42 (100\%) & 1567 / 2384 (66\%) & $6.99 \cdot 10^{-06}$ & 68 / 68 (100\%)\\

 & 81 & 48 / 48 (100\%) & 1875 / 3006 (62\%) & $1.94 \cdot 10^{-07}$ & 83 / 84 (99\%)\\

 & 208 & 103 / 103 (100\%) & 1968 / 3339 (59\%) & $1.21 \cdot 10^{-16}$ & 102 / 107 (95\%)\\

\multirow[t]{-7}{*}{\centering\arraybackslash height} & 425 & 68 / 68 (100\%) & 1794 / 3073 (58\%) & $1.16 \cdot 10^{-11}$ & 154 / 164 (94\%)\\
\cmidrule{1-6}
 & single-SNP & 8 / 9 (89\%) & 44 / 53 (83\%) & 1 & 0 / 0\\

 & 3 & 10 / 10 (100\%) & 200 / 246 (81\%) & $2.76 \cdot 10^{-01}$ & 0 / 0\\

 & 20 & 27 / 27 (100\%) & 684 / 1002 (68\%) & $9.31 \cdot 10^{-04}$ & 26 / 26 (100\%)\\

 & 41 & 52 / 52 (100\%) & 804 / 1261 (64\%) & $1.71 \cdot 10^{-07}$ & 50 / 50 (100\%)\\

 & 81 & 101 / 104 (97\%) & 921 / 1570 (59\%) & $1.54 \cdot 10^{-14}$ & 67 / 69 (97\%)\\

 & 208 & 97 / 98 (99\%) & 937 / 1743 (54\%) & $4.16 \cdot 10^{-18}$ & 57 / 58 (98\%)\\

\multirow[t]{-7}{*}{\centering\arraybackslash platelet} & 425 & 119 / 119 (100\%) & 887 / 1653 (54\%) & $1.67 \cdot 10^{-22}$ & 70 / 75 (93\%)\\
\bottomrule
\end{tabular}

\end{table}

\section{Discussion} \label{sec:discussion}

\textbf{Consistency, causal inference, and reliability.} This paper proposed a practical high-dimensional method to search for conditional associations that are consistent across environments, provably controlling the false discovery rate within the model-X framework~\cite{candes2018panning}.
  While consistency can lead to valid causal inferences under certain conditions, the relevance of our method extends beyond the relatively narrow scope of the assumptions necessary for such formal connection. In fact, conditional associations and consistency are meaningful statistical concepts even in situations where discussing causality would require more care, either because it is not obvious that the explanatory variables predate the outcome, as in the case of medical imaging data~\cite{castro2020causality}, or because there is no clear notion of possible interventions~\cite{hernan2008does}. However, non-causal conditional associations can still be informative, especially if they are reproducible outside the data set in which they were discovered. For example, consistent conditional associations are useful to make reliable predictions, and fitting predictive models that can be accurate across different environments ({\em transfer learning}) is a well-known challenge in many fields, including genetics~\cite{Duncan2019}, machine learning~\cite{pan2009survey}, and econometrics~\cite{heckman1979sample}, to name a few.
Although this paper does not address prediction explicitly, the problem is related and our proposed method could be repurposed to select good predictors for transfer learning.

\textbf{Genome-wide association studies.} Genome-wide association studies are a natural application for our method because the model-X setup is supported by scientific knowledge of genetic inheritance~\cite{sesia2018}. Further, these studies are primarily exploratory, aiming to prioritize variants for follow-up investigations, which makes the false discovery rate a meaningful measure~\cite{storey2003statistical,sabatti2003false,benjamini2005quantitative}.
A constant causal model is also quite realistic here. First, there is little ambiguity about the causal direction because the genotypes are fixed at conception while the phenotype manifests itself later. Second, the biological mechanisms translating genotypes to phenotypes are likely the same for all humans: the differences across sub-populations lie in the genotypes.
The paucity of non-British samples limits our power with the UK Biobank data, but the growing awareness that genetic studies should increase the representation of different ancestries~\cite{Popejoy2018,Duncan2019} suggests promising future opportunities, especially as some large diverse studies already exist~\cite{gaziano2016million}.


\textbf{Opportunities for future work.} Our method may be useful in many fields, including the social sciences; there, it is easy to envision collecting data from multiple environments and consistency may help ensure the samples are truly random and free of network effects. Further, it is increasingly common to find high-dimensional data with many associations, for which controlling the false discovery rate is desirable. Regarding methodology, it may be possible to develop more powerful test statistics. Finally, we have proved selective SeqStep+~\cite{barber2015controlling} and the accumulation test~\cite{li2017accumulation} are valid under a mild form of dependency which suggests broader applicability than previously known.

\section*{Acknowledgements}

We thank Stefan Wager for insightful comments about an earlier manuscript draft, as well as the Center for Advanced Research Computing at the University of Southern California and the Research Computing Center at Stanford University for computing resources.
We are grateful to the participants and investigators of the UK Biobank.

\printbibliography

\appendix

\renewcommand{\thefigure}{A\arabic{figure}}
\setcounter{figure}{0}
\renewcommand{\thetable}{A\arabic{table}}
\setcounter{table}{0}
\renewcommand{\thetheorem}{A\arabic{theorem}}
\setcounter{theorem}{0}
\renewcommand{\thelemma}{A\arabic{lemma}}
\setcounter{lemma}{0}
\renewcommand{\theprop}{A\arabic{prop}}
\setcounter{prop}{0}

\section{Testing for consistency with conditional randomizations} \label{sec:app-crt}

The conditional randomization test was proposed by~\cite{candes2018panning} as an alternative to knockoffs, and it may be seen as an instance of Fisher's randomization test~\cite{fisher1935} within the model-X framework.
We discuss here how to utilize it to test the consistent hypothesis $\mathcal{H}^{\ici}_{j}$~\eqref{eq:null-ici}, or the partially consistent $\mathcal{H}^{\pici,r}_{j}$~\eqref{eq:null-pici}. We do not aim to control the false discovery rate over all variables; instead, we focus on a single $j \in \{1,\ldots,p\}$. This problem is of separate interest because the conditional randomization test~\cite{candes2018panning} gives (approximately) continuous p-values, while the single bit of information obtainable with knockoffs can only be significant at an aggregate level, within a multiple-testing procedure~\cite{barber2015controlling}.

The conditional randomization test simulates independent realizations $X'_{j}$ of the variable of interest, $X_j$, conditional on all other predictors, $X_{-j}$, independently of the outcome. Then, it compares importance statistics $T_j$ based on the original $X_j$ to the empirical distribution of the analogous quantities $T'_j$ evaluated on the perturbed data set obtained by replacing $X_j$ with the random $X'_{j}$. The output p-value is defined roughly as one minus the empirical percentile of $T_j$ in the aforementioned distribution; i.e., larger values of $T_j$ result in smaller p-values. Environment by environment, this procedure produces a conservative p-value $p_j^e$ for $\mathcal{H}^{\mathrm{ci}, e}_{j}$~\eqref{eq:null-ci} because the distribution of $T'_j$ over multiple realizations of the random $X'_j$ is equivalent to the true null distribution of $T_j$ under $\mathcal{H}^{\mathrm{ci}, e}_{j}$~\eqref{eq:null-ci}~\cite{candes2018panning}.
A strength of this test is that it is flexible and potentially powerful: it can accommodate any statistics, similarly to knockoffs~\cite{candes2018panning}.
Two limitations compared to the latter are: (i) conditional randomization tends to be computationally more expensive, depending on the statistics~\cite{liu2020fast}, because $T'_j$ must be evaluated many times to obtain small p-values; (ii) the p-values for different $j$ are not independent, complicating the control of the false discovery rate~\cite{benjamini2001control}.

For any fixed variable $j$, let $p_j^e$ be the conditional randomization p-value for testing $\mathcal{H}^{\mathrm{ci}, e}_{j}$~\eqref{eq:null-ci} in environment $e$.
 Then, a conservative p-value for testing $\mathcal{H}^{\ici}_{j}$~\eqref{eq:null-ici} is simply given by
\begin{align}  \label{eq:crt-ici}
p^{\inv}_j \defeq \max\cb{p_j^1, \dots, p_j^E}.
\end{align}
Indeed, $\mathcal{H}^{\ici}_{j}$~\eqref{eq:null-ici} implies there exists at least one environment $e$ such that $\mathcal{H}^{\mathrm{ci}, e}_{j}$~\eqref{eq:null-ci} is true. Then, $\forall \alpha \in (0,1)$,
\begin{align*}
  \P{p^{\inv}_j \leq \alpha}
   & = \P{\max\cb{p_j^1, \dots, p_j^E} \leq \alpha} \leq \P{p^{e}_j \leq \alpha} \leq \alpha,
\end{align*}
because $p^{e}_j$ is a conservative p-value for $\mathcal{H}^{\mathrm{ci}, e}_{j}$~\eqref{eq:null-ci}.

To test the partial consistency hypotheses $\mathcal{H}^{\pici,r}_{j}$~\eqref{eq:null-pici}, for any fixed $r \leq E$, one can combine the p-values as follows. First, sort the p-values for different environments in ascending order: $p_j^{(1)} \leq  \dots \leq p_j^{(E)}$. Then, define
\begin{align} \label{eq:simes_p_value}
p^{\inv, r}_j \defeq \min_{r \leq e \leq E} \cb{\frac{E-r+1}{e-r+1} p_j^{(e)} }.
\end{align}
This is known as Simes' partial conjunction p-value, at it is valid for $\mathcal{H}^{\pici,r}_{j}$~\eqref{eq:null-pici} if the p-values $p_j^e$ from different environments are mutually independent~\citep{simes1986improved, benjamini2008screening}. Note that $p^{\inv}_j$~\eqref{eq:crt-ici} is a special case of $p^{\inv, r}_j$~\eqref{eq:simes_p_value} with $r = E$.

The following experiments demonstrate the performance of the above conditional randomization p-values for consistency testing. We simulate $E=3$ environments, $p = 100$ variables, and $n = 200$ observations per environment. The explanatory variables are generated from an autoregressive model of order one with correlation parameter $\rho = 0.5$. For each $e$, the distribution of  $Y^e \mid X^e$ is given by a linear model with Gaussian errors: $Y^e = X^e \beta^e + \epsilon^e$.
The vector $\beta^e \in \mathbb{R}^p$ is an environment-specific parameter, and $\epsilon^e$ represents i.i.d.~standard Gaussian noise.
In each environment, the non-zero entries of $\beta^e$ are equal to $3/\sqrt{n}$. The non-zero entries of $\beta$ in each environment, $S^1, S^2 \in \{1,\ldots,p\}$, are chosen at random such that $S^3 = S^3 \cap S^2 = S^3 \cap S^2 \cap S^1$, $S^2 = S^2 \cap S^1$, while $|S^3| = 20$, $|S^2| = 40$ and $|S^1| = 60$.

\begin{figure}[!htb]
\includegraphics[width = \textwidth]{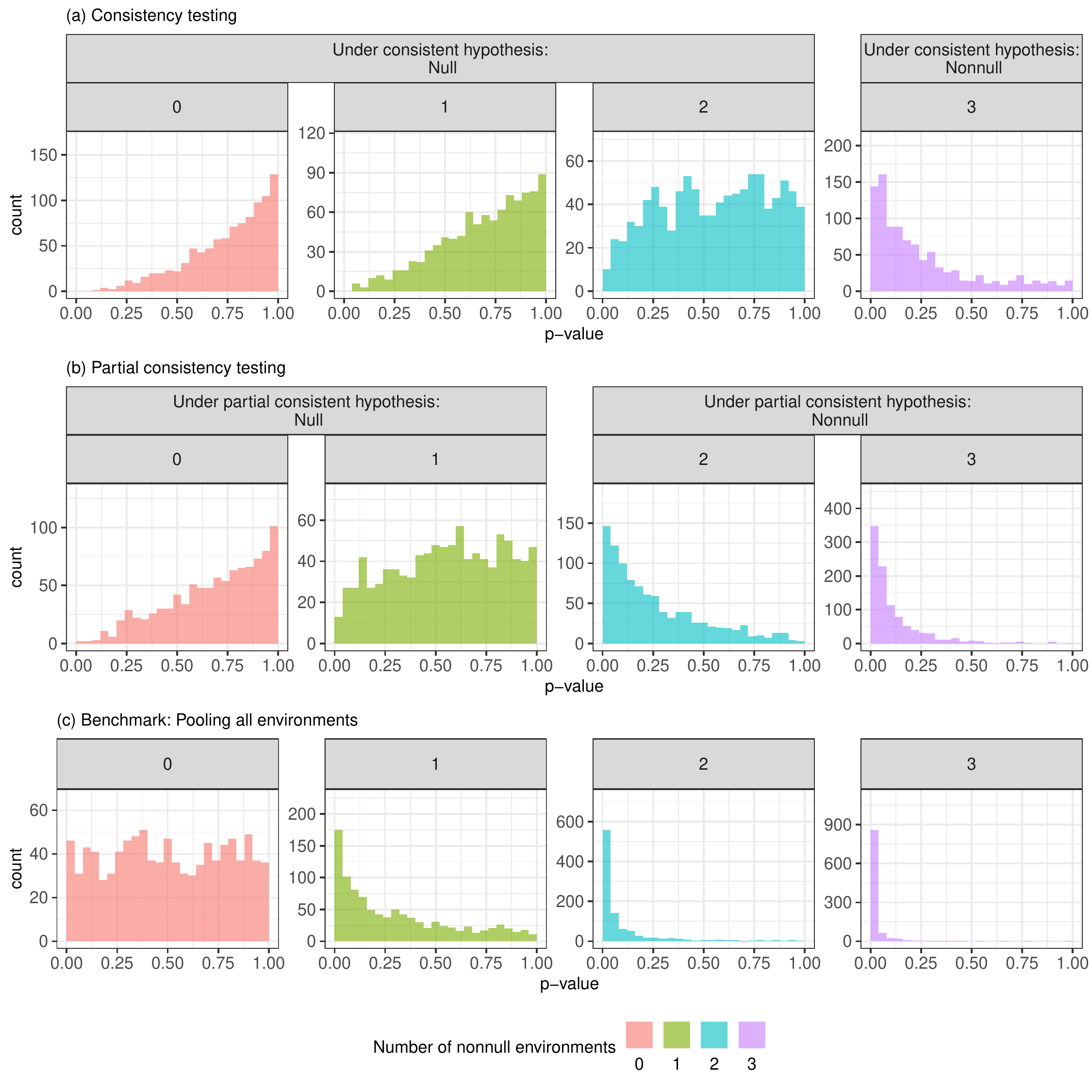}
\caption{Distributions of p-values for testing consistent (a) and partially consistent (b) associations via conditional randomization, over 1000 experiments. Different columns correspond to different tested variables. The null $\mathcal{H}^{\ici}_{j}$~\eqref{eq:null-ici} is true for the first three variables in (a), while the null $\mathcal{H}^{\pici,r}_{j}$~\eqref{eq:null-pici} is true for the first two variables in (b). In (c), we show the distribution of p-values obtained by applying the conditional randomization test to the pooled data.
}
\label{fig:CRT}
\end{figure}

We focus on four variables indexed by $j_0, j_1, j_2, j_3$, which are non-null in the sense of $\mathcal{H}^{\mathrm{ci}, e}_{j}$~\eqref{eq:null-ci} within exactly $0,1,2,3$ environments, respectively.
The test is carried out separately environment by environment. We take the importance statistics to be the absolute values of the lasso coefficients tuned by 10-fold cross validation. The number of randomizations is 100.
The resulting p-values are combined by applying~\eqref{eq:crt-ici}, or~\eqref{eq:simes_p_value} with $r=2$. This leads to a p-value $p^{\inv}_j$, or $p^{\pinv, 2}_j$, for each $j \in \cb{j_0, j_1, j_2, j_3}$.
Figure~\ref{fig:CRT} visualizes the distributions of $p^{\inv}_j$ and $p^{\pinv, 2}_j$ over experiments based on independent data. When the consistent conditional association is tested, $p^{\inv}_{j_0}$, $p^{\inv}_{j_1}$, $p^{\inv}_{j_2}$ are stochastically larger than Uniform$[0,1]$, as $\mathcal{H}^{\ici}_{j}$~\eqref{eq:null-ici} is true for the first three variables. At the same time, $p^{\inv}_{j_3}$ is stochastically smaller, suggesting the method can achieve non-trivial power.
Similarly, $p^{\pinv,2}_{j_0}$ and $p^{\pinv,2}_{j_1}$ are stochastically larger than Uniform$[0,1]$, as $\mathcal{H}^{\pici,2}_{j}$~\eqref{eq:null-pici} is true for the first two variables, while $p^{\pinv,2}_{j_2}$ and $p^{\pinv, 2}_{j_3}$ are clearly stochastically smaller than Uniform$[0,1]$.
As a heuristic benchmark, Figure~\ref{fig:CRT} shows the distribution of p-values obtained by applying the conditional randomization test to the pooled data from all environments. Clearly, this is not a valid test of any consistency hypotheses. In fact, only $p^{\operatorname{pool}}_{j_0}$ is stochastically larger than Uniform$[0,1]$, while the other p-values are stochastically smaller. This demonstrates that $p^{\operatorname{pool}}_{j_1}$ and $p^{\operatorname{pool}}_{j_2}$ are not valid p-values for $\mathcal{H}^{\ici}_{j}$~\eqref{eq:null-ici}, and $p^{\operatorname{pool}}_{j_1}$ is not a valid p-value for $\mathcal{H}^{\pici,2}_{j}$~\eqref{eq:null-pici}.

\section{An example of invalid multi-environment statistics} \label{sec:app-counter}

Following the discussion in Section~\ref{sec:ici-knockoffs}, we include here an example of invalid multi-environment knockoff statistics that do not lead to false discovery rate control. These naive statistics are invalid because their magnitudes are computed by looking at the unperturbed data from all environments, although their signs only depend on the observations from the environment of interest.
We simulate $E = 2$ environments, $p = 100$ variables, and $n = 200$ observations per environment. The explanatory variables are generated from an autoregressive model of order one with correlation parameter $\rho = 0.6$. The distribution of  $Y^e \mid X^e$, for each $e \in \{1,2\}$, is given by a linear model with Gaussian errors: $Y^e = X^e \beta^e + \epsilon^e$.
The vector $\beta^e \in \mathbb{R}^p$ is an environment-specific parameter, and $\epsilon^e$ represents i.i.d.~standard Gaussian noise.
In each environment, 70 entries of $\beta^e$ are non-zero while the others are equal to $a/\sqrt{n}$, where $a$ is a control parameter.
The non-zero entries of $\beta$ in each environment, $S^1, S^2 \in \{1,\ldots,p\}$, are chosen at random such that $|S^1 \cap S^2| = 40$.
The goal is to discover the set of consistent non-nulls, controlling the false discovery rate below $10\%$.

\begin{figure}[!htb]
    \centering
    \includegraphics[width = 0.65 \linewidth]{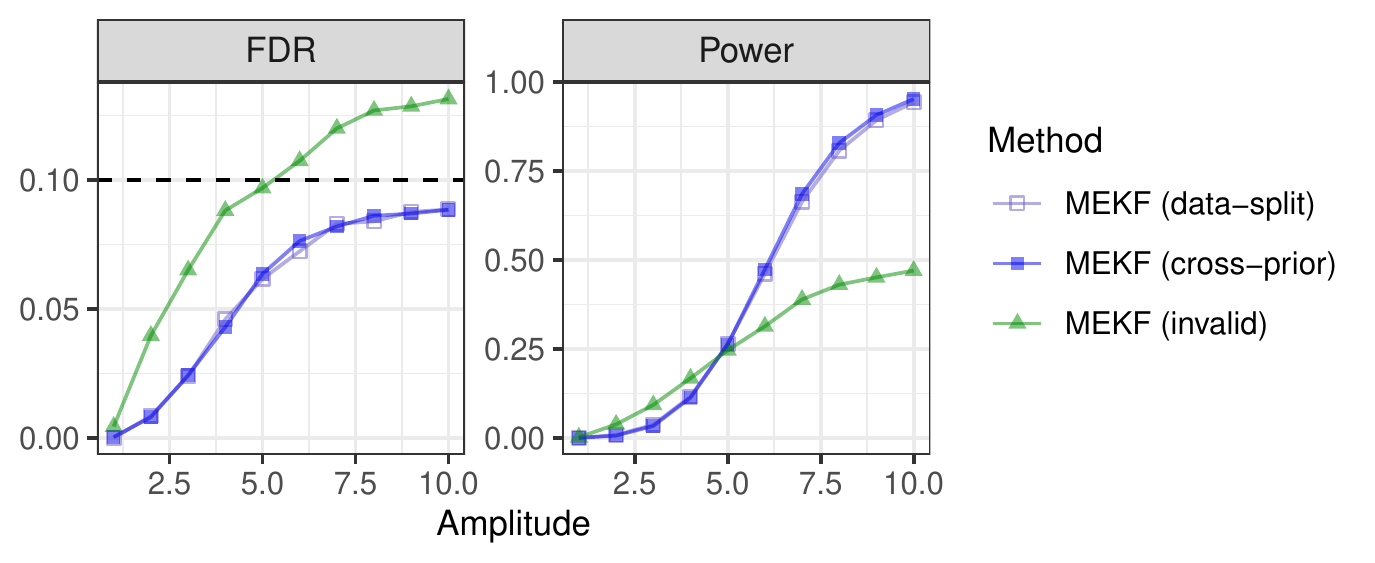}
    \caption{An example in which a naive implementation of the multi-environment knockoff filter with invalid statistics would not control the false discovery rate, while our method does. The false discovery rate and power evaluated over 500 experiments is shown as a function of the signal amplitude. The nominal false discovery rate is 10\%.}
    \label{fig:counter_example}
\end{figure}

Figure~\ref{fig:counter_example} compares the performances of three methods as a function of the signal amplitude~$a$. The first two are our multi-environment knockoff filter with data-split and cross-prior statistics, respectively. The third one naively computes the magnitude of each $W_j^e$ with the standard lasso coefficient difference statistics applied to the pooled data, and then it determines the sign of $W_j^e$ by applying the same procedure on the data from environment $e$. The results show the naive approach does not control the false discovery rate, confirming the importance of our careful construction of multi-environment knockoff statistics (Definition~\ref{def:kenv-stats}).

\section{Additional proofs}

\subsection{Selective SeqStep+ test with dependent p-values} \label{app:proof-selseqstep}

\theoremseqstep*
\begin{proof}[Proof of Theorem~\ref{th:seqstep-fdr}]
Our proof is based on a leave-one-out argument following closely that of Theorem~2 in~\cite{barber2018robust}.
The rejection threshold of selective SeqStep+ at level $\alpha$ is:
\begin{align} \label{eq:sel-seq-step-threshold}
  \hat{\omega} = \min \left\{ \omega : \frac{1 + |\{ j : |W_{j}^{\inv}| \geq \omega, p_j^{\inv} > c\}| }{|\{ j : |W_{j}^{\inv}| \geq \omega, p_j^{\inv} \leq c\}| \lor 1} \leq \frac{1-c}{c} \alpha \right\}.
\end{align}
Define $\mathcal{H}^{\ici} \subseteq \{1,\ldots,p\}$ as the subset of true null $\mathcal{H}^{\ici}_j$~\eqref{eq:null-ici}, and $\hat{\omega}_j$ as the rejection threshold resulting from $p_j^{\inv} \to 0$ in~\eqref{eq:sel-seq-step-threshold}.
Then, the false discovery rate is
\begin{align*}
  \text{FDR}
  & = \E{\frac{\sum_{j \in \mathcal{H}^{\ici}}\I{|W_{j}^{\inv}| \geq \hat{\omega}, p_j^{\inv} \leq c} }{|\{ j : |W_{j}^{\inv}| \geq \hat{\omega}, p_j^{\inv} \leq c\}| \lor 1}} \\
  & = \E{\frac{\sum_{j \in \mathcal{H}^{\ici}}\I{|W_{j}^{\inv}| \geq \hat{\omega}, p_j^{\inv} \leq c} }{1 + |\{ j : |W_{j}^{\inv}| \geq \hat{\omega},  p_j^{\inv} > c\}|} \cdot \frac{1 + |\{ j : |W_{j}^{\inv}| \geq \hat{\omega}, p_j^{\inv} > c\}|}{|\{ j : |W_{j}^{\inv}| \geq \hat{\omega}, p_j^{\inv} \leq c\}| \lor 1}} \\
  & \leq \E{\frac{\sum_{j \in \mathcal{H}^{\ici}}\I{|W_{j}^{\inv}| \geq \hat{\omega}, p_j^{\inv} \leq c} }{1 + |\{ j : |W_{j}^{\inv}| \geq \hat{\omega}, p_j^{\inv} > c\}|}} \cdot \frac{1-c}{c} \alpha \\
  & \leq \E{\frac{\sum_{j \in \mathcal{H}^{\ici}}\I{|W_{j}^{\inv}| \geq \hat{\omega}, p_j^{\inv} \leq c} }{1 + \sum_{j \in \mathcal{H}^{\ici}}\I{|W_{j}^{\inv}| \geq \hat{\omega}, p_j^{\inv} > c} }} \cdot \frac{1-c}{c} \alpha \\
  & = \sum_{j \in \mathcal{H}^{\ici}} \E{\frac{\I{|W_{j}^{\inv}| \geq \hat{\omega}, p_j^{\inv} \leq c} }{1 + \sum_{l \in \mathcal{H}^{\ici}, l \neq j}\I{l : |W_{l}^{\inv}| \geq \hat{\omega} , p_l^{\inv} > c} }} \cdot \frac{1-c}{c} \alpha \\
  & = \sum_{j \in \mathcal{H}^{\ici}} \E{\frac{\I{|W_{j}^{\inv}| \geq \hat{\omega}_j, p_j^{\inv} \leq c} }{1 + \sum_{l \in \mathcal{H}^{\ici}, l \neq j}\I{l : |W_{l}^{\inv}| \geq \hat{\omega}_j , p_l^{\inv} > c} }} \cdot \frac{1-c}{c} \alpha,
\end{align*}
Above, the last equality follows with the same argument as in the proof of Theorem~2 of~\cite{barber2018robust}: if $p_j^{\inv} \leq c$, then $\hat{\omega} = \hat{\omega}_j$.
Then, as $\hat{\omega}_j$ is only a function of $|W^{\inv}|$ and $p_{-j}^{\inv}$, we can write
\begin{align*}
  \text{FDR}
  & \leq \sum_{j \in \mathcal{H}^{\ici}} \E{\frac{\I{|W_{j}^{\inv}| \geq \hat{\omega}_j} \P{p_j^{\inv} \leq c \mid p_{-j}^{\inv}, |W^{\inv}|} }{1 + \sum_{l \in \mathcal{H}^{\ici}, l \neq j}\I{l : |W_{l}^{\inv}| \geq \hat{\omega}_j , p_l^{\inv} > c} }} \cdot \frac{1-c}{c} \alpha \\
  & \leq \sum_{j \in \mathcal{H}^{\ici}} \E{\frac{\I{|W_{j}^{\inv}| \geq \hat{\omega}_j} \P{p_j^{\inv} > c \mid p_{-j}^{\inv}, |W^{\inv}|} }{1 + \sum_{l \in \mathcal{H}^{\ici}, l \neq j}\I{l : |W_{l}^{\inv}| \geq \hat{\omega}_j , p_l^{\inv} > c} }} \cdot \alpha \\
  & = \sum_{j \in \mathcal{H}^{\ici}} \E{\frac{\I{|W_{j}^{\inv}| \geq \hat{\omega}_j, p_j^{\inv} > c} }{1 + \sum_{l \in \mathcal{H}^{\ici}, l \neq j}\I{l : |W_{l}^{\inv}| \geq \hat{\omega}_j , p_l^{\inv} > c} }} \alpha,
\end{align*}
where the second inequality follows directly from the ``almost-independence'' property of Proposition~\eqref{prop:pvals-invariant} because
\begin{align*}
  \frac{\P{p_j^{\inv} \leq c \mid p_{-j}^{\inv}, |W^{\inv}|}}{\P{p_j^{\inv} > c \mid p_{-j}^{\inv}, |W^{\inv}|}} \leq \frac{c}{1-c}.
\end{align*}
Now, it follows from Lemma 6 of~\cite{barber2018robust} that
\begin{align*}
  \frac{\sum_{j \in \mathcal{H}^{\ici}} \I{|W_{j}^{\inv}| \geq \hat{\omega}_j, p_j^{\inv} > c} }{1 + \sum_{l \in \mathcal{H}^{\ici}, l \neq j}\I{l : |W_{l}^{\inv}| \geq \hat{\omega}_j , p_l^{\inv} > c} }
  & = \frac{\sum_{j \in \mathcal{H}^{\ici}} \I{|W_{j}^{\inv}| \geq \hat{\omega}_j, p_j^{\inv} > c} }{1 + \sum_{l \in \mathcal{H}^{\ici}, l \neq j}\I{l : |W_{l}^{\inv}| \geq \hat{\omega}_l , p_l^{\inv} > c} } \\
  & = \frac{\sum_{j \in \mathcal{H}^{\ici}} \I{|W_{j}^{\inv}| \geq \hat{\omega}_j, p_j^{\inv} > c} }{\sum_{l \in \mathcal{H}^{\ici}}\I{l : |W_{l}^{\inv}| \geq \hat{\omega}_l , p_l^{\inv} > c} }  = 1.
\end{align*}
Hence we proved $\text{FDR} \leq \alpha$.
\end{proof}

\subsection{Accumulation test with dependent p-values} \label{app:proof-acc}

Our proof of Theorem~\ref{th:acc-fdr} follows the strategy of~\cite{li2017accumulation} (Theorem 2 therein), with some modifications to relax as needed their independence assumption.
The new idea is to couple our p-values to {\em imaginary and mutually independent} p-values obtained by replacing $n^-_j$ in~\eqref{eq:pici-pvalues-coin} with the number of negative signs in a suitable subset of true null environments for column $j$.
Our result leverages two lemmas: the first one is borrowed from~\cite{li2017accumulation}, and the second one is a generalization of a similar result from~\cite{li2017accumulation} which we prove at the end of this section.

\begin{lemma}[Lemma B.3 from~\cite{li2017accumulation}] \label{lemma:li2017}
Let $B_1, \ldots, B_m \in \{0,1\}$ be independent, with $B_j \overset{\mathrm{i.i.d.}}{\sim } \text{Bernoulli}(\rho)$ for all $j \in \mathcal{H}_0$, for some subset $\mathcal{H}_0 \subseteq \{1,\ldots,m\}$ and $\rho \in (0,1)$. Let $\{ \mathcal{F}_k\}_{k=1,\ldots,m}$ be any filtration in reverse time (i.e., $\mathcal{F}_{k+1} \subseteq \mathcal{F}_{k}$) such that:
\begin{align}
& B_j \in \mathcal{F}_k \text{ for all } j \notin \mathcal{H}_0, \text{ and for all } j > k \text{ with } j \in \mathcal{H}_0, \label{eq:filtration-properties-1} \\
& \sum_{j \leq k, j \in \mathcal{H}_0} B_j \in \mathcal{F}_k, \text{and} \label{eq:filtration-properties-2} \\
& \{B_j : j \leq k, j \in \mathcal{H}_0\} \text{ are exchangeable with respect to } \mathcal{F}_k, \label{eq:filtration-properties-3}
\end{align}
for all $k \in \{1,\ldots,m\}$. Then, the following is a supermartingale with respect to $\{\mathcal{F}_k\}$ and $\E{M_n} \leq 1/\rho$:
\begin{align}
  M_k \defeq \frac{1 + \#\{j \leq k : j \in \mathcal{H}_0\}}{1 + \sum_{j \leq k, j \in \mathcal{H}_0} B_j}.
\end{align}
\end{lemma}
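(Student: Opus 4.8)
The plan is to prove this as a reverse-time (backward) supermartingale statement and then bound its expectation at the coarsest index. Write $N_k \defeq \#\{j \le k : j \in \mathcal{H}_0\}$ and $S_k \defeq \sum_{j \le k, j \in \mathcal{H}_0} B_j$, so that $M_k = (1+N_k)/(1+S_k)$. Because the filtration refines as $k$ decreases ($\mathcal{F}_{k+1}\subseteq\mathcal{F}_k$), the supermartingale property to target is $\E{M_{k-1}\mid\mathcal{F}_k}\le M_k$. Passing from $\mathcal{F}_k$ to $\mathcal{F}_{k-1}$ exposes only index $k$ (by~\eqref{eq:filtration-properties-1}), and $M_{k-1}=M_k$ whenever $k\notin\mathcal{H}_0$; hence the only case to analyze is $k\in\mathcal{H}_0$, where $N_{k-1}=N_k-1$ and $S_{k-1}=S_k-B_k$.

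First I would compute the one-step conditional expectation, which is the heart of the argument. Conditioning on $\mathcal{F}_k$, property~\eqref{eq:filtration-properties-2} makes $S_k$ (and trivially $N_k$) measurable, while property~\eqref{eq:filtration-properties-3} says the null variables $\{B_j : j\le k,\, j\in\mathcal{H}_0\}$ are exchangeable; since they are $N_k$ variables summing to $S_k$, each has conditional mean $S_k/N_k$, so in particular $\P{B_k=1\mid\mathcal{F}_k}=S_k/N_k$ (note $N_k\ge1$ because $k\in\mathcal{H}_0$). Splitting on the two values of $B_k$ then gives
\begin{align*}
  \E{M_{k-1} \mid \mathcal{F}_k}
  = \frac{N_k - S_k}{N_k}\cdot\frac{N_k}{1+S_k} + \frac{S_k}{N_k}\cdot\frac{N_k}{S_k}
  = \frac{N_k - S_k}{1 + S_k} + 1 = \frac{1+N_k}{1+S_k} = M_k,
\end{align*}
where the $B_k=1$ branch carries zero weight when $S_k=0$, in which case the identity degrades to $\E{M_{k-1}\mid\mathcal{F}_k}=N_k\le N_k+1=M_k$. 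Either way $\E{M_{k-1}\mid\mathcal{F}_k}\le M_k$, establishing the supermartingale property with respect to $\{\mathcal{F}_k\}$.

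It then remains to bound the expectation at the largest index (the quantity written $M_n$ in the statement, with $n=m$), which is where $\E{M_k}$ is maximal since $\E{M_k}$ is non-decreasing in $k$. Here $N_m=|\mathcal{H}_0|$ is deterministic and $S_m\sim\text{Binomial}(|\mathcal{H}_0|,\rho)$ by independence of the null $B_j$'s, so I would invoke the elementary reciprocal-moment identity $\E{1/(1+S)}=(1-(1-\rho)^{n_0+1})/((n_0+1)\rho)$ for $S\sim\text{Binomial}(n_0,\rho)$. This yields $\E{M_m}=(1-(1-\rho)^{|\mathcal{H}_0|+1})/\rho\le 1/\rho$, as claimed; combined with optional stopping in reverse time it controls the process at the data-dependent threshold used by the accumulation test.

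The main obstacle, and really the only delicate point, is justifying $\P{B_k=1\mid\mathcal{F}_k}=S_k/N_k$ from the exchangeability hypothesis~\eqref{eq:filtration-properties-3} together with the measurability of the partial sum~\eqref{eq:filtration-properties-2}: exchangeability forces equal conditional marginals, and the fixed sum then pins down their common value. Everything else is either a one-line telescoping computation or the standard binomial identity. The remaining care is purely bookkeeping: confirming via~\eqref{eq:filtration-properties-1} that refining $\mathcal{F}_k$ to $\mathcal{F}_{k-1}$ reveals nothing beyond $B_k$, and handling the degenerate $S_k=0$ boundary case noted above.
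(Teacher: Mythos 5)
Your proof is correct. Note that the paper itself does not prove this lemma---it is imported verbatim as Lemma B.3 of Li and Barber (2017)---so there is no in-paper argument to compare against; your derivation is essentially the standard one from that reference: the exchangeability-plus-measurable-partial-sum step gives $\P{B_k=1\mid\mathcal{F}_k}=S_k/N_k$, the two-branch computation telescopes to $\E{M_{k-1}\mid\mathcal{F}_k}\le M_k$ (with the $S_k=0$ boundary handled as you do), and the terminal bound follows from the binomial reciprocal-moment identity, giving $\E{M_m}=(1-(1-\rho)^{|\mathcal{H}_0|+1})/\rho\le 1/\rho$ (the statement's $M_n$ is a typo for $M_m$, which you correctly interpret).
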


\begin{lemma} \label{lemma:technical-accumulation}
Let $p_1,\ldots,p_m$ be the p-values defined in~\eqref{eq:pici-pvalues-coin}, sorted in decreasing order of $|W_j^{\pinv,r}|$, for $j \in \{1,\ldots,m\}$. Let $\hat{k}$ denote the number of rejections obtained by applying the accumulation test of~\cite{li2017accumulation} at level $\alpha$ to these p-values, with a monotone increasing accumulation function $h : [0,1] \mapsto [0,\infty)$ such that $\int_{0}^{1} h(t) dt = 1$. That is,
\begin{align} \label{eq:stopping-time}
\hat{k} = \max \left\{ k \in \{1,\ldots,m\} : \frac{1}{k} \sum_{j=1}^{k} h(p_j) \leq \alpha \right\}.
\end{align}
Then, for some fixed $C>0$,
\begin{align} \label{eq:lemma-supermartingale-result}
  \E{\frac{\#\{j \leq \hat{k} : j \in \mathcal{H}^{\pinv,r} \}}{C + \sum_{j=1}^{\hat{k}} h(p_j)}}
  \leq \frac{1}{\int_{0}^{1} [h(t) \land C] dt}.
\end{align}
\end{lemma}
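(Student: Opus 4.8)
The plan is to mimic the reverse-time supermartingale proof of Theorem~2 in~\cite{li2017accumulation}, whose only obstruction here is that the p-values $p_j^{\pinv,r}$ in~\eqref{eq:pici-pvalues-coin} are dependent; the new device is a coupling to \emph{imaginary} independent p-values. For every column $j$ with $\mathcal{H}^{\pinv,r}_j$ true, I would fix a canonical set $C_j$ of exactly $E-r+1$ environments in which $\mathcal{H}^{\mathrm{ci},e}_j$ holds (one exists by~\eqref{eq:null-pici}), and let $n^{*-}_j$ be the number of negative signs of $\mathbf{W}$ inside $C_j$ after tie-breaking. By Definition~\ref{def:kenv-stats} these signs are independent fair coin flips, so $n^{*-}_j \sim \mathrm{Binomial}(E-r+1,\tfrac12)$, independently across such $j$ and independently of $|\mathbf{W}|$. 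Replacing $n^-_j$ by $n^{*-}_j$ in~\eqref{eq:pici-pvalues-coin} while keeping the same $U_j$ defines $p^*_j$; crucially, because the trial count in~\eqref{eq:pici-pvalues-coin} is the constant $E-r+1$, each $p^*_j$ is an \emph{exactly} uniform randomized binomial p-value and the $\{p^*_j\}$ are i.i.d.\ across null $j$.

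Next I would exploit monotonicity. Since $C_j$ sits inside the full environment set, $n^{*-}_j \le n^-_j$, and the randomized binomial p-value is nondecreasing in its count, so $p^*_j \le p_j^{\pinv,r}$ for every null $j$; as $h$ is increasing, $h(p^*_j) \le h(p_j^{\pinv,r})$. Writing $g \defeq h \land C$ and discarding the non-null summands gives
\[
  \frac{\#\{j \le \hat k : j \in \mathcal{H}^{\pinv,r}\}}{C + \sum_{j=1}^{\hat k} h(p_j^{\pinv,r})}
  \;\le\;
  \frac{1 + \#\{j \le \hat k : j \in \mathcal{H}^{\pinv,r}\}}{C + \sum_{j \le \hat k,\; j \in \mathcal{H}^{\pinv,r}} g(p^*_j)}
  \;=\; \tfrac{1}{C}\, M_{\hat k},
\]
where $M_k \defeq C\,(1+\#\{j\le k: j\in\mathcal{H}^{\pinv,r}\})/(C+\sum_{j\le k,\,\text{null}} g(p^*_j))$. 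I would then apply Lemma~\ref{lemma:li2017} with the Bernoulli variables $B_j$ replaced by $g(p^*_j)/C \in [0,1]$: its supermartingale proof carries over verbatim to $[0,1]$-valued exchangeable summands, and since $\E{g(p^*_j)} = \int_0^1 (h\land C)\,dt =: \mu$, the terminal bound reads $\E{M_m} \le C/\mu$. Optional stopping of the reverse-time supermartingale at $\hat k$ then delivers $\E{C^{-1}M_{\hat k}} \le 1/\mu$, which is~\eqref{eq:lemma-supermartingale-result}.

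The supporting bookkeeping is to set up a reverse-time filtration $\{\mathcal{F}_k\}$ revealing each $g(p^*_j)$ for non-null $j$ and for null $j>k$, together with the aggregate $\sum_{j\le k,\,\text{null}} g(p^*_j)$, and to check from~\eqref{eq:stopping-time} that $\{\hat k \ge k\}$ is $\mathcal{F}_k$-measurable because it depends only on partial sums $\sum_{j\le l}$ with $l\ge k$. The main obstacle is that Lemma~\ref{lemma:li2017} requires the imaginary null summands $\{g(p^*_j): \text{null},\,j\le k\}$ to be \emph{exchangeable} given $\mathcal{F}_k$, whereas the accumulation rule~\eqref{eq:stopping-time} that determines $\hat k$ is driven by the \emph{actual} values $h(p_j^{\pinv,r})$. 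I would reconcile the two by conditioning on $|\mathbf{W}|$ and on every sign lying outside the subsets $\{C_j\}$, so that conditionally each $p^*_j$ becomes the \emph{same} increasing function of the single fresh variable $n^{*-}_j$ for all null $j$ --- which is exactly where the constant trial count matters. This is the crux of the argument and the reason tie-breaking is indispensable: with the untie-broken p-values~\eqref{eq:pici-pvalues} the effective number of trials $E-r+1-n^0_j$ varies with $j$, the null summands are no longer identically distributed, exchangeability fails, and the martingale argument collapses, matching the remark we make after Theorem~\ref{th:acc-fdr}.
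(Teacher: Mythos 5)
Your proposal follows essentially the same route as the paper's proof: the coupling of each null column to an imaginary p-value built from a fixed set of $E-r+1$ null environments, the observation that these imaginary p-values are exactly i.i.d.\ uniform by Definition~\ref{def:kenv-stats} (which is precisely where the constant trial count from the tie-breaking in~\eqref{eq:pici-pvalues-coin} is needed), the monotone domination $p^*_j \le p_j$, the reverse-time filtration that reveals everything outside the canonical null sets so that $\hat k$ is a stopping time while the null summands for $j\le k$ remain exchangeable, and the optional-stopping conclusion are all the paper's argument. The one place you diverge is the last step: the paper applies Lemma~\ref{lemma:li2017} literally, to genuine Bernoulli variables $B_j = \I{V_j \le h(\tilde p_j)/C}$ manufactured with auxiliary uniforms $V_j$, and then converts back with Jensen's inequality; you instead feed the $[0,1]$-valued summands $(h(p^*_j)\land C)/C$ directly into a generalized version of the lemma. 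That generalization is true, but ``carries over verbatim'' oversells it: the Bernoulli proof rests on an exact two-point conditional-expectation identity, and for $[0,1]$-valued exchangeable summands one must additionally argue (e.g., by convexity of $x\mapsto 1/(1+S-x)$ and an extreme-point argument over the slice of the cube with fixed sum, or equivalently by the very Bernoulli-randomization-plus-Jensen device the paper uses) that the supermartingale inequality survives. If you supply that one inequality, your version is a clean and slightly more direct packaging of the same proof, yielding the identical bound~\eqref{eq:lemma-supermartingale-result}.
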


\theoremaccumulation*
\begin{proof}[Proof of Theorem~\ref{th:acc-fdr}]

Assume the p-values are sorted in decreasing order of $|W_j^{\pinv,r}|$ and let $\hat{k}$ be the number of rejections made by the accumulation test at a fixed nominal level~$\alpha$, as defined in~\eqref{eq:stopping-time}.  To simplify the notation, we will write $p_j$ instead of $p_{j}^{\pinv,r}$.
Proceeding as in~\cite{li2017accumulation}, we see that
\begin{align*}
    \E{\text{mFDP}_{C/\alpha}(\hat{k})}
    & = \E{\frac{\#\{j \leq \hat{k} : j \in \mathcal{H}^{\pinv,r} \}}{C/\alpha + \hat{k}}} \\
    & = \E{\frac{\#\{j \leq \hat{k} : j \in \mathcal{H}^{\pinv,r} \}}{C + \sum_{j=1}^{\hat{k}} h(p_j)} \cdot \frac{C + \sum_{j=1}^{\hat{k}} h(p_j)}{C/\alpha + \hat{k}} } \\
    & \leq \alpha \cdot \E{\frac{\#\{j \leq \hat{k} : j \in \mathcal{H}^{\pinv,r} \}}{C + \sum_{j=1}^{\hat{k}} h(p_j)}},
\end{align*}
where the inequality follows from the definition of $\hat{k}$~\eqref{eq:stopping-time}.
Lemma~\ref{lemma:technical-accumulation} completes the proof.
The same argument can be repurposed to prove the stricter threshold $\hat{k}^+$ defined in~\cite{li2017accumulation} controls the (unmodified) false discovery rate.
\end{proof}

\begin{proof}[Proof of Lemma~\ref{lemma:technical-accumulation}]
  This proof follows a similar strategy as that of Lemma~B.2 in~\cite{li2017accumulation}, although we additionally need to leverage the special structure of our partial consistency p-values~\eqref{eq:pici-pvalues-coin} to get around their lack of independence.

Take any $j$ such that $\mathcal{H}^{\pinv,r}_j$ is true. Because there must be at least $K-r+1$ null environments for this variable, we can define $\tilde{n}^{-}_j$ as the number of negative signs among the first (in any arbitrary order) $K-r+1$ null entries of the $j$-th column of $\mathbf{W}$. We assume hereafter that any zero entries in $\mathbf{W}$ have been randomly assigned a positive or negative sign by flipping independent fair coins. Concretely, let us define the list of indices for these environments as $\mathcal{E}^0_j$, as we will need to refer to them later. Define also $\vardbtilde{n}^{-}_j = n^{-}_j - \tilde{n}^{-}_j \geq 0$, the number of negative signs from environments other than those in $\mathcal{E}^0_j$.
Define then $\tilde{p}_j$ as the p-value obtained by replacing $n^-_j$ with $\tilde{n}^{-}_j$ in~\eqref{eq:pici-pvalues-coin}, that is
\begin{align*}
  \tilde{p}_j
  & \defeq \Psi\left( E-r+1, \frac{1}{2}, \tilde{n}^-_j -1 \right) + U_j \cdot \psi\left( E-r+1, \frac{1}{2}, \tilde{n}^-_j \right).
\end{align*}
The imaginary p-values $\tilde{p}_j$ for $j \in \mathcal{H}^{\pinv,r}$ are independent of each other because they are only affected by the signs of the true null entries in $\mathbf{W}$, which satisfies Definition~\ref{def:kenv-stats}. They are also exactly uniformly distributed,
\begin{align} \label{eq:imaginary-pvals-iid}
  \tilde{p}_j \overset{\text{i.i.d.}}{\sim} \text{Uniform}[0,1],
\end{align}
for $j \in \mathcal{H}^{\pinv,r}$, because they are the randomized binomial p-values corresponding to
\begin{align} \label{eq:imaginary-counts-iid}
  \tilde{n}^{-}_j \overset{\text{i.i.d.}}{\sim} \text{Binomial}\left(E-r+1, 1/2\right).
\end{align}
Further, note that $\tilde{p}_j \leq p_j$ almost-surely because $\tilde{n}^-_j \leq n^-_j$; the independent uniform random variables $U_j$ are taken to be the same for both real and imaginary p-values.
Therefore, as we assumed the accumulation function to be monotone increasing, we also have that $h(\tilde{p}_j) \leq h(p_j)$ and, for all $k \in \{1,\ldots,p\}$,
\begin{align*}
    M_k \defeq \frac{\#\{j \leq k : j \in \mathcal{H}^{\pinv,r} \}}{C + \sum_{j=1}^{k} h(p_j)}
        \leq \frac{\#\{j \leq k : j \in \mathcal{H}^{\pinv,r} \}}{C + \sum_{j=1}^{k} h(\tilde{p}_j)} \defeqi \tilde{M}_k.
\end{align*}
Now we can deal with $\tilde{M}_k$ with the same approach of~\cite{li2017accumulation}. Define an i.i.d.~sequence $V_j \sim \text{Uniform}[0,1]$, for $j \in \{1,\ldots,m\}$,  independent of everything else, and random variables $B_j = \I{V_i \leq h(\tilde{p}_i) / C}$.
Conditional on $\tilde{p}_1,\ldots,\tilde{p}_m$, the variables $B_j$ are independent and $B_j \sim \text{Bernoulli}([h(\tilde{p}_j)/C] \land 1)$. Further, marginally,
\begin{align*}
  \E{B_j} = \E{\frac{h(\tilde{p}_j) \land C}{C}} = \frac{1}{C} \int_{0}^{1} [h(t) \land C] dt \defeqi \rho.
\end{align*}
As in~\cite{li2017accumulation}, we would like to apply Lemma~\ref{lemma:li2017} to bound
\begin{align*}
  \E{\frac{1 + \#\{j \leq \hat{k} : j \in \mathcal{H}_0\}}{1 + \sum_{j \leq \hat{k}, j \in \mathcal{H}_0} B_j}},
\end{align*}
which requires a reverse-time filtration satisfying~\eqref{eq:filtration-properties-1}--\eqref{eq:filtration-properties-3}, and then show $\hat{k}$ is a stopping time with respect to it.

First, let $\mathcal{G}$ be the $\sigma$-algebra generated by the following variables:
\begin{itemize}[noitemsep,topsep=0pt]
  \item $|\mathbf{W}|$ (the absolute values of all entries in $\mathbf{W}$);
  \item $\text{sign}(W^e_j)$ for $j \in \{1,\ldots,m\}$ and $e \notin \mathcal{H}_j$ (the signs of $W_{j}^e$ for non-null environments);
  \item $\tilde{p}_j$ for $j \notin \mathcal{H}^{\pinv,r}$ (the imaginary p-values corresponding to non-null consistent hypotheses).
\end{itemize}
Second, for any $k \in \{1,\ldots,m\}$, let $\mathcal{F}_k$ be the union of $\mathcal{G}$ with the $\sigma$-algebra generated by:
\begin{itemize}[noitemsep,topsep=0pt]
\item $(\tilde{p}_j, \tilde{n}^{-}_j, U_j, \vardbtilde{n}^{-}_j)$ for all $j > k$;
\item $\{(\tilde{p}_j, \tilde{n}^{-}_j, U_j, \vardbtilde{n}^{-}_j)\}_{j=1}^{k}$ (as an unordered set).
\end{itemize}
This is a reverse-time filtration and it satisfies~\eqref{eq:filtration-properties-1}--\eqref{eq:filtration-properties-3}. The key observations here are that the property of $\mathbf{W}$ in Definition~\ref{def:kenv-stats} implies~\eqref{eq:imaginary-pvals-iid}--\eqref{eq:imaginary-counts-iid} hold also conditional on $\mathcal{G}$, and that $\vardbtilde{n}^{-}_j$ and $\tilde{n}^{-}_j$ are independent.
Regarding $\hat{k}$~\eqref{eq:stopping-time}, note that it is defined in terms of $p_j$, not $\tilde{p}_j$ (as in practice one needs to evaluate $\hat{k}$ without knowing a priori which environments correspond to true nulls). However, the real p-values $p_j$ can be reconstructed exactly from knowledge of $\tilde{n}^{-}_j, U_j, \vardbtilde{n}^{-}_j$ and the information in $\mathcal{G}$.
 Therefore, $\hat{k}$ is a stopping time with respect to $\{\mathcal{F}_k\}$. Thus, we can apply Lemma~\ref{lemma:li2017} in combination with the optional stopping theorem to conclude that
\begin{align} \label{eq:ost}
  \E{\frac{1 + \#\{j \leq \hat{k} : j \in \mathcal{H}_0\}}{1 + \sum_{j \leq \hat{k}, j \in \mathcal{H}_0} B_j}} \leq \frac{1}{\rho} = \frac{1}{\frac{1}{C} \int_{0}^{1} [h(t) \land C]}.
\end{align}
The rest of the proof follows closely that of Lemma~B.2 in~\cite{li2017accumulation} but it is nonetheless recalled here for completeness.
\begin{align*}
  \E{\frac{1 + \#\{j \leq \hat{k} : j \in \mathcal{H}_0\}}{1 + \sum_{j \leq \hat{k}, j \in \mathcal{H}_0} B_j}}
  & = \E{ \E{\frac{1 + \#\{j \leq \hat{k} : j \in \mathcal{H}_0\}}{1 + \sum_{j \leq \hat{k}, j \in \mathcal{H}_0} B_j} \mid \tilde{p}_1, \ldots, \tilde{p}_m} }  \\
  & = \E{ \left( 1 + \#\{j \leq \hat{k} : j \in \mathcal{H}_0\} \right) \E{\frac{1}{1 + \sum_{j \leq \hat{k}, j \in \mathcal{H}_0} B_j} \mid \tilde{p}_1, \ldots, \tilde{p}_m} }  \\
  & \geq \E{ \left( 1 + \#\{j \leq \hat{k} : j \in \mathcal{H}_0\} \right) \frac{1}{\E{1 + \sum_{j \leq \hat{k}, j \in \mathcal{H}_0} B_j \mid \tilde{p}_1, \ldots, \tilde{p}_m}}} \\
  & = \E{ \left( 1 + \#\{j \leq \hat{k} : j \in \mathcal{H}_0\} \right) \frac{1}{1+ \sum_{j \leq \hat{k}, j \in \mathcal{H}_0} \frac{h(\tilde{p}_j) \land C}{C}}} \\
  & \geq C \cdot \E{ \frac{1 + \#\{j \leq \hat{k} : j \in \mathcal{H}_0\}}{C+ \sum_{j \leq \hat{k}, j \in \mathcal{H}_0} h(\tilde{p}_j)}}.
\end{align*}
Above, the first inequality is Jensen's inequality. The proof of this lemma is then completed by~\eqref{eq:ost}.
\end{proof}

\section{Additional details about numerical experiments}  \label{sec:app-experiments}

\subsection{Synthetic data} \label{sec:app-synthetic}

\begin{figure}[!htb]
  \centering
  \begin{subfigure}{0.45\textwidth}
    \centering
    \includegraphics[width=0.7\linewidth]{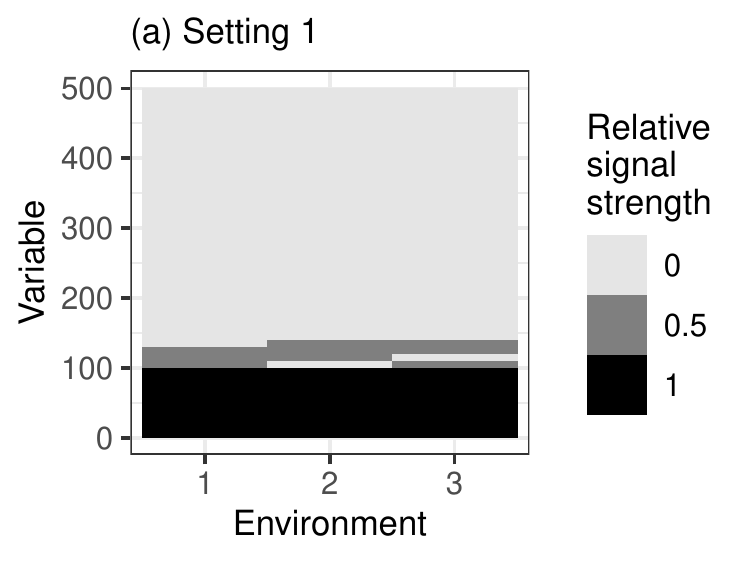}
  \end{subfigure}
  \begin{subfigure}{0.45\textwidth}
    \includegraphics[width=0.7\linewidth]{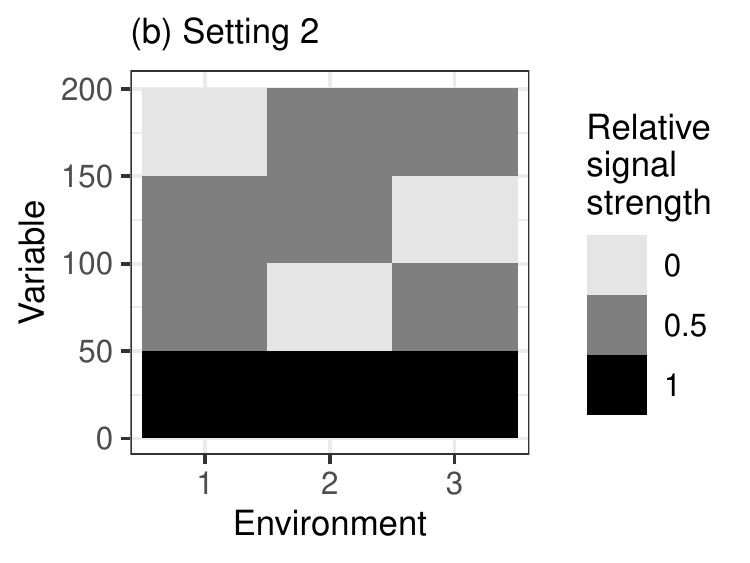}
  \end{subfigure}
  \caption{True hypothesis structure for the numerical experiments of Figure~\ref{fig:inv_sim}. The shaded rectangles indicate which variables are non-null in each environment. Darker shades indicate stronger associations. The variables are sorted here for ease of visualization; in the experiments, their order is randomized.}
  \label{fig:inv_beta}
\end{figure}

\begin{figure}[!htb]
  \centering
  \begin{subfigure}{0.45\textwidth}
    \centering
    \includegraphics[width=0.7\linewidth]{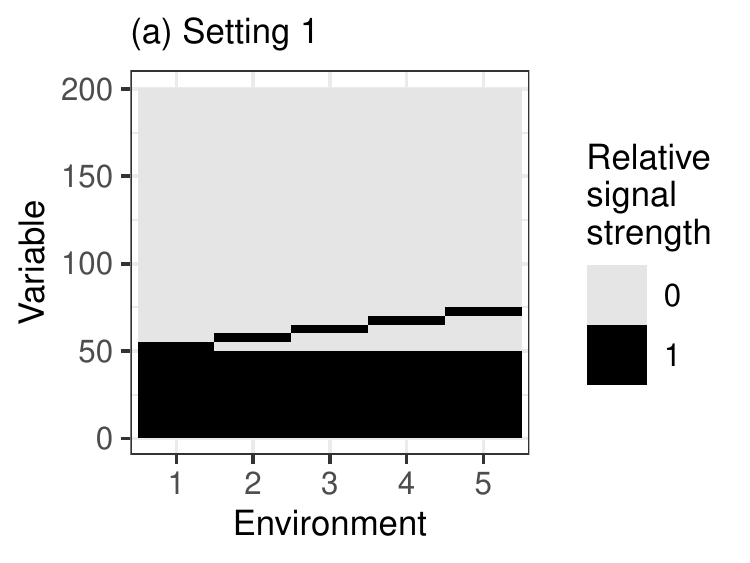}
  \end{subfigure}
  \begin{subfigure}{0.45\textwidth}
    \includegraphics[width=0.7\linewidth]{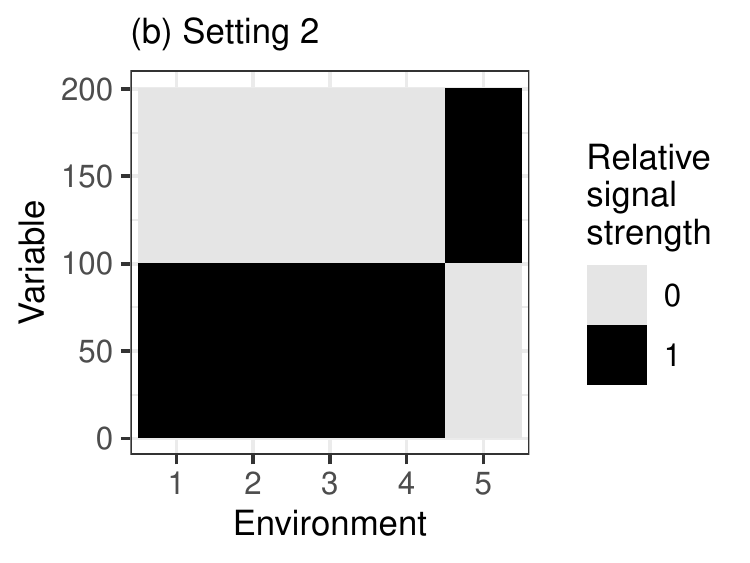}
  \end{subfigure}
  \caption{True hypothesis structure for the numerical experiments of Figure~\ref{fig:pc_sim}. Other details are as in Figure~\ref{fig:inv_beta}.}
  \label{fig:pinv_beta}
\end{figure}

\begin{figure}[!htb]
 \centering
 \includegraphics[width=\linewidth]{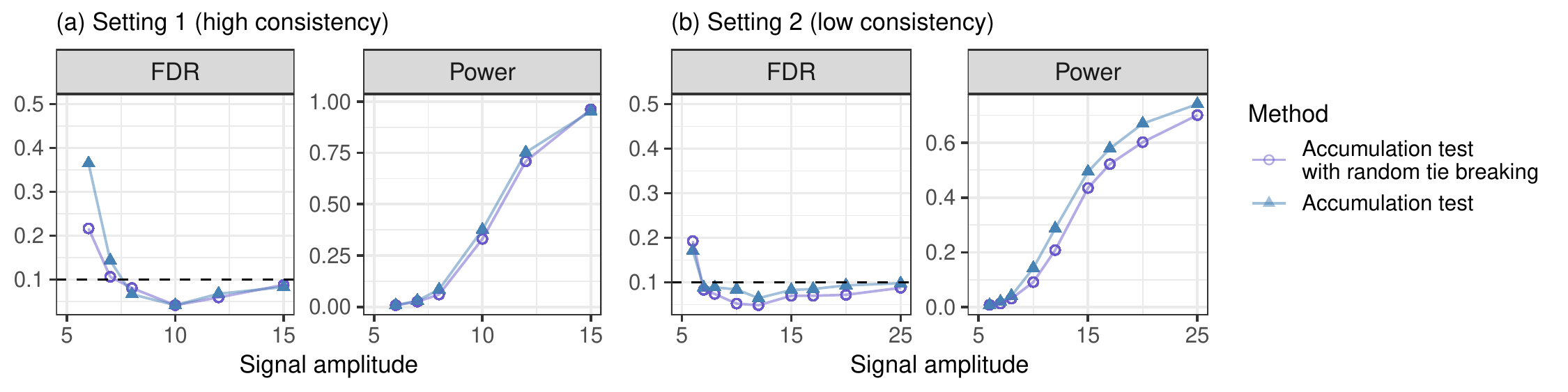}
 \caption{Performance of the accumulation test applied with two alternative implementations of our knockoff-based p-values: those in~\eqref{eq:pici-pvalues-coin} (random tie breaking for zero statistics) and~\eqref{eq:pici-pvalues} (no random tie breaking).}
 \label{fig:pc_sim_coin}
\end{figure}

\subsection{Simulated genetic study} \label{sec:app-gwas}

Synthetic genotypes from different sub-populations are generated based on the phased haplotypes in the 1000 Genomes Project as follows. First, 20 haplotype sequences are randomly picked from each of the five populations represented therein (AFR, AMR, EAS, EUR, SAS); these haplotypes will serve as ancestral motifs in a hidden Markov model similar to that of SHAPEIT~\cite{o2016haplotype}. Our hidden Markov model is constant across populations, and in each of them it describes the distribution of new haplotypes sequences as an imperfect mosaic of the corresponding 1000 Genomes motifs. More precisely, letting $H = (H_1, \ldots, H_p) \in \{0,1\}^p$ denote a sequence of haplotypes at $p$ sites, $H$ satisfies
\begin{align} \label{eq:HMM_def}
\begin{split}
\begin{cases}
  Z  \sim \text{MC}\left(Q\right), & \text{(latent Markov chain)}, \\
  H_j \mid Z \overset{\text{ind.}}{\sim} f_j(H_j \mid Z_j), & \text{(emission distribution)},
\end{cases}
\end{split}
\end{align}
where $Z = (Z_1, \ldots, Z_p)$ are latent random variables, each taking values in $\{1,\ldots,L\}$, for $L=20$.
Above, $\text{MC}\left(Q\right)$ is a Markov chain with initial probabilities $Q_1$ and transition matrices $(Q_2,\ldots,Q_p)$:
\begin{align} \label{eq:hmm}
  Q_{1}(l) & = \frac{1}{L},
  & Q_{j}(l' \mid l)
   = \begin{cases}
    \left(1-e^{-\rho d_j}\right) \frac{1}{L} + e^{-\rho d_j}, & \text{if } l' = l, \\
    \left(1-e^{-\rho d_j}\right) \frac{1}{L}, & \text{if } l' \neq l,
  \end{cases}
\end{align}
for all $j \in \{1,\ldots,p\}$ and $l \in \{1,\ldots,L\}$.
Above, $d_j$ indicates the genetic distance between loci $j$ and
$j-1$, measured in cM and provided by the 1000 Genomes Project.
In our simulations, we simply set $\rho=1$ and let the emission distribution of $H_j \mid Z_j$ be such that $H_j$ is equal
to the reference (motif) haplotype indexed by $Z_j$ with probability 0.999 (0.1\% per-site mutation rate).
Then, to obtain unphased genotypes for $n$ individuals, we sample $2n$ independent haplotype sequences from the above model and combine them pairwise by taking element-by-element sums.
Finally, we can leverage our exact knowledge of this hidden Markov model to generate knockoff copies of the typed variants with the same approach as in~\cite{sesia2020multi}; in fact, it is immediate to show any subset of haplotypes from the above model still jointly follows a hidden Markov model from the same SHAPEIT family, with suitably modified genetic distances.

\begin{figure}[!htb]
  \centering
  \includegraphics[width=0.9\linewidth]{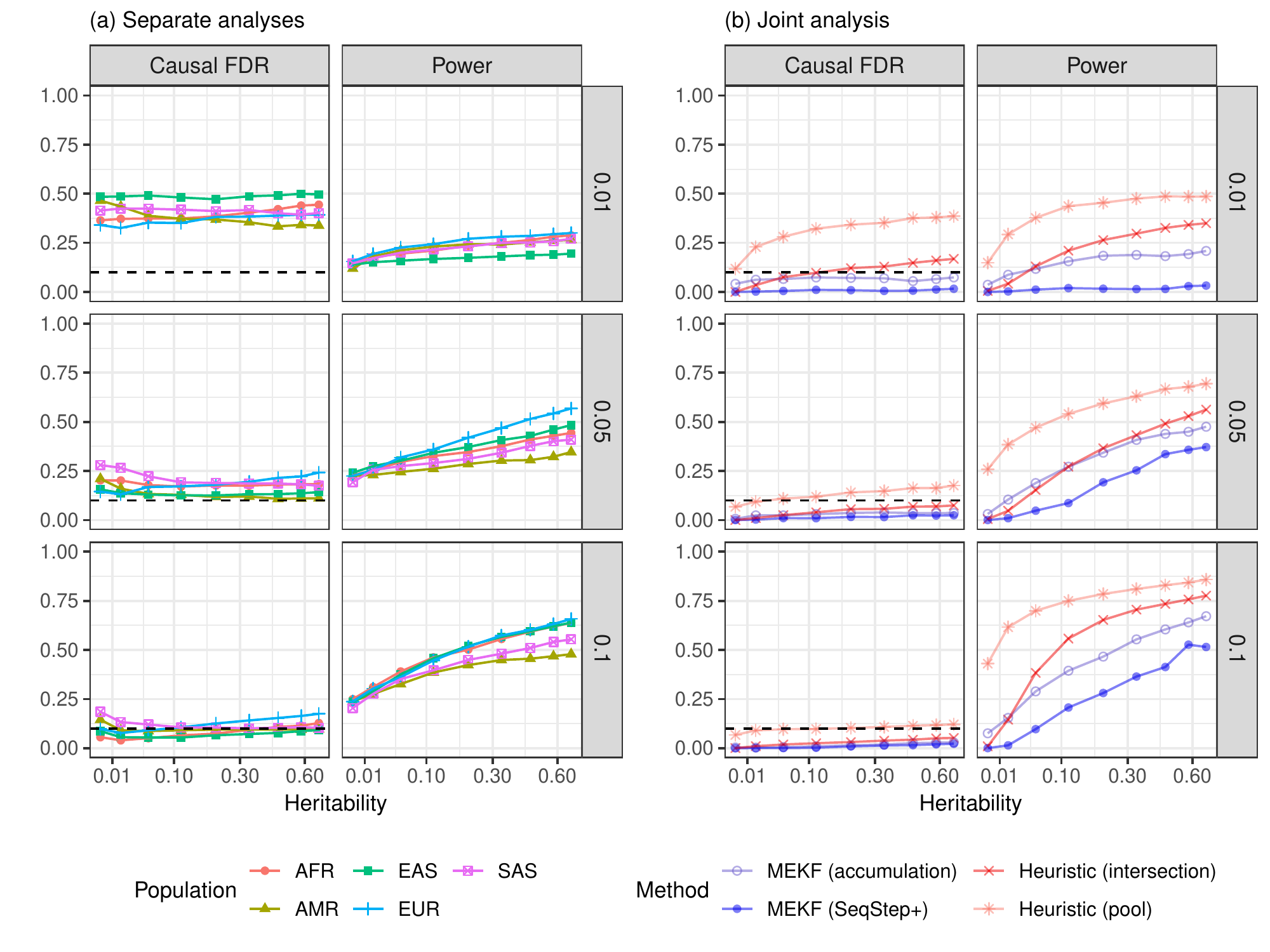}
  \caption{Analysis of a simulated multi-population genome-wide association study in which the true causal variants are missing, for different genotyping densities. Resolution: 15 kb. Other details are as in Figure~\ref{fig:sim-missing-resolution}.}
  \label{fig:sim-missing-density}
\end{figure}

\begin{figure}[!htb]
  \centering
  \includegraphics[width=0.9\linewidth]{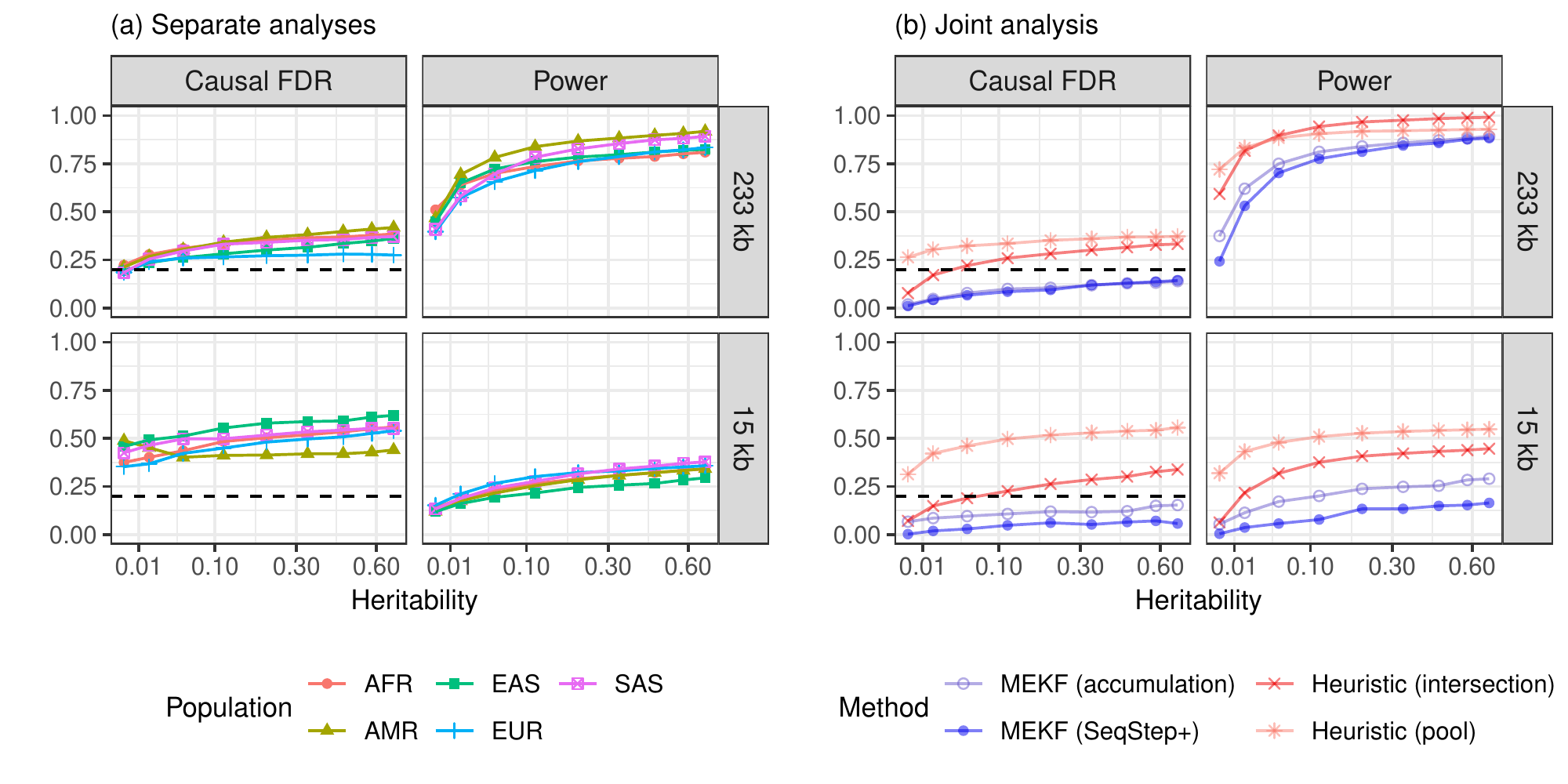}
  \caption{Analysis of a simulated multi-population genome-wide association study in which the true causal variants are missing. The nominal false discovery rate is~20\%. Other details are as in Figure~\ref{fig:sim-missing-resolution}.}
  \label{fig:sim-missing-resolution-0.2}
\end{figure}

\begin{figure}[!htb]
  \centering
  \includegraphics[width=0.65\linewidth]{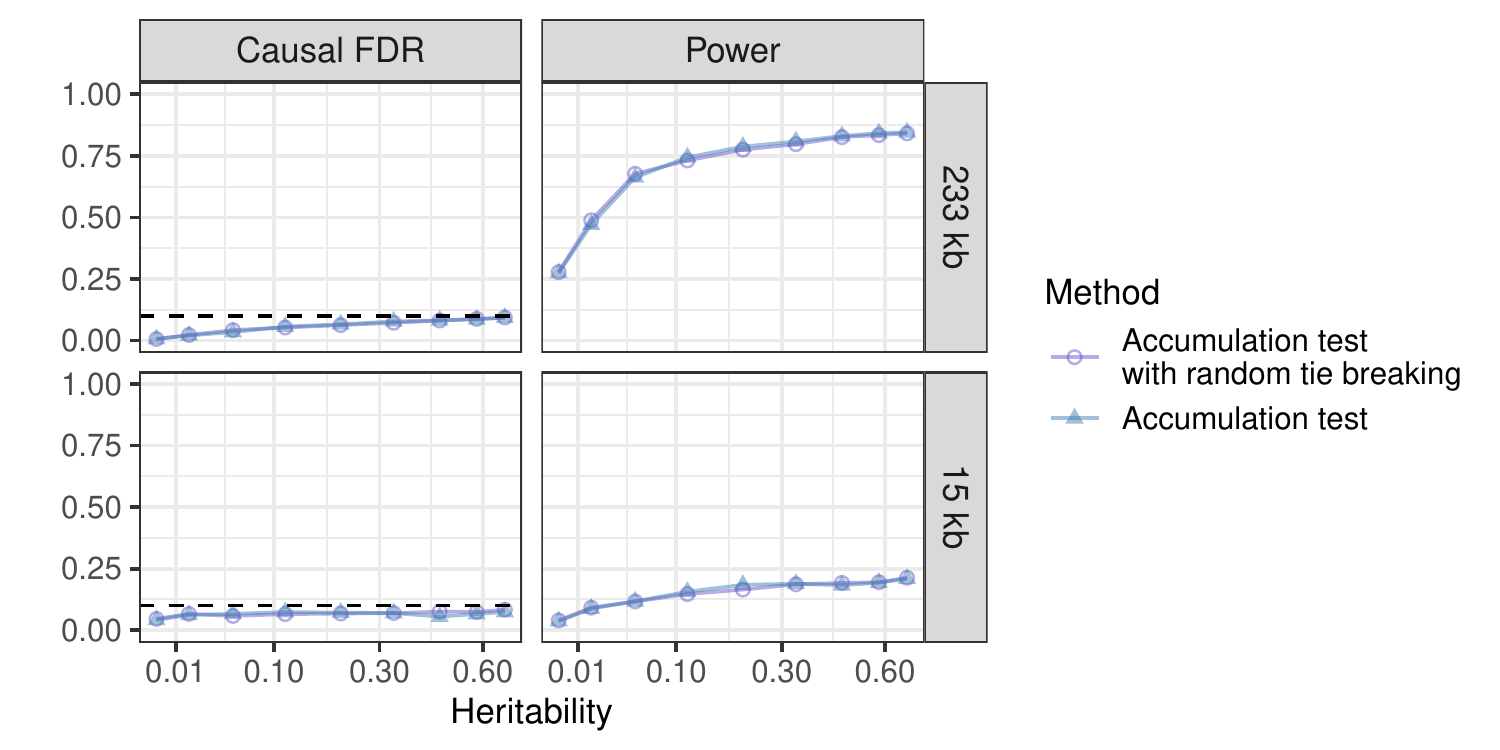}
 \caption{Performance in a simulated genome-wide association study of the accumulation test applied with two alternative implementations of our p-values: those in~\eqref{eq:pici-pvalues-coin} (random tie breaking for zero statistics) and~\eqref{eq:pici-pvalues} (no random tie breaking). The two methods are essentially equivalent here. Other details are as in Figure~\ref{fig:sim-missing-resolution} (a).}
  \label{fig:sim-missing-resolution-coin}
\end{figure}

\begin{figure}[!htb]
  \centering
  \includegraphics[width=0.65\linewidth]{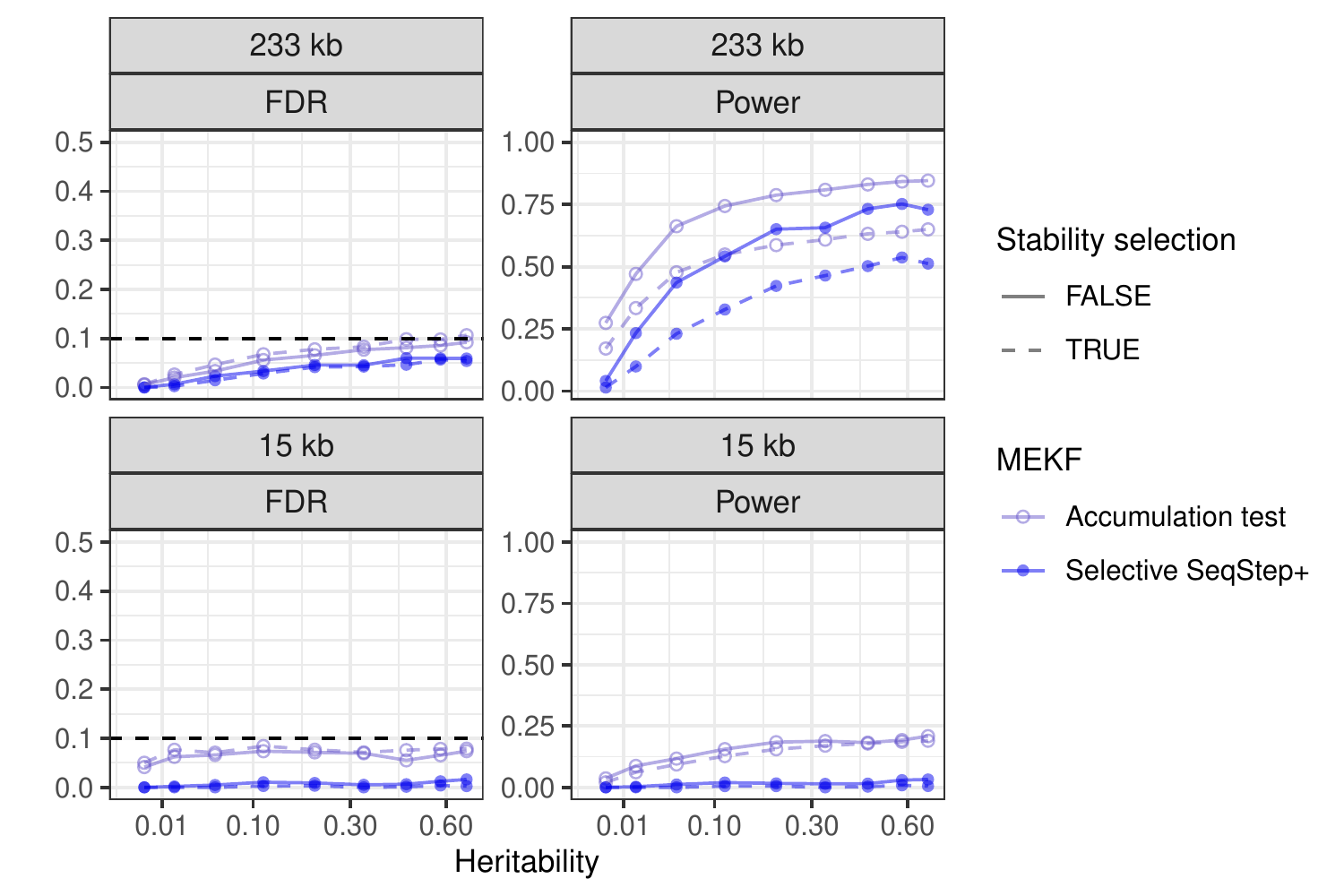}
 \caption{Performance in a simulated genome-wide association study of our methods with and without stability selection, as implemented in our analysis of the UK Biobank data of Section~\ref{sec:ukb}. Other details are as in Figure~\ref{fig:sim-missing-resolution} (a).}
  \label{fig:sim-missing-resolution-stability}
\end{figure}

\FloatBarrier

\section{Additional details about the analysis of the UK Biobank data} \label{sec:app-ukb}

\begin{table}[!htp]
\centering
\footnotesize
\caption{Definitions of the UK Biobank phenotypes used in our analysis, which match those of~\cite{sesia2020controlling}. For binary disease-status phenotypes, the number of cases refers to the subset of individuals that passed our quality control. } \label{table:pheno-def}

\begin{tabular}{c|c|>{\centering\arraybackslash}p{2.5cm}|c|>{\centering\arraybackslash}p{4cm}}
\toprule
Name & Description & Number of cases & UK Biobank Fields & UK Biobank Codes \\
\midrule
bmi & body mass index & continuous & 21001-0.0 &\\
cvd & cardiovascular disease & 148715 & 20002-0.0--20002-0.32 & 1065, 1066, 1067, 1068, 1081, 1082, 1083, 1425, 1473, 1493\\
diabetes & diabetes & 19897 & 20002-0.0--20002-0.32 & 1220 \\
height & standing height & continuous & 50-0.0 &\\
hypothyroidism & hypothyroidism & 22493 & 20002-0.0--20002-0.32 & 1226 \\
platelet & platelet count & continuous & 30080-0.0 &\\
respiratory & respiratory disease & 64945 & 20002-0.0--20002-0.32 & 1111, 1112, 1113, 1114, 1115, 1117, 1413, 1414, 1415, 1594\\
sbp & systolic blood pressure & continuous & 4080-0.0, 4080-0.1 &\\
\bottomrule
\end{tabular}
\end{table}

\begin{table}[!htb]
\footnotesize
\centering
\caption{Definitions of environments for UK Biobank individuals in terms of self-reported ancestries.} \label{tab:ukb-ancestries-env}
\begin{tabular}{ccc}
\toprule
Environment & Sample size & Self-reported ancestries \\
\midrule
African & 7,623 & ``African'', ``Caribbean'', ``Any other black background'', ``Black or Black British''\\
Asian & 3,284 & ``Asian or Asian British'', ``Chinese'', ``Any other Asian background''\\
British & 429,934 & ``British''\\
European & 28,994 & ``Any other white background'', ``Irish'', ``White''\\
Indian & 7,628 & ``Indian'', ``Pakistani'', ``Bangladeshi''\\
\bottomrule
\end{tabular}
\end{table}

\begin{table}[!htb]
\center
\small
\caption{Numbers of discoveries made with the multi-environment knockoff filter at different resolutions for several UK Biobank phenotypes. Other details are as in Table~\ref{table:ukb_discoveries}.}
\label{table:ukb_discoveries_large}

\begin{tabular}[t]{cc>{\centering\arraybackslash}m{0.7cm}>{\centering\arraybackslash}m{0.7cm}>{\centering\arraybackslash}m{0.7cm}>{\centering\arraybackslash}m{0.7cm}>{\centering\arraybackslash}m{0.7cm}}
\toprule
\multicolumn{2}{c}{ } & \multicolumn{5}{c}{Number of environments} \\
\cmidrule(l{3pt}r{3pt}){3-7}
Phenotype & Resolution (kb) & 1 & 2 & 3 & 4 & 5\\
\midrule
 & single-SNP & 0 & 0 & 0 & 0 & 0\\

 & 3 & 10 & 0 & 0 & 0 & 0\\

 & 20 & 343 & 8 & 3 & 2 & 0\\

 & 41 & 918 & 6 & 3 & 3 & 0\\

 & 81 & 1480 & 3 & 4 & 3 & 0\\

 & 208 & 2395 & 5 & 7 & 0 & 0\\

\multirow[t]{-7}{*}{\centering\arraybackslash bmi} & 425 & 2460 & 13 & 2 & 0 & 0\\
\cmidrule{1-7}
 & single-SNP & 0 & 0 & 0 & 0 & 0\\

 & 3 & 22 & 0 & 0 & 0 & 0\\

 & 20 & 239 & 8 & 0 & 0 & 0\\

 & 41 & 339 & 0 & 0 & 0 & 0\\

 & 81 & 566 & 0 & 0 & 0 & 0\\

 & 208 & 940 & 2 & 0 & 0 & 0\\

\multirow[t]{-7}{*}{\centering\arraybackslash cvd} & 425 & 1089 & 12 & 0 & 0 & 0\\
\cmidrule{1-7}
 & single-SNP & 0 & 2 & 2 & 0 & 0\\

 & 3 & 21 & 5 & 0 & 0 & 0\\

 & 20 & 61 & 6 & 0 & 2 & 0\\

 & 41 & 109 & 4 & 3 & 0 & 0\\

 & 81 & 109 & 2 & 3 & 0 & 0\\

 & 208 & 113 & 5 & 0 & 0 & 0\\

\multirow[t]{-7}{*}{\centering\arraybackslash diabetes} & 425 & 194 & 2 & 0 & 0 & 0\\
\cmidrule{1-7}
 & single-SNP & 19 & 0 & 0 & 0 & 0\\

 & 3 & 40 & 2 & 0 & 0 & 0\\

 & 20 & 105 & 5 & 0 & 0 & 0\\

 & 41 & 222 & 5 & 0 & 0 & 0\\

 & 81 & 277 & 7 & 0 & 0 & 0\\

 & 208 & 295 & 11 & 0 & 0 & 0\\

\multirow[t]{-7}{*}{\centering\arraybackslash hypothyroidism} & 425 & 335 & 10 & 0 & 0 & 0\\
\cmidrule{1-7}
 & single-SNP & 0 & 0 & 0 & 0 & 0\\

 & 3 & 0 & 0 & 0 & 0 & 0\\

 & 20 & 83 & 4 & 0 & 0 & 0\\

 & 41 & 123 & 2 & 0 & 0 & 0\\

 & 81 & 193 & 15 & 0 & 0 & 0\\

 & 208 & 262 & 0 & 0 & 0 & 0\\

\multirow[t]{-7}{*}{\centering\arraybackslash respiratory} & 425 & 383 & 5 & 0 & 0 & 0\\
\cmidrule{1-7}
 & single-SNP & 0 & 0 & 0 & 0 & 0\\

 & 3 & 83 & 0 & 0 & 0 & 0\\

 & 20 & 191 & 2 & 0 & 0 & 0\\

 & 41 & 511 & 3 & 0 & 0 & 0\\

 & 81 & 830 & 6 & 0 & 0 & 0\\

 & 208 & 1183 & 4 & 0 & 0 & 0\\

\multirow[t]{-7}{*}{\centering\arraybackslash sbp} & 425 & 1543 & 14 & 0 & 0 & 0\\
\bottomrule
\end{tabular}

\end{table}

\begin{figure}[!htb]
 \centering
 \includegraphics[width=\linewidth]{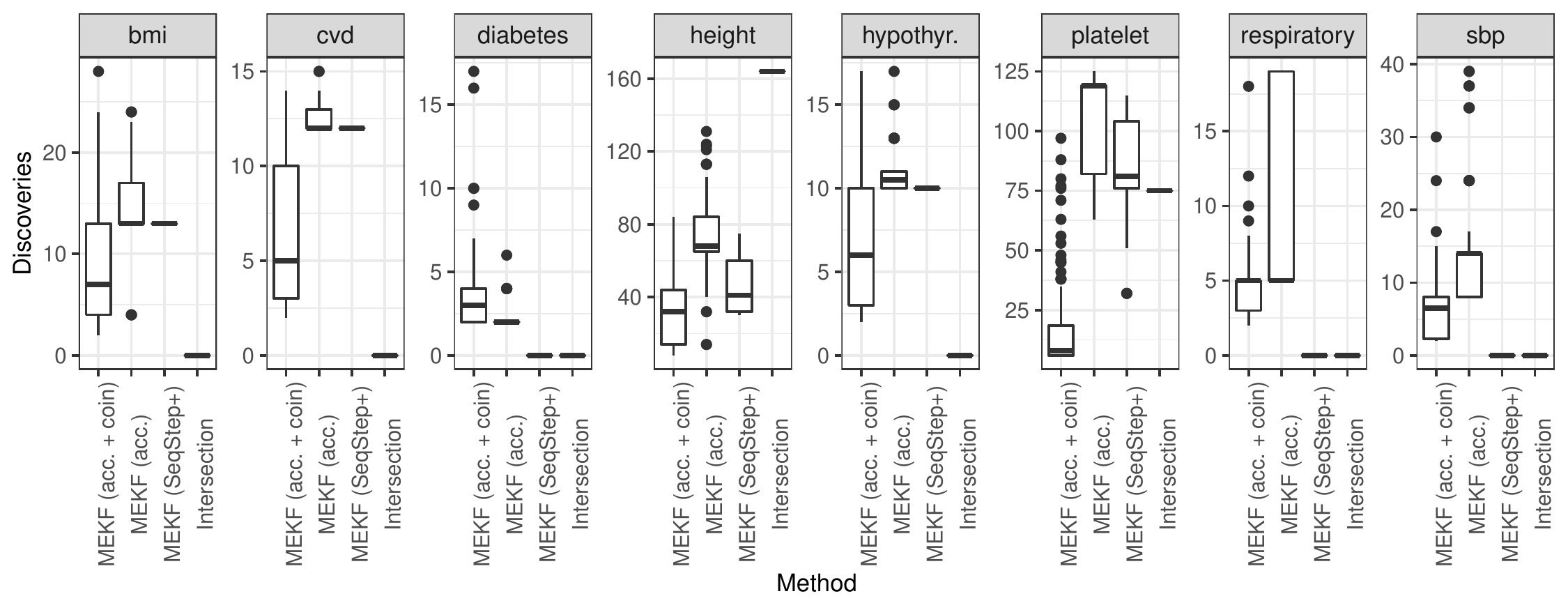}
 \caption{Numbers of low-resolution discoveries consistent across at least two populations, as obtained with different methods. For methods involving randomized p-values (multi-environment knockoff filter with the accumulation test or the selective SeqStep+), this figure shows the empirical distribution of the numbers of discoveries over 100 independent realizations of the p-values conditional on the knockoff test statistics. Other details are as in Table~\ref{table:ukb_discoveries}.}
 \label{fig:analysis_numdisc}
\end{figure}

\begin{figure}[!htb]
\centering
\includegraphics[width = 0.686\textwidth]{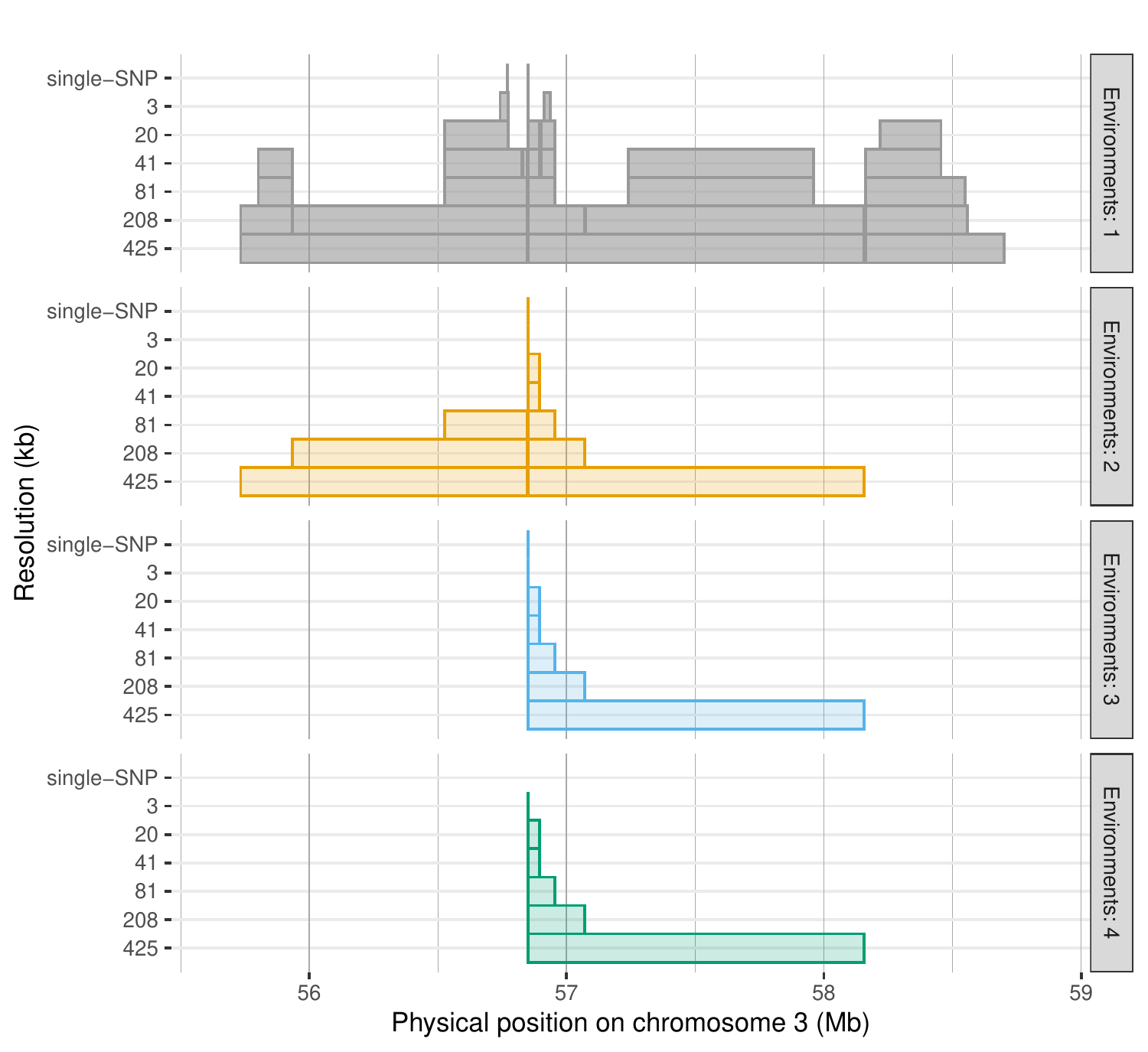}
\caption{Chicago plot of some discoveries on chromosome three for platelet count based on UK Biobank data from individuals from five environments, as in Figure~\ref{fig:chicago_biobank}. Here, the findings are shown separately based on the numbers of environments across which they are consistent.}
\label{fig:chicago_biobank_stacked}
\end{figure}

\begin{table}[!htb]
\center
\footnotesize
\caption{Numbers of consistent discoveries for several UK Biobank phenotypes obtained using three alternative methods: the multi-environment knockoff filter implemented with the accumulation test (Acc.) or the selective SeqStep+ (SStep), and the intersection heuristic (Int.). Other details are as in Table~\ref{table:ukb_discoveries}.}
\label{table:ukb_discoveries_methods}

\begin{tabular}[t]{cc>{\centering\arraybackslash}m{0.5cm}>{\centering\arraybackslash}m{0.5cm}>{\centering\arraybackslash}m{0.5cm}>{\centering\arraybackslash}m{0.5cm}>{\centering\arraybackslash}m{0.5cm}ccccccc}
\toprule
\multicolumn{2}{c}{ } & \multicolumn{12}{c}{Number of environments / Consistent testing method} \\
\cmidrule(l{3pt}r{3pt}){3-14}
\multicolumn{2}{c}{ } & \multicolumn{3}{c}{2} & \multicolumn{3}{c}{3} & \multicolumn{3}{c}{4} & \multicolumn{3}{c}{5} \\
\cmidrule(l{3pt}r{3pt}){3-5} \cmidrule(l{3pt}r{3pt}){6-8} \cmidrule(l{3pt}r{3pt}){9-11} \cmidrule(l{3pt}r{3pt}){12-14}
Phenotype & Resolution (kb) & Acc. & SStep & Int. & Acc. & SStep & Int. & Acc. & SStep & Int. & Acc. & SStep & Int.\\
\midrule
 & single-SNP & 0 & 0 & 0 & 0 & 0 & 0 & 0 & 0 & 0 & 0 & 0 & 0\\

 & 3 & 0 & 0 & 0 & 0 & 0 & 0 & 0 & 0 & 0 & 0 & 0 & 0\\

 & 20 & 8 & 0 & 0 & 3 & 0 & 0 & 2 & 0 & 0 & 0 & 0 & 0\\

 & 41 & 6 & 0 & 0 & 3 & 0 & 0 & 3 & 0 & 0 & 0 & 0 & 0\\

 & 81 & 3 & 0 & 0 & 4 & 0 & 0 & 3 & 0 & 0 & 0 & 0 & 0\\

 & 208 & 5 & 0 & 0 & 7 & 0 & 0 & 0 & 0 & 0 & 0 & 0 & 0\\

\multirow[t]{-7}{*}{\centering\arraybackslash bmi} & 425 & 13 & 0 & 0 & 2 & 0 & 0 & 0 & 0 & 0 & 0 & 0 & 0\\
\cmidrule{1-14}
 & single-SNP & 0 & 0 & 0 & 0 & 0 & 0 & 0 & 0 & 0 & 0 & 0 & 0\\

 & 3 & 0 & 0 & 0 & 0 & 0 & 0 & 0 & 0 & 0 & 0 & 0 & 0\\

 & 20 & 8 & 0 & 0 & 0 & 0 & 0 & 0 & 0 & 0 & 0 & 0 & 0\\

 & 41 & 0 & 0 & 0 & 0 & 0 & 0 & 0 & 0 & 0 & 0 & 0 & 0\\

 & 81 & 0 & 0 & 0 & 0 & 0 & 0 & 0 & 0 & 0 & 0 & 0 & 0\\

 & 208 & 2 & 0 & 0 & 0 & 0 & 0 & 0 & 0 & 0 & 0 & 0 & 0\\

\multirow[t]{-7}{*}{\centering\arraybackslash cvd} & 425 & 12 & 12 & 0 & 0 & 0 & 0 & 0 & 0 & 0 & 0 & 0 & 0\\
\cmidrule{1-14}
 & single-SNP & 2 & 0 & 0 & 2 & 0 & 0 & 0 & 0 & 0 & 0 & 0 & 0\\

 & 3 & 5 & 0 & 0 & 0 & 0 & 0 & 0 & 0 & 0 & 0 & 0 & 0\\

 & 20 & 6 & 0 & 0 & 0 & 0 & 0 & 2 & 0 & 0 & 0 & 0 & 0\\

 & 41 & 4 & 0 & 0 & 3 & 0 & 0 & 0 & 0 & 0 & 0 & 0 & 0\\

 & 81 & 2 & 0 & 0 & 3 & 0 & 0 & 0 & 0 & 0 & 0 & 0 & 0\\

 & 208 & 5 & 0 & 0 & 0 & 0 & 0 & 0 & 0 & 0 & 0 & 0 & 0\\

\multirow[t]{-7}{*}{\centering\arraybackslash diabetes} & 425 & 2 & 0 & 0 & 0 & 0 & 0 & 0 & 0 & 0 & 0 & 0 & 0\\
\cmidrule{1-14}
 & single-SNP & 13 & 0 & 0 & 2 & 0 & 0 & 0 & 0 & 0 & 0 & 0 & 0\\

 & 3 & 9 & 0 & 0 & 6 & 0 & 0 & 0 & 0 & 0 & 0 & 0 & 0\\

 & 20 & 33 & 29 & 25 & 0 & 0 & 0 & 0 & 0 & 0 & 0 & 0 & 0\\

 & 41 & 42 & 20 & 68 & 7 & 0 & 0 & 7 & 0 & 0 & 2 & 0 & 0\\

 & 81 & 48 & 49 & 84 & 24 & 23 & 0 & 0 & 0 & 0 & 0 & 0 & 0\\

 & 208 & 103 & 33 & 107 & 23 & 16 & 0 & 7 & 0 & 0 & 3 & 0 & 0\\

\multirow[t]{-7}{*}{\centering\arraybackslash height} & 425 & 68 & 31 & 164 & 26 & 15 & 0 & 3 & 0 & 0 & 0 & 0 & 0\\
\cmidrule{1-14}
 & single-SNP & 0 & 0 & 0 & 0 & 0 & 0 & 0 & 0 & 0 & 0 & 0 & 0\\

 & 3 & 2 & 0 & 0 & 0 & 0 & 0 & 0 & 0 & 0 & 0 & 0 & 0\\

 & 20 & 5 & 0 & 0 & 0 & 0 & 0 & 0 & 0 & 0 & 0 & 0 & 0\\

 & 41 & 5 & 0 & 0 & 0 & 0 & 0 & 0 & 0 & 0 & 0 & 0 & 0\\

 & 81 & 7 & 0 & 0 & 0 & 0 & 0 & 0 & 0 & 0 & 0 & 0 & 0\\

 & 208 & 11 & 11 & 0 & 0 & 0 & 0 & 0 & 0 & 0 & 0 & 0 & 0\\

\multirow[t]{-7}{*}{\centering\arraybackslash hypothyroidism} & 425 & 10 & 10 & 0 & 0 & 0 & 0 & 0 & 0 & 0 & 0 & 0 & 0\\
\cmidrule{1-14}
 & single-SNP & 9 & 0 & 0 & 3 & 0 & 0 & 0 & 0 & 0 & 0 & 0 & 0\\

 & 3 & 10 & 10 & 0 & 4 & 0 & 0 & 4 & 0 & 0 & 0 & 0 & 0\\

 & 20 & 27 & 20 & 26 & 16 & 0 & 0 & 2 & 0 & 0 & 0 & 0 & 0\\

 & 41 & 52 & 0 & 50 & 12 & 12 & 0 & 9 & 0 & 0 & 0 & 0 & 0\\

 & 81 & 104 & 69 & 69 & 15 & 15 & 3 & 8 & 0 & 0 & 0 & 0 & 0\\

 & 208 & 98 & 58 & 58 & 16 & 14 & 8 & 14 & 0 & 0 & 2 & 0 & 0\\

\multirow[t]{-7}{*}{\centering\arraybackslash platelet} & 425 & 119 & 70 & 75 & 9 & 0 & 6 & 11 & 0 & 0 & 0 & 0 & 0\\
\cmidrule{1-14}
 & single-SNP & 0 & 0 & 0 & 0 & 0 & 0 & 0 & 0 & 0 & 0 & 0 & 0\\

 & 3 & 0 & 0 & 0 & 0 & 0 & 0 & 0 & 0 & 0 & 0 & 0 & 0\\

 & 20 & 4 & 0 & 0 & 0 & 0 & 0 & 0 & 0 & 0 & 0 & 0 & 0\\

 & 41 & 2 & 0 & 0 & 0 & 0 & 0 & 0 & 0 & 0 & 0 & 0 & 0\\

 & 81 & 15 & 0 & 0 & 0 & 0 & 0 & 0 & 0 & 0 & 0 & 0 & 0\\

 & 208 & 0 & 0 & 0 & 0 & 0 & 0 & 0 & 0 & 0 & 0 & 0 & 0\\

\multirow[t]{-7}{*}{\centering\arraybackslash respiratory} & 425 & 5 & 0 & 0 & 0 & 0 & 0 & 0 & 0 & 0 & 0 & 0 & 0\\
\cmidrule{1-14}
 & single-SNP & 0 & 0 & 0 & 0 & 0 & 0 & 0 & 0 & 0 & 0 & 0 & 0\\

 & 3 & 0 & 0 & 0 & 0 & 0 & 0 & 0 & 0 & 0 & 0 & 0 & 0\\

 & 20 & 2 & 0 & 0 & 0 & 0 & 0 & 0 & 0 & 0 & 0 & 0 & 0\\

 & 41 & 3 & 0 & 0 & 0 & 0 & 0 & 0 & 0 & 0 & 0 & 0 & 0\\

 & 81 & 6 & 0 & 0 & 0 & 0 & 0 & 0 & 0 & 0 & 0 & 0 & 0\\

 & 208 & 4 & 0 & 0 & 0 & 0 & 0 & 0 & 0 & 0 & 0 & 0 & 0\\

\multirow[t]{-7}{*}{\centering\arraybackslash sbp} & 425 & 14 & 0 & 0 & 0 & 0 & 0 & 0 & 0 & 0 & 0 & 0 & 0\\
\bottomrule
\end{tabular}

\end{table}

\begin{table}[!htb]
\center
\small
\caption{Proportions of discoveries confirmed by previously known associations in the GWAS Catalog, for several UK Biobank phenotypes. Other details are as in Table~\ref{table:ukb_discoveries_confirm_proportion}.}
\label{table:ukb_discoveries_confirm_proportion_others}

\begin{tabular}[t]{cccccc}
\toprule
Phenotype & Resolution (kb) & MEKF & Pooling & Binomial p-value & Intersection\\
\midrule
 & single-SNP & 0 / 0 & 0 / 0 &  & 0 / 0\\

 & 3 & 0 / 0 & 10 / 10 (100\%) &  & 0 / 0\\

 & 20 & 8 / 8 (100\%) & 308 / 343 (90\%) & $7.22 \cdot 10^{-01}$ & 0 / 0\\

 & 41 & 6 / 6 (100\%) & 658 / 918 (72\%) & $2.79 \cdot 10^{-01}$ & 0 / 0\\

 & 81 & 3 / 3 (100\%) & 875 / 1480 (59\%) & $3.95 \cdot 10^{-01}$ & 0 / 0\\

 & 208 & 5 / 5 (100\%) & 1113 / 2395 (46\%) & $5.14 \cdot 10^{-02}$ & 0 / 0\\

\multirow[t]{-7}{*}{\centering\arraybackslash bmi} & 425 & 13 / 13 (100\%) & 1146 / 2460 (47\%) & $3.56 \cdot 10^{-04}$ & 0 / 0\\
\cmidrule{1-6}
 & single-SNP & 0 / 0 & 0 / 0 &  & 0 / 0\\

 & 3 & 0 / 0 & 69 / 83 (83\%) &  & 0 / 0\\

 & 20 & 2 / 2 (100\%) & 166 / 191 (87\%) & 1 & 0 / 0\\

 & 41 & 3 / 3 (100\%) & 358 / 511 (70\%) & $6.19 \cdot 10^{-01}$ & 0 / 0\\

 & 81 & 6 / 6 (100\%) & 518 / 830 (62\%) & $1.41 \cdot 10^{-01}$ & 0 / 0\\

 & 208 & 4 / 4 (100\%) & 645 / 1183 (55\%) & $1.87 \cdot 10^{-01}$ & 0 / 0\\

\multirow[t]{-7}{*}{\centering\arraybackslash sbp} & 425 & 14 / 14 (100\%) & 714 / 1543 (46\%) & $1.83 \cdot 10^{-04}$ & 0 / 0\\
\cmidrule{1-6}
 & single-SNP & 0 / 0 & 0 / 0 &  & 0 / 0\\

 & 3 & 0 / 0 & 21 / 22 (95\%) &  & 0 / 0\\

 & 20 & 8 / 8 (100\%) & 178 / 239 (74\%) & $2.19 \cdot 10^{-01}$ & 0 / 0\\

 & 41 & 0 / 0 & 248 / 339 (73\%) &  & 0 / 0\\

 & 81 & 0 / 0 & 365 / 566 (64\%) &  & 0 / 0\\

 & 208 & 2 / 2 (100\%) & 559 / 940 (59\%) & $6.56 \cdot 10^{-01}$ & 0 / 0\\

\multirow[t]{-7}{*}{\centering\arraybackslash cvd} & 425 & 11 / 12 (92\%) & 717 / 1089 (66\%) & $1.16 \cdot 10^{-01}$ & 0 / 0\\
\cmidrule{1-6}
 & single-SNP & 0 / 0 & 0 / 0 &  & 0 / 0\\

 & 3 & 0 / 0 & 0 / 0 &  & 0 / 0\\

 & 20 & 4 / 4 (100\%) & 74 / 83 (89\%) & 1 & 0 / 0\\

 & 41 & 2 / 2 (100\%) & 110 / 123 (89\%) & 1 & 0 / 0\\

 & 81 & 14 / 15 (93\%) & 156 / 193 (81\%) & $3.90 \cdot 10^{-01}$ & 0 / 0\\

 & 208 & 0 / 0 & 196 / 262 (75\%) &  & 0 / 0\\

\multirow[t]{-7}{*}{\centering\arraybackslash respiratory} & 425 & 5 / 5 (100\%) & 267 / 383 (70\%) & $3.28 \cdot 10^{-01}$ & 0 / 0\\
\cmidrule{1-6}
 & single-SNP & 0 / 0 & 7 / 19 (37\%) &  & 0 / 0\\

 & 3 & 2 / 2 (100\%) & 23 / 40 (57\%) & $6.48 \cdot 10^{-01}$ & 0 / 0\\

 & 20 & 5 / 5 (100\%) & 71 / 105 (68\%) & $3.00 \cdot 10^{-01}$ & 0 / 0\\

 & 41 & 5 / 5 (100\%) & 101 / 222 (45\%) & $4.97 \cdot 10^{-02}$ & 0 / 0\\

 & 81 & 7 / 7 (100\%) & 126 / 277 (45\%) & $1.35 \cdot 10^{-02}$ & 0 / 0\\

 & 208 & 8 / 11 (73\%) & 141 / 295 (48\%) & $1.88 \cdot 10^{-01}$ & 0 / 0\\

\multirow[t]{-7}{*}{\centering\arraybackslash hypothyroidism} & 425 & 8 / 10 (80\%) & 140 / 335 (42\%) & $3.74 \cdot 10^{-02}$ & 0 / 0\\
\cmidrule{1-6}
 & single-SNP & 2 / 2 (100\%) & 0 / 0 &  & 0 / 0\\

 & 3 & 3 / 5 (60\%) & 20 / 21 (95\%) & $1.51 \cdot 10^{-01}$ & 0 / 0\\

 & 20 & 5 / 6 (83\%) & 54 / 61 (89\%) & 1 & 0 / 0\\

 & 41 & 4 / 4 (100\%) & 94 / 109 (86\%) & $9.63 \cdot 10^{-01}$ & 0 / 0\\

 & 81 & 2 / 2 (100\%) & 95 / 109 (87\%) & 1 & 0 / 0\\

 & 208 & 5 / 5 (100\%) & 101 / 113 (89\%) & $9.90 \cdot 10^{-01}$ & 0 / 0\\

\multirow[t]{-7}{*}{\centering\arraybackslash diabetes} & 425 & 2 / 2 (100\%) & 157 / 194 (81\%) & 1 & 0 / 0\\
\bottomrule
\end{tabular}

\end{table}

\begin{table}[!htb]
\center
\footnotesize
\caption{Consistent discoveries at the single-nuclotide resolution for UK Biobank phenotypes, compared to associations previously reported in the GWAS Catalog. Asterisks indicate associations previously reported in the GWAS Catalog for the gene corresponding to our variant, rather than for the variant itself.}
\label{table:ukb_discoveries_confirm}

\begin{tabular}[t]{cccccccc}
\toprule
Phenotype & Chr. & Position (Mb) & SNP & Association & Env. & Consequence & Gene\\
\midrule
 & 10 & 114.758 & rs7903146 & diabetes & 3 & intron & TCF7L2\\

\multirow[t]{-2}{*}{\centering\arraybackslash diabetes} & 11 & 92.709 & rs10830963 & diabetes & 2 & intron & MTNR1B\\
\cmidrule{1-8}
 & 2 & 56.097 & rs3791679 & height & 2 & intron & EFEMP1\\

 & 3 & 141.126 & rs1344672 & $~$height$^*$ & 2 & intron & ZBTB38\\

 & 4 & 18.025 & rs2011603 & $~$height$^*$ & 3 & 2KB upstream & LCORL\\

 & 6 & 7.720 & rs12198986 & height & 2 & regulatory region & \\

 & 6 & 19.839 & rs41271299 & height & 2 & intron & ID4\\

 & 8 & 130.726 & rs6470764 & height & 2 & intron & GSDMC\\

 & 11 & 75.276 & rs606452 & height & 2 & intron & SERPINH1\\

 & 12 & 66.360 & rs8756 & height & 2 & 3 prime UTR & HMGA2\\

 & 12 & 93.979 & rs11107116 & height & 2 & intron & SOCS2\\

 & 15 & 99.195 & rs2871865 & height & 3 & intron & IGF1R\\

 & 15 & 100.693 & rs72755233 & height & 2 & missense & ADAMTS17\\

 & 19 & 55.880 & rs4252548 & height & 2 & missense & IL11\\

\multirow[t]{-13}{*}{\centering\arraybackslash height} & 20 & 34.026 & rs143384 & height & 2 & 3 prime UTR & GDF5\\
\cmidrule{1-8}
 & 3 & 56.850 & rs1354034 & platelet & 3 & intron & ARHGEF3\\

 & 5 & 75.997 & Affx-26978473 &  & 2 &  & \\

 & 6 & 135.419 & rs7775698 & platelet & 2 & intron & HBS1L\\

 & 9 & 4.763 & rs385893 & platelet & 3 & regulatory region & AL353151.2, ECM1P1\\

 & 10 & 65.028 & rs10761731 & platelet & 2 & intron & JMJD1C\\

 & 12 & 111.885 & rs3184504 & platelet & 2 & missense & SH2B3\\

 & 12 & 111.885 & rs72650673 & hematocrit & 2 & missense & SH2B3\\

 & 18 & 20.721 & rs11082304 & platelet & 2 & intron & CABLES1\\

\multirow[t]{-9}{*}{\centering\arraybackslash platelet} & 19 & 16.186 & rs8109288 & platelet & 3 & n.c.~transcript exon & AC008894.3, TPM4\\
\bottomrule
\end{tabular}


\end{table}

\FloatBarrier

\end{document}